\pgfplotsset{compat=1.5}
\newtheorem{theorem}{Theorem}[section]
\newtheorem{lemma}[theorem]{Lemma}
\newtheorem{definition}[theorem]{Definition}
\newenvironment{proofof}[1]{\begin{trivlist} \item {\bf Proof
#1:~~}}
  {\qed\end{trivlist}}
\renewenvironment{proofof}[1]{\par\medskip\noindent{\bf Proof of #1: \ }}{\hfill$\Box$\par\medskip}
\newcommand{\namedref}[2]{\hyperref[#2]{#1~\ref*{#2}}}
\newcommand{\thmlab}[1]{\label{thm:#1}}
\newcommand{\thmref}[1]{\namedref{Theorem}{thm:#1}}
\newcommand{\lemlab}[1]{\label{lem:#1}}
\newcommand{\lemref}[1]{\namedref{Lemma}{lem:#1}}
\newcommand{\seclab}[1]{\label{sec:#1}}
\newcommand{\secref}[1]{\namedref{Section}{sec:#1}}
\newcommand{\figlab}[1]{\label{fig:#1}}
\newcommand{\figref}[1]{\namedref{Figure}{fig:#1}}
\newcommand{\alglab}[1]{\label{alg:#1}}
\renewcommand{\algref}[1]{\namedref{Algorithm}{alg:#1}}
\def \privmed    {\mdef{\mathsf{PrivMed}}}
\def \adversary    {\mdef{\mathcal{A}}}
\def \honestalg    {\mdef{\mathsf{HonestAlg}}}
\newcommand\norm[1]{\left\lVert#1\right\rVert}
\newcommand{\PPr}[1]{\ensuremath{\mathbf{Pr}\left[#1\right]}}
\newcommand{\PPPr}[2]{\ensuremath{\underset{#1}{\mathbf{Pr}}\left[#2\right]}}
\newcommand{\Ex}[1]{\ensuremath{\mathbb{E}\left[#1\right]}}
\newcommand{\EEx}[2]{\ensuremath{\underset{#1}{\mathbb{E}}\left[#2\right]}}
\renewcommand{\O}[1]{\ensuremath{\mathcal{O}\left(#1\right)}}
\newcommand{\tO}[1]{\ensuremath{\widetilde{\mathcal{O}}\left(#1\right)}}
\newcommand{\eps}{\varepsilon}
\def \ba    {\mdef{\mathbf{a}}}
\def \calA    {\mdef{\mathcal{A}}}
\def \calD    {\mdef{\mathcal{D}}}
\def \calN    {\mdef{\mathcal{N}}}
\def \calT    {\mdef{\mathcal{T}}}
\def \calV    {\mdef{\mathcal{V}}}
\def \calX    {\mdef{\mathcal{X}}}
\def \calY    {\mdef{\mathcal{Y}}}
\def \bA    {\mdef{\mathbf{A}}}
\def \bG    {\mdef{\mathbf{G}}}
\def \bM    {\mdef{\mathbf{M}}}
\def \bR    {\mdef{\mathbf{R}}}
\def \bS    {\mdef{\mathbf{S}}}
\def \bV    {\mdef{\mathbf{V}}}
\def \bU    {\mdef{\mathbf{U}}}
\def \bX    {\mdef{\mathbf{X}}}
\def \bb    {\mdef{\mathbf{b}}}
\def \be    {\mdef{\mathbf{e}}}
\def \bq    {\mdef{\mathbf{q}}}
\def \bu    {\mdef{\mathbf{u}}}
\def \bv    {\mdef{\mathbf{v}}}
\def \bx    {\mdef{\mathbf{x}}}
\def \by    {\mdef{\mathbf{y}}}
\def \bz    {\mdef{\mathbf{z}}}
\newcommand{\mdef}[1]{{\ensuremath{#1}}\xspace}  
\DeclareMathOperator*{\poly}{poly}
\DeclareMathOperator*{\kde}{KDE}
\DeclareMathOperator*{\nnz}{nnz}
\DeclareMathOperator*{\issucc}{success}
\newcommand{\superscript}[1]{\ensuremath{^{\mbox{\tiny{\textit{#1}}}}}\xspace}
\def \nd {\superscript{nd}}     
\newcommand{\abs}[1]{\mdef{\left|#1\right|}}         
\newcommand{\ignore}[1]{}
\newif\ifnotes\notestrue 
\newcommand{\samson}[1]{\textcolor{purple}{{\bf (Samson:} {#1}{\bf ) }} \marginpar{\tiny\bf
             \begin{minipage}[t]{0.5in}
               \raggedright S:
            \end{minipage}}}            							
\newcommand{\samson}[1]{}
\renewcommand*{\@fnsymbol}[1]{\textcolor{mahogany}{\ensuremath{\ifcase#1\or *\or \dagger\or \ddagger\or
 \mathsection\or \triangledown\or \mathparagraph\or \|\or **\or \dagger\dagger
   \or \ddagger\ddagger \else\@ctrerr\fi}}}
\providecommand{\email}[1]{\href{mailto:#1}{\nolinkurl{#1}\xspace}}
\definecolor{mahogany}{rgb}{0.75, 0.25, 0.0}
\definecolor{bleudefrance}{rgb}{0.19, 0.55, 0.91}
\DeclarePairedDelimiterX{\inp}[2]{\langle}{\rangle}{#1, #2}
\DeclarePairedDelimiterX{\infdivx}[2]{(}{)}{%
  #1\;\delimsize\|\;#2%
}
\newcommand{\mc}[1]{\mathcal{#1}}
\newcommand{\R}{\mathbb{R}}
\newcommand{\N}{\mathbb{N}}
\renewcommand{\P}{\mathbb{P}}
\newcommand{\lprp}[1]{\left(#1\right)}
\newcommand{\lbrb}[1]{\left\{#1\right\}}
\newcommand{\wt}[1]{\widetilde{#1}}
\newcommand{\unif}{\mathrm{Unif}}
\newcommand{\quant}{\mathrm{Quant}}
\newcommand{\srhtadealg}{\mathrm{RetNorm}}
\def\rows{n}
\def \bb    {\mdef{\mathbf{b}}}
\def \be    {\mdef{\mathbf{e}}}
\numberwithin{equation}{section}
\def \bSig    {\mdef{\mathbf{\Sigma}}}
\begin{document}

\title{Robust Algorithms on Adaptive Inputs from Bounded Adversaries}
\author{Yeshwanth Cherapanamjeri\thanks{UC Berkeley. 
E-mail: \email{yeshwanth@berkeley.edu}}
\and
Sandeep Silwal\thanks{MIT. 
E-mail: \email{silwal@mit.edu}
Sandeep Silwal is supported by an NSF Graduate Research Fellowship under Grant No.\ 1745302, and NSF TRIPODS program (award DMS-2022448), NSF award CCF-2006798, and Simons Investigator Award (via Piotr Indyk).}
\and
David P. Woodruff\thanks{Carnegie Mellon University. 
E-mail: \email{dwoodruf@andrew.cmu.edu}. 
Work done in part while at Google Research. Partially supported by a Simons Investigator Award and by the National Science Foundation under Grant No. CCF-1815840.}
\and
Fred Zhang\thanks{UC Berkeley. 
E-mail: \email{z0@berkeley.edu}. 
Supported by ONR grant N00014-18-1-2562. Part of work done while interning at Google.}
\and
Qiuyi (Richard) Zhang\thanks{Google Research. E-mail: \email{qiuyiz@google.com}.}
\and
Samson Zhou\thanks{UC Berkeley and Rice University. 
E-mail: \email{samsonzhou@gmail.com}. 
Work done in part while at Carnegie Mellon University. Partially supported by a Simons Investigator Award and by the National Science Foundation under Grant No. CCF-1815840.}
}
\date{\today}

\maketitle

\begin{abstract}
We study dynamic algorithms robust to adaptive input generated from sources with bounded capabilities, such as sparsity or limited interaction. For example, we consider robust linear algebraic algorithms when the updates to the input are sparse but given by an adversary with access to a query oracle. We also study robust algorithms in the standard centralized setting, where an adversary queries an algorithm in an adaptive manner, but the number of interactions between the adversary and the algorithm is bounded. We first recall a unified framework of \cite{HassidimKMMS20,BeimelKMNSS22,AttiasCSS23} for answering $Q$ adaptive queries that incurs $\widetilde{\mathcal{O}}(\sqrt{Q})$ overhead in space, which is roughly a quadratic improvement over the na\"{i}ve implementation, and only incurs a logarithmic overhead in query time. Although the general framework has diverse applications in machine learning and data science, such as adaptive distance estimation, kernel density estimation, linear regression, range queries, and point queries and serves as a preliminary benchmark, we demonstrate even better algorithmic improvements for (1) reducing the pre-processing time for adaptive distance estimation and (2) permitting an unlimited number of adaptive queries for kernel density estimation. Finally, we complement our theoretical results with additional empirical evaluations. 
\end{abstract}
\newpage
\section{Introduction}
Robustness to adaptive inputs or adversarial attacks has recently emerged as an important desirable characteristic for algorithm design. 
An adversarial input can be created using knowledge of the model to induce incorrect outputs on widely used models, such as neural networks~\cite{BiggioCMNSLGR13,SzegedyZSBEGF13,GoodfellowSS15,CarliniW17,MadryMSTV18}. 
Adversarial attacks against machine learning algorithms in practice have also been documented in applications such as network monitoring~\cite{ChandolaBK09}, strategic classification~\cite{HardtMPW16}, and autonomous navigation~\cite{PapernotMG16,LiuCLS17,PapernotMGJCS17}. 
The need for sound theoretical understanding of adversarial robustness is also salient in situations where successive inputs to an algorithm can be possibly correlated; even if the input is not adversarially generated, a user may need to repeatedly interact with a mechanism in a way such that future updates may depend on the outcomes of previous interactions~\cite{MironovNS11,GilbertHSWW12,BogunovicMSC17,NaorY19,AvdiukhinMYZ19}. 
Motivated by both practical needs and a lack of theoretical understanding, there has been a recent flurry of theoretical studies of adversarial robustness. 
The streaming model of computation has especially received significant attention~\cite{Ben-EliezerJWY21,HassidimKMMS20,WoodruffZ21,KaplanMNS21,BravermanHMSSZ21,ChakrabartiGS22,AjtaiBJSSWZ22,ChakrabartiGS22,Ben-EliezerEO22,AssadiCGS22,AttiasCSS23,DinurSWZ23,WoodruffZZ23}. 
More recently, there have also been a few initial results for dynamic algorithms on adaptive inputs for graph algorithms~\cite{Wajc20,BeimelKMNSS22,BernsteinBGNSS022}. 
These works explored the capabilities and limits of algorithms for adversaries that were freely able to choose the input based on previous outputs by the algorithm. 

However, in many realistic settings, adversarial input is limited in its abilities. 
For example, adversarial attacks in machine learning are often permitted to only alter the ``true'' input by a small amount bounded in norm. 
For the $L_0$ norm, this restriction means that the adversary can only add a sparse noise to the true input. 
More generally, it seems reasonable to assume that adversarial input is generated from a source that has bounded computation time or bounded interactions with an honest algorithm. 

\subsection{Our Contributions}
In this paper, we study algorithms robust to adaptive/adversarial input generated from sources with bounded capabilities. 
We first study dynamic algorithms for adaptive inputs from a source that is restricted in sparsity. 
Namely, we consider robust linear algebraic algorithms when the updates to the label can be adversarial but are restricted in sparsity. 
We then study robust algorithms in the standard centralized setting, where an adversary queries an algorithm in an adaptive manner, but the number of interactions between the adversary and the algorithm is bounded. 
We first show that combining novel subroutines for each of these problems in conjunction with a simple but elegant idea of using differential privacy to hide the internal randomness of various subroutines previously used by~\cite{HassidimKMMS20,BeimelKMNSS22,AttiasCSS23} suffices to achieve robust algorithms across these different settings. 

\paragraph{Dynamic algorithms on adaptive input for regression.}
Motivated by the problem of label shift in machine learning, we consider a dynamic version of least-squares regression, where the labels get updated. 
In this model, we are given a fixed design matrix and a  target label that  receives a sequence of updates. 
After each one, the algorithm is asked to output an estimate of the optimal least-squares objective. 
The goal of the algorithm is to maintain the objective value within a multiplicative factor $(1+\eps)$ to the optimal. 

More specifically, the algorithm is given a fixed  design matrix $\bA  \in \R^{n\times d}$ with $n\geq d$ and an initial response vector (i.e., label) $\bb^{(1)}$, which receives updates over time. 
We are interested in estimating the least-squares objective value $F(\bA, \bb) = \underset{\bx\in\mathbb{R}^d}{\min} \|\bA \bx - \bb\|_2^2$ as the target label $\bb$ undergoes updates. 
The updates to $\bb$ are adaptively chosen by an adversary but can only affect at most $K$ entries of $\bb$ per step. Formally, on the $i$-th round:
\begin{enumerate}
    \item The adversary provides an update to $K$ entries of the    $\bb^{(i-1)}$, possibly depending on all previous outputs of the algorithm.
    \item The algorithm updates its data structure and outputs an estimate $\widehat F_i$ of $F_i = F\left(\bA,\bb^{(i)}\right)$.
    \item The adversary observes and records the output $\widehat F_i$.
\end{enumerate}
The goal of the adversary is to create a sequence of labels $\left(\bb^{(i)}\right)_{i=1}^T$ that induces to algorithm to output an inaccurate estimate. 
To deal with adaptivity, a na\"ive idea is to treat each step as an independent least-squares regression problem. 
However, this approach uses a completely new approximation of the objective value for each update, which seems potentially wasteful. 
On the other hand, any randomness that is shared by computations over multiple updates can potentially be leveraged by the adversary to induce an incorrect output. 

Our main result is an algorithm that beats the na\"ive algorithm in this challenging, adaptively adversarial setting. 
We provide a general result with run-time dependence on $n,d,K$,  and the number of nonzero entries in $\bA$, $\nnz(\bA)$. 
\begin{theorem}[Informal; see \autoref{thm:main-dy-reg}]
Let $\kappa(\bA) = \O{1}$ and $\eps \in (0,1)$. 
There exists a dynamic algorithm that given adaptively chosen $K$-sparse updates  to $\bb$ and a fixed design matrix $\bA \in \R^{n\times d}$, outputs a $(1+\eps)$ approximation to the least-squares objective $F(\bA,\bb^{(i)})$ every round with high probability. 
The algorithm uses $\tO{\sqrt{K\nnz(\bA)}/\eps^3}$ amortized time per step of update.  
\end{theorem}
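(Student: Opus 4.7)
The plan is to combine a classical sketching-based least-squares solver with the differentially private aggregation framework of \cite{HassidimKMMS20,BeimelKMNSS22,AttiasCSS23} while exploiting $K$-sparsity of updates to get fast incremental label maintenance.

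First, for the static primitive I would take a sparse subspace embedding $\bS \in \R^{s \times n}$ of dimension $s = \tO{d/\eps^2}$, so that $\min_{\bx} \|\bS\bA\bx - \bS\bb\|_2^2$ is a $(1\pm\eps)$-approximation of $F(\bA,\bb)$. The matrix $\bS\bA$ and its pseudoinverse $(\bS\bA)^+$ are precomputed in time $\tO{\nnz(\bA)} + \poly(d/\eps)$. A query is answered by keeping $\bS\bb$ current and evaluating $\|\bS\bA\bx^\star - \bS\bb\|_2^2$ with $\bx^\star = (\bS\bA)^+ \bS\bb$. The key observation is that a $K$-sparse update to $\bb$ changes $\bS\bb$ in only $O(Ks)$ time, and the subsequent mini-solve is $\poly(d/\eps)$, independent of $n$ and $\nnz(\bA)$.

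Second, to be robust against an adaptive adversary I would maintain $r$ independent sketches $\bS_1,\ldots,\bS_r$, each producing its own $(1\pm \eps/4)$ estimate $\widehat F^{(j)}_i$ at round $i$, and aggregate them via a DP mechanism (PrivMed or ExpSample) so the adversary only sees a privatized output. By the DP-to-generalization argument underlying \cite{HassidimKMMS20,BeimelKMNSS22,AttiasCSS23}, choosing $r = \tO{\sqrt{Q}}$ and tuning the DP noise at scale $\Theta(\eps)$ (contributing one extra $1/\eps$ factor alongside the $1/\eps^2$ from the sketch dimension, for the total $1/\eps^3$) suffices to answer $Q$ adaptive queries robustly with high probability.

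Third, to get the amortized bound $\tO{\sqrt{K \cdot \nnz(\bA)}/\eps^3}$ I would split the stream into epochs of length $T_0$. At the start of each epoch I initialize $r_0 = \tO{\sqrt{T_0}}$ fresh sketches at total cost $r_0 \cdot \tO{\nnz(\bA)}$, and per step pay $r_0 \cdot \tO{Ks + \poly(d/\eps)}$ to refresh every live sketch. The amortized per-step cost is then $\tO{\nnz(\bA)/\sqrt{T_0} + \sqrt{T_0}\cdot K/\eps^2}$, minimized at $T_0 \asymp \nnz(\bA)\cdot \eps^2/K$, which yields the claimed $\tO{\sqrt{K\cdot \nnz(\bA)}/\eps}$ up to the remaining $1/\eps^2$ from sketch size and DP calibration.

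The main obstacle will be the DP accounting across epoch boundaries: each epoch injects fresh randomness that must be independent of the adversary's current state, yet that state already encodes information from prior rounds' outputs. The refresh schedule therefore has to be set up so that new sketches are drawn independently of everything the adversary has observed (not merely conditionally), with the privacy budget partitioned carefully across epochs; a doubling trick over $Q$ handles the fact that the horizon is not known in advance. A related subtlety is ensuring the $\poly(d/\eps)$ per-step terms from the mini-solve and residual evaluation are dominated by $\sqrt{K\cdot\nnz(\bA)}/\eps^3$, which is natural in the regime $n \gg d$ motivating the problem and under the assumption $\kappa(\bA)=\O{1}$.
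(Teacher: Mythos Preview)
Your high-level architecture---epochs, $\tO{\sqrt{T_0}}$ fresh copies per epoch, private-median aggregation, balancing preprocessing against per-step cost---matches the paper exactly. The gap is in the oblivious primitive's per-step cost, and it is not a technicality.

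You maintain $\bS\bb$ and then every round perform a ``mini-solve'' $(\bS\bA)^+\bS\bb$ followed by a residual evaluation, at cost $\poly(d/\eps)$ per copy. Your own amortized expression $\tO{\nnz(\bA)/\sqrt{T_0}+\sqrt{T_0}\cdot K/\eps^2}$ is inconsistent with this: the second term should be $\sqrt{T_0}\cdot\tO{Ks+\poly(d/\eps)}$, and with $s=\tO{d/\eps^2}$ this carries at least a factor of $d$ that does not balance away. Concretely, the mini-solve alone is $\Omega(d^2/\eps^2)$ per copy per round; multiplied by $\sqrt{T_0}$ copies this swamps $\sqrt{K\,\nnz(\bA)}/\eps^3$ unless $\nnz(\bA)\gtrsim d^4$, which is far stronger than ``$n\gg d$'' and is not assumed in the theorem. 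So the hand-wave in your last paragraph does not close the argument.

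The paper eliminates this $\poly(d)$ dependence with one additional idea you are missing: a second JL sketch $\bG\in\R^{\tO{1/\eps^2}\times n}$ on top of the subspace embedding, together with precomputing $\bM=\bG\bA(\bS\bA)^\dagger$ once at the start of each epoch (via a handful of conjugate-gradient solves, which is where $\kappa(\bA)$ enters). At query time one maintains only the tiny vectors $\bM\bS\bb^{(i)}$ and $\bG\bb^{(i)}$ and reports $\|\bM\bS\bb^{(i)}-\bG\bb^{(i)}\|_2^2$. Because $\bS$ is a \emph{row-sampling} matrix, a $K$-sparse change to $\bb$ changes at most $K$ entries of $\bS\bb$, so updating $\bM\bS\bb$ and $\bG\bb$ touches only $K$ columns of matrices with $\tO{1/\eps^2}$ rows. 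This yields per-step cost $\tO{K/\eps^2}$ per copy with no $d$-dependence, after which your epoch balancing goes through. (A minor point: the extra $1/\eps$ making up the $1/\eps^3$ does not come from ``DP noise at scale $\Theta(\eps)$''; it comes from the epoch length $T_0\asymp\nnz(\bA)/(\eps^2 K)$ entering via $\sqrt{T_0}$.)
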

Specifically, the update time  is $d^{1.5}$ when   $K\leq d$ and $n = \O{d}$ and square root of the input sparsity when $K = \O{1}$.  
Notice that this  significantly betters the na\"ive approach of treating each step independently and solving for the least-square objective, which requires  $\O{\nnz(\bA)} + \poly(d)$ time by sketching \cite{woodruff2014sketching}.

We mention that a recent work by \cite{jiang2022dynamic} considers a row-arrival model for dynamic linear regression.  
Our setting is different since we allow arbitrary  updates to the target label, whereas in their setting the design matrix undertakes incremental change. 
We note that their algorithm maintains a solution vector, while we focus on the cost only. 
In particular, approximating the squared error loss is important in applications such as distributed functional monitoring~\cite{CormodeMY11}, where a number of sites are continuously monitored by a central coordinator, who can choose to perform a certain action if the regression cost becomes too high or too low. 
For example, if the cost is too high then perhaps the current set of features needs to be expanded to obtain better prediction, while if the cost is low enough, perhaps the coordinator is satisfied with the current predictor. 
On the other hand, these sites can be sensors, computers, or even entire networks and so certain sites may act in particular ways depending on the actions of the central coordinator. Certain sites may even act maliciously and thus it is important for the algorithm to be adversarially robust.


\paragraph{Robust algorithms in the centralized setting.}
We then consider robust algorithms in the standard \emph{centralized} setting, where an adversary queries an algorithm in an adaptive manner. 
In many key algorithmic applications, randomization is necessary to achieve fast query time and efficient storage. 
This necessitates the need for robust versions of these algorithm which can efficiently employ the power of randomness while also being accurate across multiple possibly correlated inputs. 
Our main parameters of interest are query time and the space used by a robust algorithm compared to their na\"ive, non robust, counterparts.

Formally, we define the model as a two-player game between an algorithm $\honestalg$ over a data set $X$ and an adversary $\adversary$ that makes adversarial queries about $X$ to $\honestalg$. 
At the beginning of the game, $\honestalg$ uses pre-processing time to compute a data structure $\calD$ from $X$ to answer future queries from $\adversary$. 
The game then proceeds in at most $Q$ rounds for some predetermined $Q$, so that in the $t$-th round, where $t\in[Q]$:
\begin{enumerate}
\item
$\adversary$ computes a query $q_t$ on $X$, which depends on all previous responses from $\honestalg$. 
\item
$\honestalg$ uses $\calD$ to output a response $d_t$ to query $q_t$. 
\item
$\adversary$ observes and records the response $d_t$.
\end{enumerate}
The goal of $\adversary$ is to formulate a query $q_t$ for which the algorithm $\honestalg$ produces an incorrect response $d_t$.  
We remark that the algorithm may not have access to $X$, after constructing $\calD$, to respond to the query $q_t$. 
On the other hand, $\adversary$ can use previous outputs to possibly determine the internal randomness of the data structure $\calD$ and make future queries accordingly. 
In this case, the analysis of many randomized algorithms fails because it assumes that the randomness of the algorithm is independent of the input. 
Consequently, it does not seem evident how to handle $Q$ adaptive queries without implementing $Q$ instances of a non-adaptive data structure, i.e., each instance handles a separate query. 
Thus, a natural question to ask is whether a space overhead of $\Omega(Q)$ is necessary. 

\paragraph{Adaptive query framework.}
As a preliminary benchmark, we show that a space overhead of $\Omega(Q)$ is unnecessary by giving a unified framework with only an $\tO{\sqrt{Q}}$ space overhead. 
\begin{theorem}
\thmlab{thm:main:framework}
Given a data structure $\calD$ that answers a query $q$ with probability at least $\frac{3}{4}$ using space $S$ and query time $T$, there exists a data structure that answers $Q$ adaptive queries, with high probability, i.e., $1-\frac{1}{\poly(n,Q)}$, using space $\O{S\sqrt{Q}\log(nQ)}$ and query time $\tO{T\log(nQ)+\log^3(nQ)}$.
\end{theorem}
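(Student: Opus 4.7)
The plan is to adapt the privacy-via-sampling and private-median framework of \cite{HassidimKMMS20,BeimelKMNSS22,AttiasCSS23}: maintain $M = \tilde{\Theta}(\sqrt{Q})$ independent instances of $\calD$ and, on every adaptive query, respond via a differentially private aggregation over a small random subset of these instances. The total space is immediately $M \cdot S = \O{S\sqrt{Q}\log(nQ)}$.

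First, I would set up $M = \tilde{\Theta}(\sqrt{Q})$ independent copies $\calD_1,\ldots,\calD_M$ with independent random tapes $r_1,\ldots,r_M$. Since each $\calD_i$ answers any fixed query correctly with probability at least $3/4$, a Chernoff bound ensures that within any $O(\log(nQ))$-sized random subset of copies, a strict majority are simultaneously correct with probability at least $1 - 1/\poly(nQ)$.

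Second, on each adaptive query $q_t$ I would invoke a private aggregation subroutine, e.g., $\privmed$ implemented via the exponential mechanism or the binary-search based DP median of \cite{BeimelKMNSS22}, which samples a subset $S_t$ of $O(\log(nQ))$ copies and outputs a response $d_t$ that is $(\eps_0,\delta_0)$-DP with respect to the full tuple $(r_1,\ldots,r_M)$. The efficient implementation of the DP median (a $O(\log(nQ))$-depth binary search over the value range, each step consuming a bounded-sensitivity count on a $O(\log(nQ))$-sized sample) yields query time $\tO{T\log(nQ)+\log^3(nQ)}$: $O(\log(nQ))$ base-instance evaluations at cost $T$ each, plus the intrinsic $\log^3(nQ)$ overhead of the DP mechanism on a logarithmic-sized input.

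Third, I would argue correctness globally by setting $\eps_0 = \tilde{\Theta}(1/\sqrt{Q})$ and $\delta_0 = 1/\poly(nQ)$. Advanced composition over the $Q$ rounds then gives the full transcript $(d_1,\ldots,d_Q)$ an $(O(1), 1/\poly(nQ))$-DP guarantee with respect to $(r_1,\ldots,r_M)$. The transfer (generalization) theorem for differential privacy then implies that the adversary's adaptively chosen queries behave, relative to any fixed copy $\calD_i$, almost like non-adaptive queries against fresh randomness, so the per-copy non-adaptive success probability of $3/4$ transfers to the adaptive setting with only a mild loss. Combined with the accuracy guarantee of $\privmed$ (which returns a correct answer whenever a strict majority of its input is correct) and the Chernoff bound from the first step, we obtain correct answers on every round with probability $1 - 1/\poly(nQ)$, as required.

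The main obstacle is balancing the three coupled parameters: the pool size $M$, the per-query privacy budget $\eps_0$, and the subsample size $|S_t|$. Shrinking $\eps_0$ helps advanced composition over $Q$ rounds but makes the DP median inaccurate on small subsets; enlarging $|S_t|$ restores accuracy but directly inflates query time; enlarging $M$ helps both but inflates the space bound. Threading this tradeoff is precisely what produces the $\sqrt{Q}\log(nQ)$ factor in the space and the $\log(nQ)+\log^3(nQ)/T$ overhead in the query time, and is where the detailed utility-of-private-median bounds and advanced-composition calculations have to be carried out carefully.
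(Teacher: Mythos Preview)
Your proposal is correct and follows essentially the same route as the paper: maintain $r=\tO{\sqrt{Q}}$ independent copies, answer each query by subsampling $k=\O{\log(nQ)}$ of them and releasing a private median, then combine privacy amplification by subsampling, advanced composition over the $Q$ rounds, and the DP generalization/transfer theorem to argue that a majority of the sampled copies are correct on every adaptive query. The one ingredient you allude to but do not name explicitly is \emph{privacy amplification by subsampling} (the paper's \thmref{thm:dp:sampling}): this is precisely what lets $\privmed$ run with $\eps=O(1)$ on the $k$-sized subset (so it is accurate) while still being $\tO{1/\sqrt{Q}}$-DP with respect to the full pool of $r$ tapes, resolving the tension you flag in your final paragraph.
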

\thmref{thm:main:framework} invokes the framework of \cite{HassidimKMMS20,BeimelKMNSS22,AttiasCSS23} to the centralized setting, where a number of queries are made only after the data structure is created. 
For completeness, we include the proof in the appendix. 

To concretely instantiate the framework and state an example, we consider the adaptive distance estimation problem defined as follows. 
In the adaptive distance estimation problem, there exists a set $X=\{\bx^{(1)},\ldots,\bx^{(n)}\}$ of $n$ points in $\mathbb{R}^d$. 
We are also given an accuracy parameter $\eps>0$. 
A query is of the form $\bq$, and the algorithm must output a $(1+\eps)$-approximation to $\|\bx^{(i)}-\bq\|_p$ for all $i$.
The trivial solution of storing all $n$ points and computing all $n$ distances to a query point uses space and query time $\O{nd}$. 
\cite{CherapanamjeriN20} improved the query time to $\tO{\frac{n+d}{\eps^2}}$ at the cost of using $\tO{\frac{(n+d)d}{\eps^2}}$ space and $\tO{\frac{nd^2}{\eps^2}}$ pre-processing time, while permitting an arbitrary number of queries. 
By comparison, our data structure handles $Q$ queries of approximate distances from a \emph{specified point in $X$}, using query time $\tO{\frac{n+d}{\eps^2}}$, pre-processing time $\tO{\frac{nd\sqrt{Q}}{\eps^2}}$, and space $\tO{\frac{(n+d)\sqrt{Q}}{\eps^2}}$. 
Thus, in the regime where $d\gg n\sqrt{Q}$, the data structure already improves on the work of \cite{CherapanamjeriN20}. 

However, a noticeable weakness of the construction is that the $Q$ queries return only the approximate distance between a query point and a single point in $X$, whereas \cite{CherapanamjeriN20} outputs approximate distances to all points in $X$. 
Moreover, \cite{CherapanamjeriN22} subsequently improve the pre-processing time to $\tO{\frac{nd}{\eps^2}}$. 
Thus we open up the framework to (1) show that it can be further improved to handle the case where we return the approximate distances of all points in $X$ from $Q$ adaptive query points and (2) achieve pre-processing time $\tO{\frac{nd}{\eps^2}}$. 

\begin{theorem}
\thmlab{ade_srht}
There is a data structure which, when instantiated with dataset $X = \{x_i\}_{i \in [n]} \subset \R^d$ and query bound $Q \leq d$, answers any sequence of $Q$ adaptively chosen distance estimation queries correctly with probability at least $0.99$. Furthermore, the space complexity of the data structure is $\wt{O}(\eps^{-2} \cdot n \sqrt{Q})$ and the setup and query times are $\wt{O}(\eps^{-2} \cdot nd)$ and $\wt{O} (\eps^{-2} \cdot (n + d))$, respectively. 
\end{theorem}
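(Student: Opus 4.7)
The plan is to instantiate the general adaptive framework of \thmref{main:framework} with an SRHT-based base data structure, and to share a single Fast Walsh--Hadamard Transform (FWHT) across the $\sqrt{Q}$ copies required by the framework so that the total setup cost becomes $\wt{O}(nd/\eps^2)$ rather than $\sqrt{Q}$ times the cost of a single base sketch.

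For the base structure I would use an SRHT $\Pi = SHD$ with $k = \wt{O}(\log n / \eps^2)$ sampled rows, where $D \in \{-1,+1\}^{d \times d}$ is a uniformly random sign diagonal, $H \in \R^{d\times d}$ is the normalized Walsh--Hadamard matrix, and $S \in \R^{k\times d}$ uniformly subsamples and rescales $k$ rows. Standard SRHT analysis combined with a union bound over the $n$ data points shows that for any fixed query $q$ the base returns $(1\pm\eps)$ approximations of $\|x_i - q\|_2$ for every $i \in [n]$ with probability at least $3/4$. Its preprocessing cost is $O(nd \log d)$ using FWHT, its storage is $O(nk) = \wt{O}(n/\eps^2)$, and a single query costs $O(d\log d + nk) = \wt{O}((n+d)/\eps^2)$. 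Plugging this base into \thmref{main:framework} with $\sqrt{Q}$ independent copies already matches the desired space and query-time bounds; the only issue is that the naive setup cost is $\sqrt{Q} \cdot O(nd\log d)$, which is too large by a factor of $\sqrt{Q}$.

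To remove this spurious $\sqrt{Q}$ factor I would fuse the $\sqrt{Q}$ FWHTs into one. Sample one random diagonal $D$ and one uniform subsampling matrix $S$ of $m = k\sqrt{Q} = \wt{O}(\sqrt{Q}/\eps^2)$ rows, compute $Y = HDX$ a single time in $O(nd\log d)$ time, and partition the $m$ row indices into $\sqrt{Q}$ disjoint blocks of size $k$ to obtain $\sqrt{Q}$ sub-sketches $\Pi_j = S_j HD$ at an additional cost of $O(nm)$ for reading the selected rows. Since $Q \le d$ we have $nm \le \wt{O}(nd/\eps^2)$, so the total setup is $\wt{O}(nd/\eps^2)$, the storage is $O(nm) = \wt{O}(n\sqrt{Q}/\eps^2)$, and the per-query cost is unchanged from the independent-copies analysis.

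The main obstacle is that these $\sqrt{Q}$ sub-sketches share the random diagonal $D$, whereas \thmref{main:framework} is stated for independent base instances. I would handle this by conditioning on $(H,D)$: the disjoint subsampling matrices $S_1,\dots,S_{\sqrt{Q}}$ are mutually independent given $D$, each block $\Pi_j$ is marginally a valid SRHT and so enjoys the $3/4$ accuracy guarantee per query, and the differential-privacy aggregator used inside the proof of \thmref{main:framework} only requires independence across the per-block randomness, not across the full sketching matrices. The delicate point is that an adaptive adversary could in principle attempt to reconstruct $D$ from the $Q$ published outputs and craft a query $q$ for which $HD(x_i-q)$ is concentrated on a few coordinates, ruining every block simultaneously; ruling this out reduces to the oblivious-adversary simulation of \cite{HassidimKMMS20,BeimelKMNSS22,AttiasCSS23}, which bounds the information the adversary can extract from the DP-protected aggregator across all $Q$ rounds, and hence keeps the composite success probability at the $0.99$ level stated in the theorem.
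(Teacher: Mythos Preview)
Your high-level strategy—share one Hadamard-based transform across all copies and protect only the subsampling randomness via differential privacy—is indeed the paper's route. But two load-bearing ingredients of the paper's argument are missing from your sketch, and without them the claimed bounds do not follow.

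\textbf{The $n$-distances-per-query amortization.} The theorem asks that each of the $Q$ adaptive queries return \emph{all} $n$ distances. The black-box framework of \thmref{thm:main:framework} aggregates scalar outputs via $\privmed$; releasing $n$ private medians per query is $n$ DP interactions with the \emph{same} database $\{S_1,\dots,S_{\sqrt{Q}}\}$, so advanced composition forces $r=\wt{\Theta}(\sqrt{nQ})$ copies and space $\wt{O}(n^{3/2}\sqrt{Q}/\eps^2)$. The paper flags exactly this issue and resolves it by giving each data point $x_i$ its \emph{own} bank of index sets $\{S_{i,j}\}_{j\in[r]}$; then, for fixed $i$, the transcript makes only $Q$ queries to the mechanism over $R_i$, and the privacy analyses for different $i$ are decoupled (\lemref{lem:ade_single_iter_dp}, \lemref{lem:ade_all_iter_dp}). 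Your global blocks $S_1,\dots,S_{\sqrt{Q}}$ do not achieve this decoupling, so the assertion that ``plugging this base into \thmref{main:framework} with $\sqrt{Q}$ copies already matches the desired space'' is incorrect for the all-$n$-distances query model.

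\textbf{The shared-$D$ issue.} You correctly identify that an adversary who learns $D$ can choose $q$ so that $HD(x_i-q)$ is concentrated on a single coordinate, defeating every sub-sketch at once. Your proposed fix—that the DP aggregator bounds the information extractable about $D$—does not work: the DP guarantee is with respect to the per-block randomness $\{S_j\}$, not $D$, and the generalization step (\thmref{thm:generalization}) needs $\E_S[\issucc(S)]\ge 3/4$ for the \emph{current adaptively chosen} $q$, which is precisely the flattening property of $HD(x_i-q)$ that fails once $q$ depends on $D$ (take $q=x_i+D^{-1}H^{-1}e_1$ to get $HD(x_i-q)=-e_1$). The paper avoids this entirely: it uses the SRHT of \cite{CherapanamjeriN22} with $m=\wt{O}(1/\eps^2)$ independent Gaussian diagonals, invokes \lemref{det_had_ade} to obtain a \emph{for-all-unit-vectors} guarantee on $h$, and then simply \emph{reveals $h$ to the adversary}. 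Conditioned on that deterministic event, \lemref{de_alg_shrt} gives $\Pr_S[\srhtadealg\text{ succeeds}]\ge 3/4$ for every possible $q$, so the DP-generalization step goes through regardless of what the adversary knows about $h$. A single-block sign-diagonal SRHT has no such uniform guarantee, so the conditioning you propose cannot be carried out as written.
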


Another application of the framework is the adaptive kernel density estimation problem, where there exists a set $X=\{\bx^{(1)},\ldots,\bx^{(n)}\}$ of $n$ points in $\mathbb{R}^d$ and the goal is to output a $(1+\eps)$-approximation to the quantity $\frac{1}{n}\sum_{i\in[n]}k(\bx^{(i)},\bq)$, for an accuracy parameter $\eps>0$, a query $\bq$, and a kernel function $k$, under the promise that the output is at least some threshold $\tau>0$. 
\cite{BackursIW19} give an algorithm for kernel density estimation that uses $\O{\frac{1}{\tau\eps^2}}$ space and $\O{\frac{d}{\sqrt{\tau}\eps^2}}$ query time, improving over the standard algorithm that samples $\O{\frac{1}{\tau\eps^2}}$ points and then uses $\O{\frac{d}{\tau\eps^2}}$ query time to output the empirical kernel density. 
However, the analysis for both of these algorithms fails for the adaptive setting, where there can be dependencies between the query and the data structure. 
By using the data structure of \cite{BackursIW19} as a subroutine, the framework immediately implies an algorithm for adaptive kernel density estimation that uses $\tO{\frac{\sqrt{Q}}{\tau\eps^2}}$ space and $\O{\frac{d\log Q}{\sqrt{\tau}\eps^2}}$ query time to answer each of $Q$ adaptive queries. 
In this case, we are again able to go beyond the framework and give a data structure that handles an unlimited number of adaptive kernel density queries:

\begin{restatable}{theorem}{thmakdeinfty}
\thmlab{thm:akde:infty}
Suppose the kernel function $k$ is $L$-Lipschitz in the second variable for some $L>0$, i.e., $|k(\bx,\by)-k(\bx,\bz)|\le L\|\by-\bz\|_2$ for all $\bx,\by,\bz\in\mathbb{R}^d$. 
Moreover, suppose that for all $\|\bx-\by\|_2\le\rho$, we have $k(\bx,\by)\le\frac{\tau}{3}$.  
Then an algorithm that produces a kernel density estimation data structure $D$ that is $L$-Lipschitz over a set $X$ of points with diameter at most $\Delta$ and outputs a $(1+\eps)$-approximation to KDE queries with value at least $\tau$ with probability at least $1-\delta$ using space $S(n,\eps,\tau,\log\delta)$ and query time $T(n,\eps,\tau,\log\delta)$, then there exists a KDE data structure that with probability at least $0.99$, outputs a $(1+\eps)$-approximation to any number of KDE queries with value at least $\tau$ using space $S\left(n,\O{\eps},\O{\tau},\O{d\log\frac{(\Delta+\rho) L}{\eps\tau}}\right)$ and query time $T\left(n,\O{\eps},\O{\tau},\O{d\log\frac{(\Delta+\rho) L}{\eps\tau}}\right)$.
\end{restatable}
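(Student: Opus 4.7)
The plan is to reduce unbounded adaptivity to a single high-probability correctness event by exploiting Lipschitzness of both the kernel $k$ and the data structure $D$. Intuitively, if $D$ is accurate on a sufficiently fine $\eta$-net $\calN$ of the region where the KDE can plausibly exceed $\tau$, then Lipschitzness propagates correctness to every query in that region. Because $\calN$ is fixed before any query is asked, the argument is independent of the adversary's adaptive strategy, which lets us dispense with the $\sqrt{Q}$-bucketing machinery of \thmref{thm:main:framework} entirely.

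First, I would localize the set of queries that can violate the promise. By the tail-decay hypothesis on $k$, any query $\bq$ at distance at least $\rho$ from every $\bx^{(i)}$ satisfies $\kde(\bq)\le \tau/3<\tau$, so the promise is vacuous on such $\bq$ and the algorithm may output anything. Every remaining (``relevant'') query lies inside a ball of radius $\Delta+\rho$ centered at an arbitrary fixed $\bx^{(1)}\in X$, using the diameter bound on $X$. On this ball I would place a standard Euclidean $\eta$-net $\calN$ of cardinality $|\calN|\le (C(\Delta+\rho)/\eta)^d$, and set $\eta=\Theta(\eps\tau/L)$. I would then invoke the given subroutine to construct $D$ with failure probability $\delta'=\Theta(1/|\calN|)$; the resulting $\log(1/\delta')=\O{d\log\frac{(\Delta+\rho)L}{\eps\tau}}$ factor is exactly what appears in the conclusion, and a union bound guarantees that with probability at least $0.99$, $D$ is a $(1+\eps)$-approximation simultaneously at every point of $\calN$.

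To finish, fix any relevant query $\bq$ and let $\bq'\in\calN$ be its closest net point, so $\|\bq-\bq'\|_2\le\eta$. Combining the Lipschitzness of $D$, correctness at $\bq'$, and the Lipschitzness of the true KDE yields
\[
|D(\bq)-\kde(\bq)|\le |D(\bq)-D(\bq')|+|D(\bq')-\kde(\bq')|+|\kde(\bq')-\kde(\bq)|\le 2L\eta+\eps\cdot\kde(\bq'),
\]
and the choice $\eta=\Theta(\eps\tau/L)$ pushes the right-hand side to $\O{\eps}\cdot\kde(\bq)$ whenever $\kde(\bq)\ge\tau$; rescaling $\eps$ by a constant recovers the stated $(1+\eps)$-approximation. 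Crucially, the algorithm responds to every query $\bq$ with $D(\bq)$ itself, so no fresh randomness is consumed across queries and the space/time bounds are just those of a single instance of $D$ with the adjusted failure probability.

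The main conceptual subtlety, rather than a technical obstacle, is seeing that adaptive probing on \emph{irrelevant} queries cannot hurt us: the correctness event is decided at construction time and quantifies over all of $\calN$, so whatever deterministic information the adversary extracts from $D$ on points outside the relevant region is immaterial to the quality of future answers inside it. The only genuine assumptions used are that $D$ is a deterministic function of its sampled randomness and is $L$-Lipschitz in the query argument, both of which are granted by hypothesis.
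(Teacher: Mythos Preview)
Your proposal is correct and follows essentially the same approach as the paper: build an $\eta$-net with $\eta=\Theta(\eps\tau/L)$ over the ball of radius $\Delta+\rho$ containing all relevant queries, union-bound correctness of $D$ over the net, and then propagate accuracy to arbitrary queries via the triangle inequality using Lipschitzness of both $D$ and $\kde$. The one small point you elide that the paper makes explicit is that the nearest net point $\bq'$ may have $\kde(\bq')$ slightly below $\tau$, so the base data structure must be instantiated with threshold $\Theta(\tau)$ (e.g., $\tau/2$) rather than $\tau$ itself; this is exactly what the $\O{\tau}$ in the conclusion accounts for and is consistent with your ``rescaling by a constant'' remark.
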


Additionally, we show that the framework guarantees adversarial robustness for a number of other important problems such as nearest neighbor search, range queries, point queries, matrix-vector norm queries, and linear regression. 
Finally, we supplement our theoretical results with a number of empirical evaluations. 

\subsection{Our Techniques}

\paragraph{Dynamic regression on adaptive inputs.} 
Our dynamic algorithm for dynamic maintenance of the least-squares objective exploits two main ideas. First, standard results in sketching and sampling show that it suffices to solve for the sketched objective of  $\min_{\bx \in \R^{d}}\|\bS\bA \bx -\bS \bb\|_2^2$, where $\bS$ is an $\ell_2$ subspace embedding for $\bA$. 
Here, we   exploit  several techniques from the  numerical linear algebra  and in particular use leverage score sampling to obtain a   subspace embedding   $\bS$ of $\bA$. By standard results in sketching, a $(1+\eps)$ optimal solution is given by $\bx^*= (\bS\bA)^\dagger \bS\bb$.
Moreover, since the goal is to output the objective value instead of the solution vector, we may take a Johnson-Lindenstrauss (JL) sketch to further  reduce dimensionality and  run-time. This allows us to focus on   $\|\bG\bA \bx^* -\bG \bb\|_2^2$,
where $\bG \in \R^{\O{\log d} \times n}$ is a JL sketch. 

As a result, our algorithm dynamically maintains a solution $\bG\bA(\bS\bA)^\dagger \bb$ in this sketched space. 
To achieve that, we first explicitly solve $\bG\bA(\bS\bA)^\dagger$ in pre-processing. 
Since $\bG\bA$ has few rows, this reduces to a small number of linear solves and can be computed fast via conjugate gradient-type methods. 
To handle the updates, we leverage their sparsity to efficiently maintain the solution  and show that each round takes roughly $\O{K}$ time. Amortizing the pre-processing with the update  costs over all iterations yields our desired run-time.

Finally, we apply techniques from differential privacy~\cite{HassidimKMMS20,KaplanMNS21,BeimelKMNSS22,AttiasCSS23} to ensure adversarial robustness, by aggregating independent copies of the algorithm via a private median mechanism.
Intuitively, the private mechanism hides the internal randomness of the algorithm and therefore prevents the adversary from otherwise choosing a ``bad'' input based on knowledge of internal parameters.

\paragraph{Adaptive query framework.}
Our framework maintains $\tO{\sqrt{Q}}$ instances of the non-adaptive data structure and crucially uses differential privacy (DP) to protect the internal randomness of the data structures. 
In addition to our previous results for dynamic regression and $k$-cut queries on adaptive input, the technique of using DP to hide randomness has recently been used in the streaming model~\cite{HassidimKMMS20,KaplanMNS21,AttiasCSS23} and the dynamic model~\cite{BeimelKMNSS22}. 
These works elegantly use the advanced composition property of DP to bound the number of simultaneous algorithms that must be used in terms of the number of times the output changes ``significantly'' over the course of the stream. 
In the streaming model, the robust algorithms proceed by instantiating many ``hidden" copies of a standard randomized algorithm. 
As the stream arrives, the algorithms are updated and an answer, aggregated using DP, is reported. 
Crucially, many of these results exploit the fact that the output answer is monotonic in the stream so that there is a known upper bound on the final output. 
Thus, the reported answers can only increase by a multiplicative factor at most a logarithmic number of times, which is used to bound the initial number of algorithms which are initialized. 
In the centralized setting, this can be imagined as setting the parameter $Q$. 
The main parameter of interest in the streaming literature is the space used by the streaming algorithms, whereas we are concerned with both space usage and query times. 
Furthermore, stream elements are only accessed one at a time and cannot be processed together unless memory is used. 
In our case, the dataset is given to us upfront and we can pre-process it to construct a data structure towards solving a centralized problem. 

The work \cite{BeimelKMNSS22} shares many of these ideas: the authors are concerned with dynamic graph algorithms where an adversary can update the graph in an adaptive fashion. 
Similar tools such as multiple randomized initialization and aggregated responses using DP are utilized. 
The main difference is their parameters of interest: the goal of \cite{BeimelKMNSS22} is to have a fast \emph{amortized} update time across many queries. 
This necessitates the need to ``throw away" existing algorithms and start with fresh randomness at intermittent points. 
In contrast, we study a centralized setting where the underlying dataset is not updated but we wish to answer $Q$ adaptive queries on the dataset.

By the same reasoning, advanced composition can be used to show the sufficiency of maintaining $\tO{\sqrt{Q}}$ data structures to answer $Q$ adaptive queries in the centralized setting, which gives a rich set of applications. 
Moreover, to improve the query time of the framework, the privacy amplification of sampling can be further invoked to show that it suffices to output the private median of a small subset, i.e., a subset of size $\O{\log Q}$, of these $\tO{\sqrt{Q}}$ data structures. 
Thus the framework only incurs a logarithmic overhead in query time and an $\tO{\sqrt{Q}}$ overhead in space. 
Surprisingly, the simple framework gives diverse applications for adaptive algorithms on a number of important problems, including estimating matrix-vector norms, adaptive linear regression, adaptive range query search, adaptive nearest neighbor search, and adaptive kernel density estimation, to name a few. 
For completeness, these applications are discussed in depth in the appendix in \secref{sec:framework}. 

We emphasize that for several applications of our framework such as adaptive distance estimation or adaptive kernel density estimation, we additionally use specific sophisticated techniques for these problems to further improve our algorithmic guarantees. 
As a simple example, for adaptive kernel density estimation, we provide a data structure robust to an arbitrary number of adaptive queries, which cannot be handled by the techniques of \cite{BeimelKMNSS22}. 

\paragraph{Adaptive distance estimation.} 
To achieve better pre-processing time for adaptive distance estimation, our main technique is to sample groups of rows from a Hadamard transform and argue that an interaction with a separate group should be considered in separate privacy budgets, effectively arguing that outputting $n$ approximate distances to a single adaptive query only uses one unit of privacy budget. 
By contrast, our black-box framework charges one unit of privacy budget per approximate distance, so that outputting $n$ approximate distances would use $n$ units of privacy budget. 

\paragraph{Adaptive kernel density estimation.} 
\thmref{thm:akde:infty} is based on showing that with constant probability, our data structure is accurate on all possible queries in $\mathbb{R}^d$. 
In particular, we first show that our data structure is accurate on a sufficiently fine net of points through a standard union bound argument, which incurs the $d$ overhead compared to the space required to handle a single query. 
We then show that if the algorithm and the kernel function are both Lipschitz, which is true for sampling-based algorithms and a number of standard kernel functions, then accuracy on the net implies accuracy on all possible points in $\mathbb{R}^d$. 

\section{Preliminaries}
\paragraph{Notations.} 
In this paper, we use $[n]$ for a positive integer $n>0$ to denote the set $\{1,\ldots,n\}$. 
We use $\poly(n)$ to denote a fixed polynomial in $n$. 
We say an event occurs with high probability if it occurs with probability $1-\frac{1}{\poly(n)}$.
For real numbers $a,b$ and positive $\eps$, we say $a = (1\pm \eps) b$ if $(1-\eps)b \leq a \leq (1+\eps)b$.    
Let $\be_i\in\R^{\rows}$ be the $i$'th standard basis vector.
Let $\bX^+$ denote the Moore-Penrose pseudo-inverse of matrix $\bX$. Let $\norm{\bX}$ denote the  operator norm of $\bX$.
Let $\kappa(\bX)=\norm{\bX^+}\norm{\bX}$ denote the condition number of $\bX$.

\subsection{Differential Privacy}
Much of our technical results leverage tools from DP. 
We recall its definition and  several key statements.
\begin{definition}[Differential privacy, \cite{DworkMNS06}]
Given $\eps>0$ and $\delta\in(0,1)$, a randomized algorithm $\calA:\calX^*\to\calY$ is $(\eps,\delta)$-differentially private if, for every neighboring datasets $S$ and $S'$ and for all $E\subseteq\calY$,
\[\PPr{\calA(S)\in E}\le e^{\eps}\cdot\PPr{\calA(S')\in E}+\delta.\]
\end{definition}

\begin{theorem}[Amplification via sampling, e.g.,~\cite{BunNSV15}]
\thmlab{thm:dp:sampling}
Let $\calA$ be an $(\eps,\delta)$-differentially private algorithm for $\eps\le 1$, $\delta\in(0,1)$. 
Given a database $S$ of size $n$, let $\calA'$ be the algorithm that constructs a database $T\subset S$ by subsampling (with replacement) $s\le\frac{n}{2}$ rows of $S$ and outputs $\calA(T)$. 
Then $\calA'$ is $(\eps',\delta')$-differentially private for
\[\eps'=\frac{6\eps k}{n},\qquad\delta'=\exp(6\eps k/n)\,\frac{4k\delta}{n}.\]
\end{theorem}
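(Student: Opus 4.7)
The plan is to prove the amplification bound by a coupling argument combined with group privacy applied to the randomness of the subsampling. Fix neighboring databases $S,S'$ that differ in a single row, say at index $i^\ast\in[n]$. Since the subsampling draws $k$ i.i.d.\ uniform indices $I=(i_1,\dots,i_k)$ from $[n]$ without looking at the contents of the database, the distribution of $I$ is identical whether the input is $S$ or $S'$. Conditioned on $I$, the resulting samples $T_S^I$ and $T_{S'}^I$ agree at every position $\ell$ with $i_\ell\ne i^\ast$ and differ at each position with $i_\ell=i^\ast$. Let $J=|\{\ell:i_\ell=i^\ast\}|$; then $J\sim\text{Binomial}(k,1/n)$ and in particular $\PPr{J\ge 1}\le k/n$.

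First I would condition on $J$. When $J=0$ the samples $T_S^I$ and $T_{S'}^I$ coincide, so $\calA(T_S^I)$ and $\calA(T_{S'}^I)$ are identically distributed. When $J=j\ge 1$, the samples are $j$-Hamming-neighbors, so by the group-privacy property of $(\eps,\delta)$-DP, for every measurable event $E$,
\[
\PPr{\calA(T_S^I)\in E}\;\le\; e^{j\eps}\,\PPr{\calA(T_{S'}^I)\in E}\;+\;\frac{e^{j\eps}-1}{e^{\eps}-1}\,\delta.
\]
Taking expectation over $I$ and splitting by the value of $J$ then yields the master inequality
\[
\PPr{\calA'(S)\in E}\;\le\;\Ex{e^{J\eps}}\cdot\PPr{\calA'(S')\in E}\;+\;\delta\cdot\Ex{\tfrac{e^{J\eps}-1}{e^{\eps}-1}}.
\]

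Second, I would bound the two moment terms using the Binomial MGF. For the multiplicative factor,
\[
\Ex{e^{J\eps}}=\left(1+\tfrac{e^{\eps}-1}{n}\right)^{k}\;\le\;\exp\!\left(\tfrac{k(e^{\eps}-1)}{n}\right),
\]
and invoking $\eps\le 1$ together with $e^{\eps}-1\le 2\eps$ gives $\Ex{e^{J\eps}}\le e^{2k\eps/n}\le e^{6k\eps/n}$, which is the claimed $e^{\eps'}$. For the additive factor, I would use $(e^{j\eps}-1)/(e^{\eps}-1)\le j e^{j\eps}$ together with the identity $\Ex{J e^{J\eps}}=\frac{d}{d\eps}\Ex{e^{J\eps}}=\frac{k e^{\eps}}{n}\left(1+\tfrac{e^{\eps}-1}{n}\right)^{k-1}$. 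For $\eps\le 1$ this is at most $\frac{4k}{n}\cdot e^{6k\eps/n}$, matching the stated $\delta'=\exp(6k\eps/n)\cdot 4k\delta/n$.

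The main place where care is required is the second step: the Binomial moments must be tracked so that all the slack, coming from $e^{\eps}-1\le 2\eps$, $e^{\eps}\le e\le 4$, and $(e^{j\eps}-1)/(e^{\eps}-1)\le j e^{j\eps}$, fits inside the generous constants in the target bound. The rest of the argument is a mechanical coupling over the sample indices and an application of group privacy. No other technical obstacle arises: the input-independence of the sampling distribution is what allows us to compare $\PPr{\calA'(S)\in E}$ and $\PPr{\calA'(S')\in E}$ through a shared source of randomness $I$, and the Binomial structure of $J$ is what produces the $k/n$ amplification factor.
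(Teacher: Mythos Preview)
The paper does not give its own proof of this theorem: it is stated in the preliminaries with a citation to \cite{BunNSV15} and used as a black box. So there is nothing to compare against on the paper's side.

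Your argument, however, has a real gap at the ``master inequality.'' From the conditional bound
\[
P_j(E)\;\le\;e^{j\eps}\,P'_j(E)\;+\;\frac{e^{j\eps}-1}{e^{\eps}-1}\,\delta,
\qquad P_j(E)\coloneqq\PPr{\calA'(S)\in E\mid J=j},
\]
taking expectation over $J$ gives
\[
\PPr{\calA'(S)\in E}\;\le\;\Ex{e^{J\eps}\,P'_J(E)}\;+\;\delta\,\Ex{\tfrac{e^{J\eps}-1}{e^{\eps}-1}},
\]
not $\Ex{e^{J\eps}}\cdot\PPr{\calA'(S')\in E}+\cdots$. You cannot replace $\Ex{e^{J\eps}\,P'_J(E)}$ by $\Ex{e^{J\eps}}\cdot\Ex{P'_J(E)}$ because $e^{J\eps}$ and $P'_J(E)$ are in general positively correlated: if the differing row in $S'$ is exactly the feature that drives the event $E$, then large $J$ (many copies of that row in the sample) makes both $e^{J\eps}$ and $P'_J(E)$ large simultaneously. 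Concretely, take $P'_0(E)=0$ and $P'_j(E)=1$ for $j\ge 1$; then $\Ex{e^{J\eps}P'_J(E)}\approx q\,e^{\eps}$ while $\Ex{e^{J\eps}}\cdot\PPr{\calA'(S')\in E}\approx(1+q(e^{\eps}-1))\cdot q$, and the former exceeds the latter whenever $q<1-e^{-\eps}$. So the step fails exactly in the regime where amplification is supposed to kick in.

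The standard proof does not try to factor the expectation. Instead it exploits that, conditioned on the indices other than those hitting $i^\ast$, the output distributions under $S$ and $S'$ are each close (via $(\eps,\delta)$-DP of $\calA$ and group privacy) to the \emph{same} reference distribution, namely the one where none of those positions sample $i^\ast$. One then writes $\calA'(S)$ and $\calA'(S')$ as mixtures sharing a common large component (the $J=0$ part) and bounds the ratio of the mixtures directly, obtaining $\eps'\approx\log\bigl(1+\tfrac{k}{n}(e^{\eps}-1)\bigr)$; the constants $6$ and $4$ in the statement absorb the slack from handling $J\ge 2$ via group privacy. Your coupling and Binomial-MGF calculations are fine as ingredients, but the way you combine them skips the mixture comparison that actually produces the $k/n$ factor in the multiplicative term.
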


\begin{theorem}[Private median, e.g.,~\cite{HassidimKMMS20}]
\thmlab{thm:dp:median}
Given a database $\calD\in X^*$, there exists an $(\eps,0)$-differentially private algorithm $\privmed$ that outputs an element $x\in X$ such that with probability at least $1-\delta$, there are at least $\frac{|S|}{2}-k$ elements in $S$ that are at least $x$, and at least $\frac{|S|}{2}-k$ elements in $S$ in $S$ that are at most $x$, for $k=\O{\frac{1}{\eps}\log\frac{|X|}{\delta}}$. 
\end{theorem}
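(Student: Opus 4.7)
The plan is to instantiate the exponential mechanism of McSherry and Talwar with a rank-based score function and then read off both the privacy and utility guarantees directly. Concretely, for each candidate element $x\in X$ let $r(S,x)$ denote the rank of $x$ in the multiset $S$, i.e., the number of elements of $S$ that are at most $x$, and define the score
\[
u(S,x) \;=\; -\left|\,r(S,x)-\tfrac{|S|}{2}\,\right|.
\]
The algorithm $\privmed$ samples $x\in X$ with probability proportional to $\exp\!\bigl(\tfrac{\eps}{2}\,u(S,x)\bigr)$ and returns this $x$.

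First I would verify privacy. Replacing a single element of $S$ by another (the neighboring relation) shifts the rank of every $x$ by at most $1$, so $|u(S,x)-u(S',x)|\le 1$ for all $x$. Hence $u$ has sensitivity $\Delta u = 1$, and the standard analysis of the exponential mechanism shows that sampling with weights $\exp(\tfrac{\eps}{2}u(S,x))$ is $(\eps,0)$-differentially private. Next I would invoke the standard utility guarantee of the exponential mechanism: with probability at least $1-\delta$, the returned $x$ satisfies
\[
u(S,x) \;\ge\; \max_{y\in X} u(S,y) \;-\; \frac{2}{\eps}\log\frac{|X|}{\delta}.
\]
Since the empirical median $y^\star\in X$ achieves $u(S,y^\star)\ge -1$ (it can be off by at most one when $|S|$ is odd, or exact when even), we conclude
\[
\left|\,r(S,x)-\tfrac{|S|}{2}\,\right| \;\le\; 1+\frac{2}{\eps}\log\frac{|X|}{\delta} \;=\; \O{\tfrac{1}{\eps}\log\tfrac{|X|}{\delta}}.
\]
Unpacking the definition of $r$, this is exactly the statement that at least $|S|/2-k$ elements of $S$ are at most $x$ and at least $|S|/2-k$ are at least $x$, with $k=\O{(1/\eps)\log(|X|/\delta)}$, as required.

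The main technical point to get right, and the only place where care is needed, is the sensitivity computation: one must argue that changing a single entry of $S$ moves every element's rank by at most one and therefore moves $u(S,\cdot)$ by at most one uniformly in $x$. Everything else is a black-box appeal to the exponential mechanism's privacy and accuracy analyses; no additional machinery is required. If desired one can sharpen the constants by sampling from $\exp(\eps\,u(S,x))$ and using a sensitivity-$2$ bound, but the cleanest path to the stated $k=\O{(1/\eps)\log(|X|/\delta)}$ is the version above.
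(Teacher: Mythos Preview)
The paper does not prove this theorem; it is quoted in the preliminaries as a known result with a citation, so there is no paper proof to compare against. Your overall plan---instantiate the exponential mechanism with a median-type score, read off privacy from the sensitivity bound, and read off utility from the standard exponential-mechanism guarantee---is exactly the standard route and is the right idea.

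There is, however, a genuine gap in the specific score you chose. The claim that the empirical median $y^\star$ satisfies $u(S,y^\star)\ge -1$ is false when $S$ has many repeated elements. Take the extreme case $S=\{a,\ldots,a\}$ with $n$ copies of a single element. Then for every $x\in X$ one has $r(S,x)\in\{0,n\}$, hence $u(S,x)=-n/2$ for \emph{all} $x$, and your mechanism samples uniformly from $X$. If $X$ contains any element smaller than $a$, that element $x$ has $\#\{s\in S:s\le x\}=0$, which violates the conclusion unless $k\ge |S|/2$. So this is not merely a loose analysis; the algorithm with this score actually fails.

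The easy repair is to replace your score by
\[
u(S,x)\;=\;\min\bigl(\#\{s\in S:s\le x\},\ \#\{s\in S:s\ge x\}\bigr).
\]
Each of the two counts changes by at most $1$ under a single replacement, so the sensitivity is still $1$ and the privacy argument is unchanged. Now the empirical median of $S$ attains score at least $\lfloor |S|/2\rfloor$, so the exponential-mechanism utility bound gives
\[
\min\bigl(\#\{s\le x\},\ \#\{s\ge x\}\bigr)\;\ge\;\frac{|S|}{2}-\O{\frac{1}{\eps}\log\frac{|X|}{\delta}},
\]
which is precisely the stated guarantee. With this correction the rest of your write-up goes through verbatim.
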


\begin{theorem}[Advanced composition, e.g.,~\cite{DworkRV10}]
\thmlab{thm:adaptive:queries}
Let $\eps,\delta'\in(0,1]$ and let $\delta\in[0,1]$. 
Any mechanism that permits $k$ adaptive interactions with mechanisms that preserve $(\eps,\delta)$-differential privacy guarantees $(\eps',k\delta+\delta')$-differential privacy, where $\eps'=\sqrt{2k\ln\frac{1}{\delta'}}\cdot\eps+2k\eps^2$. 
\end{theorem}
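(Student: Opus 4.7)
The plan is to prove advanced composition by controlling the privacy loss random variable associated with the $k$-fold adaptive composition via a martingale concentration argument. Fix two neighboring datasets $S,S'$, and let $\calM$ denote the adaptive composition. For any transcript of outputs $o_1,\ldots,o_k$ produced by the $k$ mechanisms $\calA_1,\ldots,\calA_k$ (where each $\calA_i$ is chosen adaptively based on $o_1,\ldots,o_{i-1}$), I would define the per-round privacy loss
\[
Z_i(o_{\le i}) \;=\; \ln\frac{\Pr[\calA_i(S;o_1,\ldots,o_{i-1})=o_i]}{\Pr[\calA_i(S';o_1,\ldots,o_{i-1})=o_i]},
\]
so that the total privacy loss is $Z=\sum_{i=1}^k Z_i$, with the $o_i$'s drawn from the $S$-branch. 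The goal is to show $|Z|\le\eps'$ except on an event of probability at most $\delta'+k\delta$; a standard lemma then converts a high-probability bound on the privacy loss into an $(\eps', k\delta+\delta')$-DP statement for $\calM$.

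Next I would handle the $\delta$ slack. Using the standard reduction (e.g.\ the Kasiviswanathan--Smith lemma), each $(\eps,\delta)$-DP mechanism $\calA_i$, conditioned on the history, can be coupled with a pair of distributions that are pointwise $(\eps,0)$-indistinguishable except on events $E_i,E_i'$ of probability at most $\delta$ each. A union bound over $i\in[k]$ says that, outside a ``bad event'' of total probability at most $k\delta$, every $Z_i$ is pointwise bounded: $|Z_i|\le\eps$. I would then verify two properties of the martingale difference sequence $(Z_i-\mathbb{E}[Z_i\mid o_{<i}])$ with respect to the filtration generated by past outputs: (i) each $Z_i$ is bounded in $[-\eps,\eps]$ on the good event, and (ii) the conditional expectation $\mathbb{E}[Z_i\mid o_{<i}]$ is the KL divergence between two $(\eps,0)$-indistinguishable distributions and hence is at most $\eps(e^\eps-1)\le 2\eps^2$ for $\eps\le 1$ (via the identity $p\ln(p/q)+(1-p)\ln((1-p)/(1-q))\le\eps(e^\eps-1)$ when $e^{-\eps}\le p/q,(1-p)/(1-q)\le e^\eps$).

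The third step is the concentration. The centered sums $\sum_{i\le t}(Z_i-\mathbb{E}[Z_i\mid o_{<i}])$ form a martingale with increments in $[-2\eps,2\eps]$, so Azuma--Hoeffding gives
\[
\Pr\!\left[\sum_{i=1}^k Z_i \ge k\cdot 2\eps^2 + t\right] \;\le\; \exp\!\left(-\frac{t^2}{2k\eps^2}\right).
\]
Choosing $t=\eps\sqrt{2k\ln(1/\delta')}$ yields $Z\le 2k\eps^2+\eps\sqrt{2k\ln(1/\delta')}=\eps'$ with probability at least $1-\delta'$ on the good event. Combining with the $k\delta$ probability of the bad event, the standard conversion from high-probability privacy-loss bounds to $(\eps',k\delta+\delta')$-DP yields the theorem.

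The main obstacle is the careful bookkeeping in the adaptive setting: the mechanism $\calA_i$ itself is selected by the adversary as a function of $o_1,\ldots,o_{i-1}$, so one must ensure that the per-round bounds on $|Z_i|$ and on $\mathbb{E}[Z_i\mid o_{<i}]$ hold uniformly over all choices of $\calA_i$ consistent with the history. This is handled by noting that the $(\eps,\delta)$-DP property is a worst-case guarantee over neighboring inputs, so the coupling and the KL-divergence bound apply pointwise to every $\calA_i$ in the adversary's strategy space; once this is set up, the martingale argument proceeds routinely.
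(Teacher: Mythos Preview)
The paper does not prove this theorem; it is stated as a preliminary and attributed to \cite{DworkRV10}, then used as a black box throughout (in \lemref{lem:dp:all:iter}, the proof of \thmref{ade_srht}, etc.). So there is no ``paper's own proof'' to compare against.

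That said, your sketch is the standard Dwork--Rothblum--Vadhan argument and is essentially correct: define the per-round privacy loss, strip off the $\delta$'s via a coupling/bad-event argument at cost $k\delta$, bound the conditional mean of each $Z_i$ by the KL divergence $\le \eps(e^\eps-1)\le 2\eps^2$, and apply Azuma/Hoeffding to the centered martingale. One minor remark on constants: to land exactly on $\exp(-t^2/(2k\eps^2))$ you should invoke Hoeffding's lemma (the sub-Gaussian MGF bound for a variable supported in $[-\eps,\eps]$, with variance proxy $\eps^2$) rather than the cruder $|$increment$|\le 2\eps$ form of Azuma, which would lose a factor of $4$ in the exponent. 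With that refinement the computation matches the stated $\eps'=\eps\sqrt{2k\ln(1/\delta')}+2k\eps^2$.
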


\begin{theorem}[Generalization of DP, e.g.,~\cite{DworkFHPRR15,BassilyNSSSU21}]
\thmlab{thm:generalization}
Let $\eps\in(0,1/3)$, $\delta\in(0,\eps/4)$, and $n\ge\frac{1}{\eps^2}\log\frac{2\eps}{\delta}$. 
Suppose $\calA:X^n\to 2^X$ is an $(\eps,\delta)$-differentially private algorithm that curates a database of size $n$ and produces a function $h:X\to\{0,1\}$. 
Suppose $\calD$ is a distribution over $X$ and $S$ is a set of $n$ elements drawn independently and identically distributed from $\calD$. 
Then
\[\PPPr{S\sim\calD,h\gets\calA(S)}{\left|\frac{1}{|S|}\sum_{x\in S}h(x)-\EEx{x\sim\calD}{h(x)}\right|\ge10\eps}<\frac{\delta}{\eps}.\]
\end{theorem}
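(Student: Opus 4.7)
The plan is to adapt the standard transfer/monitor argument of \cite{DworkFHPRR15,BassilyNSSSU21} to the stated parameters. Let $\mu=\EEx{x\sim\calD}{h(x)}$ and $\wh\mu(S,h)=\frac{1}{n}\sum_{x\in S}h(x)$, and let $E=\{(S,h):\wh\mu(S,h)-\mu\ge 10\eps\}$ (the symmetric side is handled identically by replacing $h$ with $1-h$). The goal reduces to showing $\Pr[(S,\calA(S))\in E]<\delta/(2\eps)$.

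The first step is a \emph{ghost-sample} concentration: draw $T\sim\calD^n$ independently of $S$, and note that conditional on $h=\calA(S)$, the empirical mean $\wh\mu(T,h)$ is an average of $n$ independent $[0,1]$-random variables with mean $\mu$, so a one-sided Hoeffding bound gives $\Pr_T[\wh\mu(T,h)\ge\mu+\eps]\le\exp(-2n\eps^2)\le\delta/(2\eps)$ by the hypothesis $n\ge\eps^{-2}\log(2\eps/\delta)$. Thus with probability at least $1-\delta/(2\eps)$ over $(T,h)$, the pair $(T,h)$ is \emph{not} in the shifted event $E'=\{(U,h):\wh\mu(U,h)\ge\mu+\eps\}$.

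The second and crucial step is to transfer this bound from the independent pair $(T,h)$ to the dependent pair $(S,h)$. Here I would use the standard DP transfer inequality: for any event $F$ on $X^n\times\calY$, an $(\eps,\delta)$-DP algorithm $\calA$ satisfies
\[
\PPPr{S\sim\calD^n,h\sim\calA(S)}{(S,h)\in F}\le e^\eps\,\PPPr{T\sim\calD^n,\,h\sim\calA(S),\,S\sim\calD^n}{(T,h)\in F}+\delta,
\]
applied in the monitor form of \cite{BassilyNSSSU21}. This is proved by coupling $S$ and $T$ one coordinate at a time and invoking group-privacy/advanced composition across the $n$ single-element swaps, but using the fact that only the \emph{difference in means} is exposed, so that a bounded-difference style argument keeps the privacy cost at $(\eps,\delta)$ rather than $(n\eps,n\delta)$. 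Taking $F=E$ and using the previous paragraph to bound the RHS by $e^\eps\cdot\delta/(2\eps)+\delta$, and then absorbing the slack between $\mu+\eps$ (ghost event) and $\mu+10\eps$ (target event) into the extra room provided by the $(e^\eps-1)\le 2\eps$ expectation shift and $\delta\le\eps/4$, yields $\Pr[(S,h)\in E]<\delta/(2\eps)$.

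The main obstacle is justifying the transfer inequality with the correct constants: a naive $n$-fold composition of single-element DP swaps is too lossy, and one must instead package the statistic through an $(\eps,\delta)$-DP monitor $\calM$ that outputs the pair $(h,\wh\mu(S,h))$, then appeal to the one-element symmetrization lemma of \cite{DworkFHPRR15} to keep the privacy budget intact. Once this transfer is in place, the remaining calculations — combining Hoeffding's bound on the ghost sample with the $e^\eps\le 1+2\eps$ and $\delta\le\eps/4$ slack, and verifying that the $10\eps$ factor in the statement absorbs all additive constants from the Hoeffding tail, the expectation shift, and the union bound over the two tails — are routine.
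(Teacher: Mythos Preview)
The paper does not give a proof of this statement; it is quoted in the preliminaries as a known result from \cite{DworkFHPRR15,BassilyNSSSU21} and used as a black box throughout. So there is no proof in the paper to compare your attempt against.

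On the merits of your sketch: the two ingredients you name (Hoeffding on an independent ghost sample $T$, and a DP-based transfer from $(T,h)$ to the correlated pair $(S,h)$) are indeed the ingredients of the cited proofs, but the transfer step as you state it does not hold. The displayed inequality
\[
\PPPr{S,\,h\sim\calA(S)}{(S,h)\in F}\;\le\; e^\eps\,\PPPr{T,\,S,\,h\sim\calA(S)}{(T,h)\in F}+\delta
\]
for an arbitrary event $F$ is false in general: passing from $S$ to an independent $T$ changes all $n$ coordinates, and no ``bounded-difference'' device collapses the cost of that $n$-fold swap back to $(\eps,\delta)$. What the single-coordinate resampling argument \emph{does} yield is only an expectation bound, namely $\EEx{S,h}{\wh\mu(S,h)}\le e^\eps\,\EEx{S,h}{\mu(h)}+\delta$, obtained by comparing $\calA(S)(S_i)$ with $\calA(S^{(i)})(S_i)$ one coordinate at a time. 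The monitor of \cite{BassilyNSSSU21} is then used not to rescue a per-event transfer inequality but to amplify this expectation bound to a tail bound: assuming the bad event has probability at least $\delta/\eps$, one runs $\Theta(\eps/\delta)$ independent copies, lets a monitor that knows $\calD$ select the most-overfitting copy, and applies the expectation bound to the composed $(\eps,\delta)$-DP mechanism to derive a contradiction. Your plan conflates these two roles and never actually carries out the expectation-to-tail amplification, so as written there is a genuine gap at the transfer step.
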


\subsection{Numerical Linear Algebra}
Our results on dynamic regression relies upon some   tools in numerical linear algebra. We first recall the  dimensionality reduction techniques.
\begin{theorem}[Johnson-Lindenstrauss transformation, $\eps$-JL]
\thmlab{thm:jl}
Given $\eps>0$, there exists a family of random maps $\Pi_{m,d}\in\mathbb{R}^{m\times d}$ with $m=\O{\frac{1}{\eps^2}}$ such that for any $\bx\in\mathbb{R}^d$, we have
\[\PPPr{\Pi\sim\Pi_{m,d}}{(1-\eps)\|\bx\|_2\le\|\Pi\bx\|_2\le(1+\eps)\|\bx\|_2}\ge\frac{3}{4}.\]
Moreover, $\Pi\bx$ takes $\O{\frac{d}{\eps^2}}$ time to compute. 
\end{theorem}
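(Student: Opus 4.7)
The plan is to take $\Pi$ to be a scaled i.i.d.\ Gaussian matrix: pick $m = C/\eps^2$ for a sufficiently large absolute constant $C$, and draw $\Pi_{ij} \sim \mathcal{N}(0, 1/m)$ independently. Fix $\bx \in \mathbb{R}^d$; by scaling assume $\|\bx\|_2 = 1$. Each row of $\Pi$ then yields $\langle \Pi_i, \bx\rangle \sim \mathcal{N}(0, 1/m)$, and these are mutually independent across $i \in [m]$. Consequently $m\,\|\Pi \bx\|_2^2 = \sum_{i=1}^m g_i^2$ with $g_i \sim \mathcal{N}(0,1)$ i.i.d., so $m\,\|\Pi \bx\|_2^2 \sim \chi^2_m$ with mean $m$.

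Next I would invoke a standard concentration bound for $\chi^2_m$ (e.g., Laurent--Massart): for any $t \geq 0$,
\[
\PPr{\chi^2_m \geq m + 2\sqrt{mt} + 2t} \leq e^{-t}, \qquad \PPr{\chi^2_m \leq m - 2\sqrt{mt}} \leq e^{-t}.
\]
Setting $t = c\,\eps^2 m$ for a small absolute constant $c$ and combining the two tails gives
\[
\PPr{\bigl|\|\Pi\bx\|_2^2 - 1\bigr| > \eps/3} \leq 2\exp(-c' m \eps^2),
\]
which is at most $1/4$ once $m \geq C/\eps^2$ with $C$ large enough. Converting from squared norm to norm: if $\bigl|\|\Pi\bx\|_2^2 - 1\bigr| \leq \eps/3$ and $\eps \in (0,1)$, then $\sqrt{1-\eps/3} \geq 1-\eps$ and $\sqrt{1+\eps/3} \leq 1+\eps$, so $\|\Pi\bx\|_2 \in [(1-\eps), (1+\eps)]$ as required. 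The scaling to general $\bx$ is immediate since the event is homogeneous in $\bx$.

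For the running-time claim, computing $\Pi\bx$ is an $(m \times d)$-times-$d$ matrix--vector product, which takes $\O{md} = \O{d/\eps^2}$ arithmetic operations.

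The only genuine technical content is the $\chi^2$ tail inequality; everything else is bookkeeping and a conversion from squared norm to norm. This is the main obstacle insofar as one must cite or derive a sub-exponential concentration bound, which follows from a direct Chernoff argument on the MGF $\mathbb{E}[e^{\lambda(\chi^2_m - m)}] = (1-2\lambda)^{-m/2} e^{-m\lambda}$, or alternatively from Bernstein's inequality applied to the i.i.d.\ centered $\chi^2_1$ summands. (If one prefers a sparser construction, the same argument goes through with a sub-Gaussian $\pm 1/\sqrt{m}$ matrix at the cost of a Hanson--Wright-style inequality, which would not change the stated bounds.)
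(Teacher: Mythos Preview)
Your argument is correct and is the standard Gaussian/$\chi^2$-concentration proof of the Johnson--Lindenstrauss lemma. Note, however, that the paper does not actually prove this statement: \thmref{thm:jl} is listed in the preliminaries as a known result and is used as a black box, so there is no ``paper's own proof'' to compare against. Your write-up would serve perfectly well as a self-contained justification if one were desired.
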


\begin{theorem}[Fast JL]
\thmlab{thm:fast:jl}
Given $\eps>0$, there exists a family of random maps $\Pi_{m,d}\in\mathbb{R}^{m\times d}$ with $m=\O{\frac{\log d}{\eps^2}}$ such that for any $\bx\in\mathbb{R}^d$, we have
\[\PPPr{\Pi\sim\Pi_{m,d}}{(1-\eps)\|\bx\|_2\le\|\Pi\bx\|_2\le(1+\eps)\|\bx\|_2}\ge\frac{3}{4}.\]
Moreover, $\Pi\bx$ takes $\O{\frac{\log d}{\eps^2}+d\log d}$ time to compute. 
\end{theorem}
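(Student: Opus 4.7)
The plan is to use the Ailon--Chazelle Fast Johnson--Lindenstrauss construction, specifically the Subsampled Randomized Hadamard Transform (SRHT) variant. I will take $\Pi = PHD$ where $D \in \R^{d\times d}$ is a diagonal matrix of independent $\pm 1$ Rademacher random variables, $H$ is the normalized $d\times d$ Walsh--Hadamard matrix (WLOG $d$ is a power of $2$, by zero-padding, which only changes $d$ by a factor of $2$), and $P = \sqrt{d/m}\cdot S$, where $S \in \R^{m\times d}$ is a row-subsampling matrix choosing $m = \O{(\log d)/\eps^2}$ coordinates uniformly with replacement. The analysis splits into a \emph{flattening} stage and a \emph{subsampling} stage.

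\textbf{Flattening.} Let $y = HDx$. Since $H$ and $D$ are orthogonal, $\|y\|_2 = \|x\|_2$ deterministically. I will show $\|y\|_\infty \le C\sqrt{(\log d)/d}\cdot\|x\|_2$ with probability at least $7/8$ for an absolute constant $C$. Each coordinate is a weighted Rademacher sum $y_i = \sum_j H_{ij} D_{jj} x_j$ with $|H_{ij}| = 1/\sqrt{d}$, so Hoeffding's inequality gives $\Pr[|y_i| > t\|x\|_2] \le 2\exp(-t^2 d/2)$. Choosing $t = C\sqrt{(\log d)/d}$ for a sufficiently large $C$ and union bounding over $i \in [d]$ yields the claim; this union bound is precisely why the target dimension is inflated by $\log d$ relative to~\thmref{thm:jl}.

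\textbf{Subsampling.} Condition on the flattening event. Then $\|\Pi x\|_2^2 = \frac{d}{m}\sum_{k=1}^m y_{i_k}^2$ where $i_1,\dots,i_m$ are i.i.d. uniform in $[d]$. Each summand $y_{i_k}^2$ has mean $\|y\|_2^2/d = \|x\|_2^2/d$, is almost surely bounded by $\|y\|_\infty^2 = \O{(\log d)/d}\cdot\|x\|_2^2$, and has variance at most $\|y\|_\infty^2\cdot\|y\|_2^2/d$. Applying Bernstein's inequality to the sum $\sum_{k=1}^m y_{i_k}^2$ with $m = \Omega((\log d)/\eps^2)$ yields $\|\Pi x\|_2^2 = (1\pm\eps)\|x\|_2^2$ with probability at least $7/8$. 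A union bound over the two stages gives overall success probability at least $3/4$, and taking a square root translates the $(1\pm\eps)$ squared-norm bound into the desired norm bound after rescaling $\eps$ by a constant.

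\textbf{Runtime.} Computing $Dx$ takes $\O{d}$ time. Applying $H$ via the Fast Walsh--Hadamard Transform takes $\O{d\log d}$ time. Extracting the $m$ sampled entries of $HDx$ and scaling takes $\O{m} = \O{(\log d)/\eps^2}$ additional time, for a total of $\O{d\log d + (\log d)/\eps^2}$.

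The main technical obstacle is the tightness of the subsampling analysis: a naive Hoeffding/Chernoff bound on the sum would incur an extra $\log d$ factor and force $m = \Omega(\log^2 d/\eps^2)$. The crucial step is to exploit the \emph{variance} bound (not just the range bound) on each $y_{i_k}^2$ afforded by flatness of $y$, and feed both quantities into Bernstein's inequality; this is what recovers the sharp $m = \O{(\log d)/\eps^2}$ and hence the runtime claimed in the statement.
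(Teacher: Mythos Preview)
The paper does not prove this statement: it is listed in the Preliminaries (Section~2.2) as a known result, alongside the ordinary JL lemma, and is simply invoked later. So there is no ``paper's own proof'' to compare against.

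Your argument is correct and is essentially the standard SRHT analysis. The two-stage decomposition (Hoeffding plus union bound for flattening, then a Bernstein/variance argument for the subsampling concentration) is exactly how this bound is obtained in the literature, and your identification of why Bernstein rather than Hoeffding is needed in the second stage---to avoid picking up an extra $\log d$---is the right point. One small remark: for the constant-probability target here, even Chebyshev in the subsampling stage would already give $m=\O{(\log d)/\eps^2}$, since the variance bound $\mathrm{Var}(y_{i_k}^2)\le\|y\|_\infty^2\|y\|_2^2/d=\O{(\log d)/d^2}\|x\|_2^4$ yields $\Pr[\text{deviation}>\eps\|x\|_2^2]=\O{(\log d)/(m\eps^2)}$; Bernstein is only needed if one wants high-probability guarantees. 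Either way your claimed $m$ and runtime are correct.
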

A row sampling matrix $\bS$ has rows that are multiples of natural basis vectors, so that $\bS\bA$ is
a (weighted) sample of the rows of $\bA$. A column sampling matrix is defined similarly.
The size of a row/column sampling matrix is defined as the number of rows/columns it samples. 
The leverage score of the $i$th row $\ba_i^\top$ of $\bA$ is 
\[
\tau_i(\mathbf{A}) \stackrel{\text { def }}{=} \mathbf{a}_i^{\top}\left(\mathbf{A}^{\top} \mathbf{A}\right)^{+} \mathbf{a}_i.\]
For a   survey on leverage score and applications, we refer the reader to \cite{mahoney2011randomized}.
\begin{definition}[Leverage score sampling]\label{lem:lv-compute}
    Let $\bu$ be a vector of leverage score overestimates, i.e., $\tau_i(\bA) \leq \bu_i $. 
    Let  $\alpha$ be a
sampling rate parameter and   $c$ be a fixed positive constant. For each row, we define a sampling probability $p_i=\min \left\{1, \alpha \cdot u_i c \log d\right\}$.  The leverage score sampling matrix is a row sampling matrix $\mathbf{S}$ with independently chosen entries such that $\mathbf{S}_{i i}=\frac{1}{\sqrt{p_i}}$ with probability $p_i$ and $0$ otherwise.
\end{definition}
\begin{definition}[Subspace embedding]\label{def:sub-emb}
A  $(1 \pm \varepsilon) $ $\ell_2$ subspace embedding for the column space of an $n \times d$ matrix $\mathbf{A}$ is a matrix $\mathbf{S}$ for which for all $\mathbf{x} \in \mathbb{R}^d$
$$
\|\mathbf{S A x}\|_2^2=(1 \pm \varepsilon)\|\mathbf{A} \mathbf{x}\|_2^2.
$$
\end{definition}

\begin{theorem}[Leverage sampling implies subspace embedding, Theorem 17 of \cite{woodruff2014sketching}]\label{lem:lev-sub-emb}
Let $\alpha = \eps^{-2}$ and $c$ be a  sufficiently large constant. With high probability, the leverage score sampling matrix  is a $(1\pm\eps)$ $\ell_2$ subspace embedding. Furthermore, it has size $\O{d\log d/ \eps^2}$.
\end{theorem}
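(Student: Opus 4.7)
The plan is to prove this by reducing the subspace embedding condition to a spectral approximation statement on a matrix of orthonormal columns, then invoking a matrix Chernoff bound. Write $\bA = \bU\bSig\bV^\top$ as its thin SVD, so that the column space of $\bA$ coincides with that of $\bU \in \R^{n \times r}$, where $r \leq d$. Since $\|\bS\bA\bx\|_2^2 = (1\pm\eps)\|\bA\bx\|_2^2$ for all $\bx \in \R^d$ if and only if $\|\bS\bU\by\|_2^2 = (1\pm\eps)\|\by\|_2^2$ for all $\by$ in the range of $\bSig\bV^\top$, and because $\bU$ has orthonormal columns, the desired embedding property reduces to showing $\|\bU^\top\bS^\top\bS\bU - \bI_r\|_2 \leq \eps$. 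Crucially, the squared row norms of $\bU$ are exactly the leverage scores $\tau_i(\bA)$, so $\bu_i \geq \|\bU_{i,:}\|_2^2$.

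I would then write $\bU^\top\bS^\top\bS\bU = \sum_i \bY_i$, where $\bY_i$ is the random rank-one matrix equal to $\frac{1}{p_i}\bU_{i,:}^\top \bU_{i,:}$ with probability $p_i$ and zero otherwise, so $\Ex{\bY_i} = \bU_{i,:}^\top\bU_{i,:}$ and $\Ex{\sum_i \bY_i} = \bI_r$. For indices where $p_i < 1$, we have $p_i = \alpha c u_i \log d$, which gives the uniform operator norm bound
\[
\norm{\bY_i - \Ex{\bY_i}} \;\leq\; \frac{\norm{\bU_{i,:}}_2^2}{p_i} \;\leq\; \frac{u_i}{\alpha c u_i \log d} \;=\; \frac{1}{\alpha c \log d} \;=\; \frac{\eps^2}{c\log d}.
\]
(Indices with $p_i = 1$ contribute exactly their mean and can be ignored.) A similar bound controls the matrix variance $\norm{\sum_i \Ex{(\bY_i - \Ex{\bY_i})^2}} \leq \frac{1}{\alpha c \log d}$, using $\sum_i u_i \geq \sum_i \tau_i = r$ together with $\bU^\top\bU = \bI_r$. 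Applying the matrix Bernstein inequality (e.g., Theorem 1.6 of Tropp) yields
\[
\PPr{\norm{\textstyle\sum_i \bY_i - \bI_r} \geq \eps} \;\leq\; 2r \cdot \exp\!\lprp{-\Omega(c\log d)},
\]
which is at most $1/\poly(d)$ once $c$ is a sufficiently large constant, establishing the subspace embedding guarantee with high probability.

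For the size bound, the expected number of sampled rows is $\sum_i p_i \leq \alpha c \log d \cdot \sum_i u_i$. Since $\bu$ is a vector of leverage score overestimates that one normally assumes sums to $\O{r} = \O{d}$ (this is the standing assumption for leverage score sampling, and can always be guaranteed by using exact scores, whose sum equals $\rank(\bA)$), the expectation is $\O{d\log d/\eps^2}$. A standard scalar Chernoff bound over the independent Bernoulli indicators then gives $|\bS| = \O{d\log d/\eps^2}$ with high probability.

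The main obstacle is verifying the matrix Bernstein hypotheses cleanly, namely the uniform $\norm{\bY_i - \Ex{\bY_i}}$ and variance bounds, and tracking the dependence on $c$ so that a single value suffices to drive the failure probability to $1/\poly(d)$ while simultaneously controlling the sample size. Everything else is a routine SVD reduction and Chernoff argument.
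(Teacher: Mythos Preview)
The paper does not prove this statement; it is simply quoted as Theorem~17 of Woodruff's survey \cite{woodruff2014sketching} and used as a black box. Your proposal is the standard matrix-concentration proof of that result (reduce to an orthonormal basis $\bU$ via the SVD, bound $\|\bU^\top\bS^\top\bS\bU - \bI_r\|$ by matrix Bernstein, then a scalar Chernoff for the sample size), which is essentially how the cited reference proves it as well.

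Two small remarks on the write-up. First, in your variance computation the sentence ``using $\sum_i u_i \geq \sum_i \tau_i = r$ together with $\bU^\top\bU = \bI_r$'' is slightly off: the variance bound only uses $\sum_i \bU_{i,:}^\top\bU_{i,:} = \bI_r$, not any statement about $\sum_i u_i$; the latter is relevant only for the size bound. Second, you are right that the size claim $\O{d\log d/\eps^2}$ implicitly needs $\sum_i u_i = \O{d}$, which the paper's \autoref{lem:lv-compute} does not state explicitly; this is the standard convention (and is automatic for constant-factor leverage score approximations), so it is a fair assumption, but it is good that you flagged it.
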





The approximate leverage scores can be computed in input-sparsity time. Afterwards, repeated sampling from the leverage score distribution can be done efficiently using the binary tree data structure in quantum-inspired numerical linear algebra.
\begin{lemma}[Leverage score computation and sampling data structure;  see \cite{woodruff2014sketching,chepurko2022quantum}]\label{lem:dynsamp}
Let $\bA \in \R^{n\times d}$. 
There exists an algorithm that given $\bA$ outputs a vector of row leverage score overestimates with high probability and in run-time $\tO{\nnz(\bA) + \poly(d)}$.

Furthermore, there exists a sampling data structure $\mathcal D_{LS}$ that stores the row  leverage scores of $\bA$ such that given a positive integer $m\leq n$, returns a leverage score sample of $\bA$  of size $m$ in $\O{m\log (mn)}$ time.
In total, the pre-processing takes $\O{\nnz(\bA) + \poly(d)}$ time. 
\end{lemma}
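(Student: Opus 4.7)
The plan is to prove the lemma in two parts corresponding to its two assertions, after which the total pre-processing bound follows by simple addition.

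For the first part, I will invoke the standard input-sparsity leverage score approximation scheme (see, e.g., the survey \cite{woodruff2014sketching}). The approach begins by drawing a sparse subspace embedding $\bS_1 \in \R^{r \times n}$ with $r = \poly(d)$ rows, so that $\bS_1 \bA$ can be formed in $\O{\nnz(\bA)}$ time and is a constant-factor $\ell_2$ subspace embedding for the column space of $\bA$. From $\bS_1 \bA$ we compute a matrix $\bR \in \R^{d \times d}$ in $\poly(d)$ time such that $\bA \bR$ has approximately orthonormal columns, e.g., by taking $\bR = ((\bS_1 \bA)^\top (\bS_1 \bA))^{-1/2}$ (or its equivalent via a QR decomposition). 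The exact leverage score is then $\tau_i(\bA) = \twonorm{\be_i^\top \bA \bR'}^2$ for the true $\bR' = (\bA^\top \bA)^{-1/2}$, and replacing $\bR'$ by $\bR$ gives constant-factor overestimates. To avoid paying $\Omega(n d^2)$ time to form $\bA \bR$ explicitly, we post-multiply by a Johnson–Lindenstrauss matrix $\bG \in \R^{d \times k}$ with $k = \O{\log n}$ from \thmref{thm:jl}, compute $\bR \bG$ in $\poly(d)$ time, and then form $\bA (\bR \bG)$ in $\O{\nnz(\bA) \cdot \log n}$ time; row norms of the resulting $n \times k$ matrix yield $(1 \pm 1/2)$-approximations of the overestimates with high probability. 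The total time is $\tO{\nnz(\bA) + \poly(d)}$.

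For the sampling data structure, I will use the standard balanced binary tree over the $n$ rows, where each leaf $i$ stores the overestimate $u_i$ (or, up to normalization, the sampling weight), and each internal node stores the sum of weights of its subtree. Constructing this tree takes $\O{n}$ time once the $u_i$ values are available, which is subsumed by $\nnz(\bA)$ whenever there are no all-zero rows; all-zero rows can be detected and handled in $\O{\nnz(\bA) + n}$ time but for our applications we can assume $n \le \O{\nnz(\bA)}$, so this is fine. To draw a single row with probability proportional to its weight, we walk from the root to a leaf, at each step flipping a biased coin whose probability is determined by the two children's stored sums; this takes $\O{\log n}$ time. Drawing $m$ independent samples thus takes $\O{m \log n}$ time, and reporting the scaling factors $1/\sqrt{p_i}$ required by \defref{lem:lv-compute} is done in constant time per sample, giving the claimed $\O{m \log(mn))}$ overall bound.

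Combining the two pieces, the total pre-processing cost is the cost of approximating the leverage scores plus the cost of constructing the tree, both of which are bounded by $\tO{\nnz(\bA) + \poly(d)}$, as required. The only mildly subtle step is the careful use of Johnson–Lindenstrauss to avoid the $nd^2$ term when computing the overestimates, but this is by now a textbook manipulation; the remainder consists of assembling well-known components, so I do not anticipate any serious obstacle beyond being precise about the order in which $\bA$, $\bR$, and $\bG$ are multiplied and about the failure probability bookkeeping in the JL and subspace embedding steps.
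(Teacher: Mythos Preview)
The paper does not actually prove this lemma; it is stated as a known result imported from \cite{woodruff2014sketching,chepurko2022quantum} with no accompanying argument. Your proposal is a correct reconstruction of exactly the standard scheme those references contain: a sparse subspace embedding followed by a change-of-basis matrix $\bR$ and a JL post-multiplication to read off approximate row norms in $\tO{\nnz(\bA)+\poly(d)}$ time, together with the binary-tree sampling structure from the quantum-inspired literature for $\O{\log n}$-time draws. So there is nothing to contrast---your sketch is the intended proof behind the citation, and it is sound.
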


\section{Dynamic Regression}
\label{sec:dynamic-regression}
In this section, we consider the dynamic problem of maintaining the cost of the least-squares regression, where the labels receive adaptively chosen updates.

We first introduce the basic setting of the problem in \secref{sec:setting-lsr}. 
In  \secref{sec:dynamic-ob-reg}, we design a key subroutine under   non-adaptive  updates. 
The data structure enjoys a nearly linear update time. 
This allows us to aggregate multiple copies of the procedure and thereby efficiently ensure adversarial robustness against an adaptive adversary. 
The argument is via an application of differential privacy and detailed subsequently in \secref{sec:dynamic-ada-reg}.

\subsection{Basic Setting}\seclab{sec:setting-lsr}
Let $\bA \in \R^{n\times d}$ be the design matrix and $\bb \in \R^n$ be the target label. 
A classic problem in numerical linear algebra and optimization is to solve the $\ell_2$ least-squares regression objective   
\begin{equation}
   F(\bA, \bb) = \min_{\bx \in \R^{d}}\left\|\bA \bx - \bb\right\|_2^2 = \left\|\bA \bA^\dagger \bb - \bb\right\|_2^2.
\end{equation}
We consider a dynamic version of the problem, where the design matrix $\bA$ remains unchanged. However, at each step (at most) $K$ entries of  $\bb$ undergo  an update.
Moreover, we assume that the updates are chosen adaptively by an adversary in the following manner.
\begin{itemize}
    \item The algorithm starts by receiving the input $\bA \in \R^{n\times d}$ and   $\bb^{(1)} \in \R^n$.
    \item In the $i$-th step,   the algorithm  outputs an estimate $\widehat{F}_i$ of the cost $F(\bA,\bb^{(i)})$, where $\bb^{(i)}$ is the target label   corresponding to the   step. 
    \item The adversary observes $\widehat{F}_i$ and  updates  at most $K$ labels   to form $\bb^{(i)} $.
\end{itemize}
Let $\bb^{(1)},\bb^{(2)},\ldots, \bb^{(T)} \in \R^{n}$  be the resulting sequence of labels over $T$ steps.  
The goal of the algorithm is to output a $(1+\eps)$ approximation to the optimal cost at every step, while minimizing  the     update time.

%


\subsection{Dynamic Algorithm for Oblivious Inputs}
\seclab{sec:dynamic-ob-reg}
In this section, we provide a key subroutine that maintains a data structure under oblivious updates. 
On a high-level, the  data structure aims to enable a \textit{sketch-and-solve} strategy dynamically. The main ideas are two fold: (1) apply  randomized sketching   to reduce dimensionality and therefore the run-time, and (2)  exploit the sparsity of the updates to argue that the regression costs can be maintained efficiently. 

Before delving into the technical details, we give an overview of the algorithm.

\paragraph{Overview of the algorithm.} 
We start by assuming that the algorithm has access to $\mathcal{D}_{LS}$ (via \autoref{lem:dynsamp}), the   row leverage score sampling data structure for $\bA$. 
In preprocessing, the algorithm samples a leverage score sketching matrix $\bS \in \R^{k \times n}$ from  $\mathcal{D}_{LS}$, where $k= \mathcal O({d\log d/\eps^2})$.  This  provides a $(1+\eps)$ $\ell_2$ subspace embedding for $\bA$.  Standard results in sketching imply that it suffices to solve for the sketched objective of  $\min_{\bx \in \R^{d}}\|\bS\bA \bx -\bS \bb\|_2^2$ \cite{sarlos2006improved,ClarksonW13,clarkson2017low, woodruff2014sketching}. 
Let $\widehat \bA = \bS\bA$. 
Then a $(1+\eps)$ optimal solution is thus given by $\widehat{\bA}^\dagger \bb$. 
Moreover, our goal is to maintain the regression cost, rather than this solution vector. 
Hence, we  can apply Johnson–Lindenstrauss lemma and focus on 
\begin{equation}\label{eqn:sketched-lsq}
    \min_{\bx \in \R^{d}}\left\| \bS\bA \bx -\bS \bb\right\|_2^2 \approx \left\|\bG \bA (\bS\bA)^\dagger\bS \bb -\bG  \bb\right\|_2^2,
\end{equation}
where $\bG \in \R^{\mathcal O(\log n/\eps^2) \times n}$ is a JL sketch.

Next, we describe how to track the cost value dynamically.
We stress that the sketching matrices $\bS$ and $\bG$ are sampled upfront in the preprocessing  stage and remain fixed afterwards. 
The algorithm  stores  $\bG$ and  $\bM = \bG  \bA (\bS\bA)^\dagger$, both computed in preprocessing.
Meanwhile, it maintains $\bG\bb^{(i)},\bS\bb^{(i)}$, initialized at $i=1$. 
In the first step, given the initial target label $\bb^{(1)}$, the algorithm computes $\bS\bb^{(i)}$, $\bM \left(\bS \bb^{(1)}\right)$ and $\bG  \bb^{(1)}$. Then it outputs $\widehat F_1 = \left\|\bM\bS \bb^{(1)}- \bG  \bb^{(1)}\right\|_2^2$ as an estimate of the regression cost. 

Let's consider the $i$-th step, where the label is updated to $\bb^{(i)}$. 
First, we read the $K$ labels that get changed and update $\bS\bb^{(i-1)}$ to $\bS\bb^{(i)}$ accordingly. This can be done in $\mathcal{O}(K)$ time. Finally, we simply compute $\bM (\bS \bb^{(i)})$ and $\bG  \bb^{(i)}$ and output $\widehat F_i = \left\|\bM\bS \bb^{(i)}- \bG  \bb^{(i)}\right\|_2^2$. We store $\bG\bb^{(i)}$ for the next iteration.

We now describe the algorithm formally, followed by an analysis of its run-time and accuracy.

\paragraph{Formal description of the algorithm.}
We assume $\mathcal{D}_{LS}$ for $\bA$ is given.
The data structure is initialized by drawing the sketching matrices $\bG$ and $\bS$. 
We also compute 
    $\bM = \bG \bS\bA (\bS\bA)^\dagger$ in preprocessing.
   This matrix is   stored explicitly throughout.

\begin{algorithm}[!htb]
\caption{Initialize the data structure, i.e., preprocessing}
\alglab{alg:init-reg}
\begin{algorithmic}[1]
\Require{Design matrix $\bA \in \R^{n\times d}$, initial label $\bb^{(1)}\in \R^{n}$, $\mathcal{D}_{LS}$, $\eps \in (0,1)$} 
\Ensure{Preprocessing matrix $\bM$}
\State{Let $k =\Theta \left(d \log d  /\eps^{2} \right)$}
\State{Sample a $(1+\eps/2)$  $\ell_2$ leverage score row sampling matrix $\bS\in \R^{k\times n}$ for $\bA$ from $\mathcal{D}_{LS}$.} 
\State{Sample a JL sketch matrix $\bG  \in \R^{C\eps^{-2}\log n\times n}$, for a sufficiently large $C$, by \thmref{thm:jl}.}
\State{Compute  and store  $\bM = \bG  \bA (\bS\bA)^\dagger$.}
\end{algorithmic}
\end{algorithm}

At each   step,   the algorithm computes $\bS\bb^{(i)}$ by reading all $K$ entries of $\bb^{i-1}$ that are updated in the step. After that, compute 
$\bM(\bS \bb^{(i)} )$ and $\bG\bb^{(i)}$ and output $\left\|\bM \bb^{(i)}- \bG\bb^{(i)}\right\|_2^2$. The algorithm is formally given by \algref{alg:update-reg}. 
 
\begin{algorithm}[!htb]\caption{Update  data structure and maintain regression cost}
\alglab{alg:update-reg}
\begin{algorithmic}[1]
\Require{Matrices $\bG \in \R^{C\eps^{-2}\log n\times n}, \bS\in\R^{k\times n}, \bM \in \R^{\tO{1/\eps^2} \times k}$ and   the   label $\bb^{(i)}$}
\Ensure{Estimate of the regression cost $F\left(\bA ,\bb^{(i)} \right)$}
 
\State{Compute $\bS\bb^{(i)}$ by reading all $K$ entries of $\bb^{(i-1)}$ that are updated.}
\State{Compute  
$\bM \left(\bS \bb^{(i)}\right)$ and $\bG  \bb^{(i)}$.}
\Comment{Store $\bM\bS  \bb^{(i)},\bS  \bb^{(i)}$, $\bG  \bb^{(i)}$ for the next round.}
\State{Output $\widehat{F}_i = \left\|\bM \bS\bb^{(i)}- \bG \bb^{(i)}\right\|_2^2$.}
\end{algorithmic}
\end{algorithm}

\paragraph{Analysis of the algorithm.} We now analyze the run-time of the algorithm. First, consider the preprocessing stage performed by \algref{alg:init-reg}.

\begin{lemma}[Preprocessing time]
\label{lem:preproc-time}
Assuming access to the leverage score sampling data structure $\mathcal{D}_{LS}$, the preprocessing time of   \algref{alg:init-reg} is 
\begin{equation}
\O{\sqrt{\kappa (\bA)} \nnz(\bA)\log\frac{1}{\eps} +\frac{\nnz(\bA)}{\eps^2}\log n+\frac{d}{\eps^2}\log n}.
\end{equation}
\end{lemma}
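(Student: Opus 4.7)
The plan is to decompose \algref{alg:init-reg} into its four operations---sampling $\bS$, representing $\bG$, forming $\bG\bA$, and forming $\bM = \bG\bA(\bS\bA)^{\dagger}$---and bound each in turn. By \lemref{lem:dynsamp}, drawing $\bS$ with $k = \Theta(d\log d/\eps^2)$ rows from $\mathcal{D}_{LS}$ costs $\O{k\log(kn)} = \tO{d/\eps^2}$, and representing the fast JL sketch $\bG\in\R^{m\times n}$ with $m = \O{\eps^{-2}\log n}$ rows is essentially free at setup; these contribute only to the $d\log n/\eps^2$ term of the target bound.

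Forming $\bG\bA$ takes a single pass over the nonzeros of $\bA$: for each nonzero $\bA_{ij}$, add a scaled copy of the $i$-th column of $\bG$ into the $j$-th column of the accumulator at cost $\O{m}$. This gives $\O{m\cdot\nnz(\bA)} = \O{\nnz(\bA)\log n/\eps^2}$, which matches the second term.

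The dominant cost is the computation of $\bM = (\bG\bA)(\bS\bA)^{\dagger}$. Because $\bS$ is a $(1\pm\eps/2)$ $\ell_2$-subspace embedding for $\bA$ by \lemref{lem:lev-sub-emb}, $\bS\bA$ has full column rank and $(\bS\bA)^{\dagger} = ((\bS\bA)^{\top}(\bS\bA))^{-1}(\bS\bA)^{\top}$. I plan to compute $\bM$ via conjugate-gradient-type iterative solves: for each row $\bg_i^{\top}\bA$ of $\bG\bA$, solve $(\bS\bA)^{\top}(\bS\bA)\bv_i = \bA^{\top}\bg_i$ and read off the $i$-th row of $\bM$ as $\bv_i^{\top}(\bS\bA)^{\top}$. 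Because $\bS$ is a row sampling matrix, each matrix-vector multiply with $\bS\bA$ or $(\bS\bA)^{\top}$ costs only $\O{\nnz(\bS\bA)}\le\O{\nnz(\bA)}$. Crucially, by leveraging $(\bS\bA)^{\top}(\bS\bA)\approx\bA^{\top}\bA$ to construct a cheap preconditioner (e.g., an approximate factorization of $\bS\bA$ derived without a full $\O{kd^2}$ QR), the effective condition number is reduced to $\O{\kappa(\bA)}$, yielding an iteration count of $\O{\sqrt{\kappa(\bA)}\log(1/\eps)}$. Summing the iterative matrix-vector work gives the first term $\O{\sqrt{\kappa(\bA)}\nnz(\bA)\log(1/\eps)}$, while the per-iteration $\O{d}$-vector overhead (preconditioner application, forming right-hand sides) is absorbed into the remaining $d\log n/\eps^2$ budget.

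The main obstacle will be obtaining the $\O{\sqrt{\kappa(\bA)}}$ rather than $\O{\kappa(\bA)}$ iteration count: a direct application of CG to the normal equations only gives $\O{\kappa(\bA)}$ steps since $\kappa((\bS\bA)^{\top}(\bS\bA)) = \Theta(\kappa(\bA)^2)$. The analysis therefore must pair the subspace embedding guarantee with a preconditioner that is simultaneously cheap to construct (fitting inside the $\nnz(\bA)\log n/\eps^2 + d\log n/\eps^2$ setup budget) and strong enough to reduce the preconditioned condition number to $\O{\kappa(\bA)}$, so that the $\sqrt{\kappa(\bA)}$ CG contraction applies and the total iterative cost fits inside the stated bound.
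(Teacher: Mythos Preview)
Your decomposition and overall strategy match the paper's proof exactly: bound the sampling of $\bS$ via \lemref{lem:dynsamp}, the formation of $\bG$, the sparse product $\bG\bA$ in $\O{\nnz(\bA)\eps^{-2}\log n}$ time, and then compute $\bM=(\bG\bA)(\bS\bA)^\dagger$ by reducing to $\O{\eps^{-2}\log n}$ linear solves against $\bS\bA$ handled by conjugate-gradient-type iterations.

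The one substantive difference is in how the iteration count is justified. The paper does \emph{not} introduce a preconditioner: it simply observes that $\kappa(\bS\bA)=(1\pm\eps)\kappa(\bA)$ from the subspace-embedding property and then directly asserts, citing Golub--Van Loan, that each solve takes $\O{\sqrt{\kappa(\bA)}\log(1/\eps)}$ matrix-vector products with $\bA$. Your concern that naive CG on the normal equations sees $\kappa((\bS\bA)^\top(\bS\bA))=\Theta(\kappa(\bA)^2)$ and hence yields only $\O{\kappa(\bA)}$ iterations is legitimate, but the paper sidesteps it rather than resolving it. Your proposed fix---a cheap preconditioner built from the relation $(\bS\bA)^\top(\bS\bA)\approx\bA^\top\bA$---is not actually specified, and as stated it is circular (you would need to approximately invert the very matrix you are trying to solve against). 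So your outline is at least as careful as the paper's, but the extra preconditioning machinery you sketch is both unnecessary to match the paper's argument and not concretely realized.
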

\begin{proof} 
By \autoref{lem:dynsamp}, the guarantee of the  sampling data structure $\mathcal{D}_{LS}$, it takes $\mathcal{O}(k \log(nd))$ time  to obtain a leverage score sample $\bS$ of size $k$. Drawing the JL sketch is straightforward, and standard constructions such as i.i.d.\ Gaussian entries require $\mathcal{O}(k\log n /\eps^2)$ times to form $\bG$.

Finally, we need to compute $\bM$. 
Computing $\bG\bA$ requires $\O{\frac{\nnz(\bA)}{\eps^2}\log n}$ time by sparse matrix multiplication. 
Moreover, since $\bG\bA$ is a matrix of $\O{\frac{\log n}{\eps^2}}$ rows, then computing $(\bG\bA) (\bS\bA)^\dagger$ reduces to $\O{\frac{\log n}{\eps^2}}$ number of linear system solves with respect to $\bS\bA \in \R^{k\times d}$. 
By conjugate gradient type methods, since $\kappa(\bS\bA)  = (1\pm\eps) \kappa(\bA)$, each solve can be achieved to high accuracy in $\mathcal O\left({\sqrt{\kappa(\bA)} \log (1/\eps)}\right)$ number of matrix-vector products with respect to $\bA$ \cite{golub2013matrix}.
In total, this gives a run-time of $\O{\sqrt{\kappa (\bA)} \nnz(\bA)\log(1/\eps)} $.
\end{proof}

\begin{lemma}[Update time]
\label{lem:update-time}
The update time of   \algref{alg:update-reg} is $\O{\frac{K}{\eps^2}\log n}$ per step.
\end{lemma}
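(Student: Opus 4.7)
The plan is to account for the work of \algref{alg:update-reg} step by step, using the fact that the per-round change $\Delta\bb := \bb^{(i)}-\bb^{(i-1)}$ has at most $K$ nonzero entries. The key point is that all three quantities $\bS\bb^{(i)},\bG\bb^{(i)},\bM\bS\bb^{(i)}$ that we want at round $i$ were computed and stored at round $i-1$, so each round only needs to apply a linear update using $\Delta\bb$.

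First, I would maintain $\bS\bb^{(i)}$ incrementally. Since the leverage score sampling matrix from \Cref{lem:lv-compute} is diagonal (each row of $\bS$ has a single nonzero, and each column is hit at most once), the $K$ coordinate changes in $\Delta\bb$ map to at most $K$ coordinate changes in $\bS\bb^{(i)}$, which can be applied in $\O{K}$ time; simultaneously, this produces the sparse vector $\bS\Delta\bb$ of support size $\le K$ for use in the next step. Second, maintaining $\bM\bS\bb^{(i)}$ reduces to the rank-one-style update $\bM\bS\bb^{(i)} = \bM\bS\bb^{(i-1)} + \bM(\bS\Delta\bb)$; since $\bS\Delta\bb$ has at most $K$ nonzeros and $\bM$ has $\O{\eps^{-2}\log n}$ rows, forming $\bM(\bS\Delta\bb)$ costs $\O{K\eps^{-2}\log n}$ time by reading the corresponding $K$ columns of $\bM$. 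Third, the JL image is updated similarly: $\bG\bb^{(i)} = \bG\bb^{(i-1)} + \bG\Delta\bb$, and since $\bG$ has $\O{\eps^{-2}\log n}$ rows, computing $\bG\Delta\bb$ from the $K$ nonzero entries of $\Delta\bb$ also costs $\O{K\eps^{-2}\log n}$.

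Finally, to output $\widehat F_i = \|\bM\bS\bb^{(i)}-\bG\bb^{(i)}\|_2^2$, I subtract two vectors each of length $\O{\eps^{-2}\log n}$ and take the squared Euclidean norm, in $\O{\eps^{-2}\log n}$ additional time. Summing these contributions yields the claimed $\O{K\eps^{-2}\log n}$ per-step bound.

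I do not anticipate any real obstacle here: the entire argument is bookkeeping, and the only non-trivial observation is that the Bernoulli-style leverage score sampling matrix of \Cref{lem:lv-compute} has at most one nonzero per column, so updating $\bS\bb$ under a $K$-sparse change to $\bb$ preserves $K$-sparsity of the induced change. Once that is in place, all three stored quantities can be refreshed by sparse matrix-vector products against $\Delta\bb$ or $\bS\Delta\bb$, each of which has at most $K$ nonzeros, and the dense matrices $\bG$ and $\bM$ both have only $\O{\eps^{-2}\log n}$ rows. No amortization is required for this lemma; the bound holds in the worst case per round.
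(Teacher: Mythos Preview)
Your proposal is correct and follows essentially the same approach as the paper's proof: exploit that $\bS$ is a row sampling matrix so a $K$-sparse change in $\bb$ induces an at most $K$-sparse change in $\bS\bb$, and then update each of $\bS\bb^{(i)}$, $\bG\bb^{(i)}$, and $\bM\bS\bb^{(i)}$ incrementally using the $\O{\eps^{-2}\log n}$ row dimension of $\bG$ and $\bM$. You are in fact slightly more explicit than the paper (e.g., you account for the final $\O{\eps^{-2}\log n}$ norm computation), but the argument is the same bookkeeping.
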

\begin{proof}
First, the algorithm reads the $K$ entries that are updated and compute the  $\bS\bb^{(i)}$ from $\bS\bb^{(i-1)}$. This step takes $\mathcal{O}(K)$ time, since we just need to update the entries that lie in the support of the row sampling matrix $\bS$. 
Similarly, in step 2 of  \algref{alg:update-reg}  we can update $\bG\bb^{(i-1)}$ to $\bG\bb^{(i)}$ in $\mathcal{O}(K\log n /\eps^2)$ time.
Since $\bS$ is a row sampling matrix and $\bb^{(i)}$ only has $K$ entries updated, then $\bS\bb^{(i)}$ has at most $K$ entries updated as well.
It follows that given  $\bM \left(\bS \bb^{(i-1)}\right)$ from the prior round, $\bM \left(\bS \bb^{(i)}\right)$ can be updated in $\O{\frac{K}{\eps^2}\log n}$ time.
\end{proof}

\begin{lemma}[Accuracy]
    Given a stream of $T= \mathcal{O}(d^2)$ non-adaptive updates and error parameter $\eps \in (0,1/4)$,
    \algref{alg:update-reg} outputs an estimate $\widehat F_i$ of  the regression cost $F(\bA, \bb^{(i)})$ such that $\widehat F_i = (1\pm \eps) F(\bA, \bb^{(i)})$ for all $i$ with high probability. 
\end{lemma}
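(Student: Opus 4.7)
The plan is to combine the subspace-embedding guarantee for $\bS$, the standard sketch-and-solve reduction for least-squares, and the norm-preservation guarantee of the JL sketch $\bG$, and then use a union bound over the $T$ non-adaptive updates.

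First I would unfold what the algorithm is actually computing. By construction, $\bM = \bG\bA(\bS\bA)^\dagger$, so
\[
\widehat F_i \;=\; \lprp{\lVert \bM \bS\bb^{(i)} - \bG \bb^{(i)} \rVert_2^2} \;=\; \lVert \bG\bigl(\bA\widetilde\bx^{(i)} - \bb^{(i)}\bigr)\rVert_2^2, \qquad \widetilde\bx^{(i)} := (\bS\bA)^\dagger \bS\bb^{(i)}.
\]
So the task splits cleanly into: (a) show that the residual $\lVert \bA\widetilde\bx^{(i)} - \bb^{(i)} \rVert_2^2$ is a $(1\pm\eps)$-approximation of $F(\bA,\bb^{(i)})$, and (b) show that the JL sketch $\bG$ preserves the norm of this residual up to $(1\pm\eps)$ for every $i\in[T]$ simultaneously.

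For (a), I would invoke \autoref{lem:lev-sub-emb}: with high probability the leverage-score sample $\bS$ is a $(1\pm\eps/2)$ subspace embedding for $\bA$. Conditioning on this event, the standard sketch-and-solve argument (cf.\ Section 2 of \cite{woodruff2014sketching}) gives that $\widetilde\bx^{(i)} = (\bS\bA)^\dagger\bS\bb^{(i)}$ satisfies $\lVert \bA\widetilde\bx^{(i)} - \bb^{(i)}\rVert_2^2 = (1\pm\O{\eps})\,F(\bA,\bb^{(i)})$ for every $\bb^{(i)}$. Crucially this bound is deterministic in $\bb^{(i)}$ once $\bS$ is a subspace embedding for $\bA$, so it holds for the entire sequence simultaneously with no extra union bound.

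For (b), fix any one $i\in[T]$ and consider $\bv_i := \bA\widetilde\bx^{(i)} - \bb^{(i)}$. Since the updates are non-adaptive, the sequence $\bv_1,\ldots,\bv_T$ is determined by $\bA$, $\bS$, and the prescribed stream, and is independent of the JL matrix $\bG$. By \autoref{thm:jl} with $m = C\eps^{-2}\log n$ rows (for $C$ large enough), $\bG$ preserves the norm of any fixed vector to within $(1\pm\eps)$ with probability $1 - n^{-c}$ for any desired constant $c$. Taking $c$ large enough and union-bounding over the $T = \O{d^2} \le \O{n^2}$ fixed vectors $\bv_1,\ldots,\bv_T$ yields, with high probability, $\lVert \bG\bv_i\rVert_2^2 = (1\pm\eps)\lVert \bv_i\rVert_2^2$ for all $i$ simultaneously. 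Combining (a) and (b) and rescaling $\eps$ by a constant factor gives $\widehat F_i = (1\pm\eps)F(\bA,\bb^{(i)})$ for all $i\in[T]$ with high probability.

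The only mildly delicate point, and the reason the lemma is restricted to non-adaptive streams, is step (b): the union bound over the $T$ residuals requires that each $\bv_i$ be fixed relative to the randomness of $\bG$. If the adversary were allowed to choose $\bb^{(i)}$ based on previous outputs, then $\bv_i$ would depend on $\bG$ and the JL guarantee would no longer apply to a fixed vector. This is exactly the gap that the subsequent adaptive section closes via the differential-privacy aggregation argument.
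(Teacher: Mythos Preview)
Your proposal is correct and follows essentially the same two-step structure as the paper's proof: use the subspace-embedding/sketch-and-solve guarantee for $\bS$ to show $\lVert \bA\widetilde\bx^{(i)} - \bb^{(i)}\rVert_2^2 \approx F(\bA,\bb^{(i)})$, then union-bound the JL guarantee for $\bG$ over the $T=\O{d^2}$ fixed residual vectors. If anything, you are slightly more explicit than the paper about why the residuals are independent of $\bG$ and why the union bound is valid only in the non-adaptive setting; the one small caveat is that the claim ``deterministic in $\bb^{(i)}$ once $\bS$ is a subspace embedding for $\bA$'' leans on the full sketch-and-solve theorem (which also uses an approximate-matrix-multiplication property of leverage-score sampling), but the paper makes the same implicit appeal to standard results.
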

\begin{proof}
First, we apply  the subspace embedding property of $\bS$. 
 This implies that with high probability,
    \begin{equation*}
        \min_{\bx}\left\| \bS \bA \bx - \bS\bb^{(i)}\right\|^2_2  = (1\pm \eps/2)   \min_{\bx}\left\|  \bA \bx - \bb^{(i)}\right\|^2_2.
    \end{equation*}
      Apply the JL lemma (\thmref{thm:jl}), where we consider the collection of $\O{d^2}$ $(1+\eps)$ optimal predictions $\{\by_i^*\}_{i=1}^T$ with $\by_i^* = \bA (\bS\bA)^{\dagger} \bb^{(i)}$.  
    Via union bound, we have  that with high probability for all $i \in [T]$
    \begin{equation*}
        \left\|\bG\by_i^* - \bG \bb^{(i)}\right\|_2^2 = (1\pm \eps /2)  \left\|\by_i^* -   \bb^{(i)}\right\|_2^2.
    \end{equation*}
    Our algorithm precisely solves for $\by_i^*$ each iteration. Combining the   two equations above finishes the proof.
\end{proof}

\subsection{Dynamic Algorithm with Adversarial Robustness}
\seclab{sec:dynamic-ada-reg}
To put everything together and ensure adversarial robustness, we use a standard approach of \cite{HassidimKMMS20,BeimelKMNSS22,AttiasCSS23}. 
Our full algorithm maintains $\Gamma = \O{\sqrt{T} \log(nT)}$ independent copies of the key subroutine for  $T=\O{\frac{\nnz(\bA)}{\eps^2 K}}$. 
Then at each step, we output the private median of the outputs of these copies. 
Advanced composition of DP ensures robustness up to $T$ rounds. 
Afterwards, the algorithm reboots by rebuilding the copies, using fresh randomness independently for sampling and computing the sketching matrices.

\begin{algorithm}[!htb]
\caption{Preprocessing  step for \algref{alg:median-reg}}
\alglab{alg:median-reg-prep}
\begin{algorithmic}[1]
\Require{A design matrix $\bA \in \R^{n\times d}$, an approximation  factor $\eps \in (0,1)$.}
\Ensure{The leverage score sampling data structure $\mathcal{D}_{LS}$ for $\bA$.}  
\State{Compute the approximate row leverage scores of $\bA$.} 
\Comment{\autoref{lem:dynsamp}}
\State{Build and output the data structure $\mathcal{D}_{LS}$}
\end{algorithmic}
\end{algorithm}

\begin{algorithm}[!htb]
\caption{Dynamic algorithm for maintaining regression cost under adaptive updates}
\alglab{alg:median-reg}
\begin{algorithmic}[1]
\Require{A sequence of target labels $\left\{\bb^{(i)}\right\}_{i=1}^m$ and a fixed design matrix $\bA \in \R^{n\times d}$, an approximation  factor $\eps \in (0,1)$, the leverage score sampling data structure $\mathcal{D}_{LS}$ for $\bA$.}
\Ensure{Estimates of the regression cost $F(\bA ,\bb^{(i)} )$ under adaptively chosen updates to $\bb$.}  
\For{every epoch of $T=\O{\frac{\nnz(\bA)}{\eps^2 K}}$ updates}
\State{Initialize $\Gamma = \mathcal O\left(\sqrt{T} \log {(nT)}\right)$ independent instances of the data structure in   \secref{sec:dynamic-ob-reg} via \algref{alg:init-reg}.}
\State{Run \textsf{PrivMed} on the $\Gamma$ instances with privacy parameter $\eps' = \mathcal{O} \left(\frac{1}{\sqrt{T}\log (nT)}\right)$ with failure probability $\delta = \frac{1}{\text{poly}(m,T)}$.}
\State{For each query, return the output of  \textsf{PrivMed}.}
\EndFor
\end{algorithmic}
\end{algorithm}

\begin{restatable}{theorem}{thmmaindyreg}[Main theorem; dynamic maintenance of regression cost]
\label{thm:main-dy-reg}
Let $\eps \in (0,1/4)$ be an   error parameter and $\bb^{(1)}$ be the initial target label. 
Given  $\eps, \bA, \bb^{(1)}$, a stream of $T$  adaptively chosen, $K$-sparse updates to the  label, \algref{alg:median-reg} outputs an estimate $\widehat F_i $ such that $\widehat F_i = (1\pm\eps) F(\bA , \bb^{(i)})$ for all $i$ with high probability. 
 
Furthermore, the algorithm requires a preprocessing step in time $\tO{\nnz(\bA) + \poly(d)}$. The amortized update time of the algorithm is 
\[ \tO{  \sqrt{K\nnz(\bA) }  \left(\sqrt{\kappa(\bA)} + \eps^{-3}\right)  }\] 
per round.
\end{restatable}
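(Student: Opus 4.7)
The argument splits into three pieces: correctness under adaptivity, preprocessing time, and amortized update time. For correctness, I follow the differential-privacy template of \cite{HassidimKMMS20,BeimelKMNSS22,AttiasCSS23}. Within each epoch of $T=\tO{\nnz(\bA)/(K\eps^2)}$ rounds, the algorithm maintains $\Gamma=\tO{\sqrt{T}}$ independent copies of the oblivious subroutine from \secref{sec:dynamic-ob-reg}. On a non-adaptively chosen sequence of length $T$, each copy is $(1\pm\eps)$-accurate at every step with high probability (inflating the JL dimension of each copy by a $\log(nT)$ factor to union-bound over the $T$ queries affects only polylogarithmic factors). Each query is answered by running $\privmed$ on the $\Gamma$ current estimates with privacy parameter $\eps'=\Theta(1/(\sqrt{T}\log(nT)))$ and failure probability $\delta=1/\poly(n,T)$.

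By advanced composition (\thmref{thm:adaptive:queries}) across the $T$ $\privmed$ calls within an epoch, the full transcript of responses is $(O(1),o(1))$-differentially private with respect to the internal randomness of the $\Gamma$ instances. The generalization property of DP (\thmref{thm:generalization}) then transfers per-instance oblivious accuracy to the adaptive setting: conditional on the adversary's transcript, the fraction of instances currently producing a $(1\pm\eps)$-estimate is close to its marginal value, so at each round strictly more than $\Gamma/2+k$ copies are accurate, where $k=\O{\eps'^{-1}\log(\Gamma/\delta)}$. \thmref{thm:dp:median} then certifies that $\privmed$ outputs a $(1\pm\eps)$-approximation of $F(\bA,\bb^{(i)})$. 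Taking a union bound over all $T$ rounds within an epoch and across epochs gives correctness with high probability.

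For runtime, the preprocessing step is dominated by building $\mathcal{D}_{LS}$, which costs $\tO{\nnz(\bA)+\poly(d)}$ by \lemref{lem:dynsamp}. At the start of each epoch, the algorithm spins up $\Gamma$ independent instances, each requiring $\tO{\sqrt{\kappa(\bA)}\nnz(\bA)+\nnz(\bA)/\eps^2}$ time by \lemref{lem:preproc-time}; amortized over the $T$ updates of the epoch this contributes $\tO{\sqrt{\kappa(\bA)}\nnz(\bA)/\sqrt{T}+\nnz(\bA)/(\eps^2\sqrt{T})}$ per round. The per-query cost is dominated by refreshing all $\Gamma$ instances via \lemref{lem:update-time} and invoking $\privmed$, totaling $\tO{\sqrt{T}\cdot K/\eps^2}$ per round. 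Plugging in $T=\nnz(\bA)/(K\eps^2)$ and collecting terms yields per-query update $\tO{\sqrt{K\nnz(\bA)}/\eps^3}$, amortized $\eps^{-2}$-preprocessing $\tO{\sqrt{K\nnz(\bA)}/\eps}$, and amortized $\sqrt{\kappa(\bA)}$-preprocessing $\tO{\eps\sqrt{\kappa(\bA)K\nnz(\bA)}}$; since $\eps\le1/4$ the claimed amortized bound $\tO{\sqrt{K\nnz(\bA)}(\sqrt{\kappa(\bA)}+1/\eps^3)}$ follows.

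The main obstacle is calibrating the DP parameters. We need $\eps'$ small enough that advanced composition over $T$ rounds yields constant total privacy loss, forcing $\eps'=\tO{1/\sqrt{T}}$; simultaneously $\privmed$ tolerates only $k=\O{\eps'^{-1}\log(\Gamma/\delta)}=\tO{\sqrt{T}}$ incorrect copies, so a robust majority requires $\Gamma=\tO{\sqrt{T}}$ total instances. This pair of constraints dictates the $\sqrt{T}$ overhead in both space and per-query time, and the epoch length $T=\nnz(\bA)/(K\eps^2)$ is precisely what balances the amortized rebuild cost against the per-query update cost. A secondary point is that the oblivious accuracy lemma was stated for $T=\O{d^2}$; we simply increase the JL sketch dimension to $\O{\log(nT)/\eps^2}$ inside each instance, which is a polylogarithmic overhead and does not change the leading-order bounds.
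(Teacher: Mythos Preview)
Your proposal is correct and follows essentially the same approach as the paper: both use the \cite{HassidimKMMS20,BeimelKMNSS22,AttiasCSS23} template of maintaining $\Gamma=\tO{\sqrt{T}}$ independent copies of the oblivious subroutine per epoch, aggregating via $\privmed$ with $\eps'=\Theta(1/(\sqrt{T}\log(nT)))$, invoking advanced composition to bound the privacy loss of the full transcript, applying DP generalization to certify that a strict majority of instances are accurate at every round, and then amortizing the $\Gamma\beta$ rebuild cost against the $\Gamma\cdot\tO{K/\eps^2}$ per-round update cost with epoch length $T=\nnz(\bA)/(K\eps^2)$. Your side remarks on calibrating $\eps'$ versus $\Gamma$ and on boosting the JL dimension to handle $T$ beyond $\O{d^2}$ are accurate refinements that the paper glosses over but do not change the argument.
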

\begin{proof}
We focus on any fixed epoch of $T$ iterations.  Let $\{\calA_i\}_{i=1}^\Gamma$ be the collection of $\Gamma$ data structures  maintained by the \algref{alg:median-reg} and $\calT_i$ be the transcript between     \algref{alg:median-reg}  and the adversary at round $i$, consisting of the algorithm's output and the update requested by the adversary.

To handle a sequence of $T$ adaptive queries, consider the transcript $\calT(R)=\{\calT_1,\ldots,\calT_T\}$, where $R$ denotes the internal randomness of \algref{alg:median-reg}.
Note that for a fixed iteration, $\calT_i$ is $\left(\O{\frac{1}{\sqrt{T}\log(nT)}},0\right)$-differentially private with respect to the algorithms $\calA_1,\ldots,\calA_\Gamma$, since the private median algorithm $\privmed$ is $\left(\O{\frac{1}{\sqrt{T}\log(nT)}},0\right)$-differentially private. 
By the advanced composition of differential privacy, i.e., \thmref{thm:adaptive:queries}, the transcript $\calT$ is $\left(\O{1},\frac{1}{\poly(n)}\right)$-differentially private with respect to the algorithms $\calA_1,\ldots,\calA_\Gamma$.

\algref{alg:median-reg} runs $\Gamma$ instances of the data structure with error parameter $\eps$. For any given round $i \in [T]$, we say that an instance $j \in [\Gamma]$ is correct if its output $f_{i,j}$ is within a $(1\pm \eps)$ factor of $F(\bA, \bb^{(i)})$ and incorrect otherwise. For a fixed $i$, let $Y_{j}$ be the indicator variable for whether $f_{i,j}$ is correct. 

From the generalization properties of differential privacy, i.e., \thmref{thm:generalization}, we have that for any fixed iteration $i$,
\[\PPr{\left|\frac{1}{\Gamma}\sum_{j\in[\Gamma]}Y_j -\Ex{Y}\right|\ge\frac{1}{10}}<\frac{1}{\poly(m,T)},\]
where $Y$ denotes the indicator random variable for whether a random instance of the algorithm $\calA$ (not necessarily restricted to the $m$ instances maintained by the algorithm) is correct at the given round $i$. 
Since a random instance $\calA$ has randomness that is independent of the adaptive update, then $\Ex{Y}\ge\frac{3}{4}$.
Therefore, by a union bound over all $T$ rounds, we have
\[\PPr{\frac{1}{\Gamma}\sum_{i\in[\Gamma]} Y_i>0.6}>1-\frac{1}{\poly(m,T)},\]
which implies that the output on the $i$th round  is correct  with probability at least $1-\frac{1}{\poly(m,T)}$, since $T=d$. 
Then by a union bound over $i\in[T]$ for all $T$ rounds within an epoch, we have that the data structure answers all $T$ queries with probability $1-\frac{1}{m^2}$, under the adaptively chosen updates.
Finally, by a union bound over all $m$ updates, we have that the algorithm succeeds with probability at least $1-\frac{1}{m}$. 



We now analyze the run-time of the algorithm. 
The preprocessing time follows from the guarantee of \autoref{lem:dynsamp}.
For update time, we amortize over each epoch.  Within an epoch, we invoke $\Gamma=\O{\sqrt{T}\log(nT)}$ copies of the data structure in \secref{sec:dynamic-ob-reg}, and so we consider the preprocessing and update time from there and amortize over the epoch length $T$.
By \autoref{lem:preproc-time}, each copy takes    $ \beta= \O{\sqrt{\kappa (\bA)} \nnz(\bA)\log\frac{1}{\eps} +\frac{\nnz(\bA)}{\eps^2}\log n+\frac{d}{\eps^2}\log n}$ time to pre-process. 
For every step of update, each copy takes $\O{\frac{K }{\eps^{2}}\log n}$  time by \autoref{lem:update-time}. 
Therefore, the amortized update time for every epoch of length $T=\O{\frac{\nnz(\bA)}{\eps^2 K}}$ is
\begin{align*}
\O{ \frac{1}{T}  \left( \Gamma\beta +  \Gamma T\left(\frac{K}{\eps^2}\log n\right)  \right)  }     = \tO{  \sqrt{K\nnz(\bA) }  \left(\sqrt{\kappa(\bA)} + \eps^{-3}\right)  }.
\end{align*}
This completes the proof.
\end{proof}

 \subsection{An Exact and Deterministic Algorithm}
We now give a simple deterministic algorithm for the dynamic regression problem based on an SVD  trick. 
Let  $\bA = \bU \bSig\bV^\top$ be the SVD of $\bA$, where $\bU \in \R^{n\times d}, \bSig \in \R^{d\times d}$ and $ \bV\in\R^{d\times d}$.
The starting observation is that for any solution vector  $\bx$, we can write the regression cost as
\begin{align}\label{eqn:svd-trick}
    \left\|\bA \bx -\bb\right\| = \left\|\bU \bSig\bV^\top \bx -\bb\right\| = \left\|\bSig\bV^\top \bx - \bU^\top\bb\right\|,
\end{align}
 since $\bU$ is orthonormal. 
 The goal is the maintain the solution vector $ \bx = \bA^{\dagger} \bb$ and the associated right-side quantity $\left\|\bSig\bV^\top \bx -  \bU^\top\bb\right\|$.
 
Now suppose we compute $\bA^{\dagger} \in \R^{d\times n}$ and $\bU^\top \in \R^{d\times n}$ in pre-processing, and $\bA^{\dagger} \bb^{(1)}$ and $\bU^\top \bb^{(1)}$ in the first round. Then since  all subsequent updates to $\bb$ are all $K$-sparse, we only pay $\mathcal{O}(dK)$ time per step to maintain $\bA^{\dagger} \bb^{(i)}$ and $\bU^\top \bb^{(i)}$.
 
 \begin{algorithm}[!htb]
\caption{A simple SVD-based algorithm for dynamic regression}
\alglab{alg:det-reg1}
\begin{algorithmic}[1]
\Require{Design matrix $\bA \in \R^{n\times d}$, its pseudoinverse $\bA^{\dagger} \in \R^{d \times n}$ and its SVD $\bA = \bU \bSig\bV^\top$,  a sequence of  labels $\bb^{(i)}\in \R^{n}$} 
\State{Compute and store SVD $\bA = \bU \bSig\bV^\top$, where $\bU \in \R^{n\times d}, \bSig \in \R^{d\times d}, \bV\in\R^{d\times d}$}
\State{Compute and store $\bA^{\dagger}$ from the SVD.} 
\Comment{In the $1$st-round, compute  and store $\bA^{\dagger}\bb^{(1)}, \bU^\top \bb^{(1)}$.}
\For{each update $\bb^{(i)}$}
\State{Update and store $\bx^{(i)} = \bA^\dagger \bb^{(i)}$}
\State{Update and store $\bU^\top \bb^{(i)}$}
\State{Output $ F_i = \left\|\bSig\bV^\top \bx^{(i)} - \bU^\top \bb^{(i)} \right\|_2^2$}
\EndFor
\end{algorithmic}
\end{algorithm}

The algorithm is formally given  by \algref{alg:det-reg1}. Observe that the algorithm always maintains the exact  optimal regression cost. Moreover, the procedure does not require any randomness, and therefore it is adversarially robust to adaptive inputs. We formally claim the following   guarantees of the algorithm.

\begin{theorem}[Deterministic maintenance of regression costs]
     Given  $\bA, \bb^{(1)}$ and a stream of  adaptively chosen, $K$-sparse updates to the  label, \algref{alg:det-reg1} takes $\mathcal{O}(dK)$ time to update and maintain the  exact   regression cost $F(\bA,\bb^{(i)})$ at all iterations $i$. The pre-processing requires an SVD of $\bA$, in $\mathcal{O}(n^2d)$ time.
\end{theorem}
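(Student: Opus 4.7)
The plan is to separately verify three claims: exact correctness of every round's output, $\O{n^2 d}$ preprocessing, and $\O{dK}$ per-round update time. Everything rests on two ingredients: the orthogonal-change-of-basis identity $\|\bA\bx - \bb\|_2 = \|\bSig\bV^\top\bx - \bU^\top\bb\|_2$ recorded in the overview of \algref{alg:det-reg1}, and the $K$-sparsity of successive label differences.

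For correctness, I would note that the stored iterate $\bx^{(i)} = \bA^\dagger\bb^{(i)}$ is an exact minimizer of $\|\bA\bx - \bb^{(i)}\|_2^2$, so substituting into the identity yields that the reported value $\|\bSig\bV^\top\bx^{(i)} - \bU^\top\bb^{(i)}\|_2^2$ equals $F(\bA, \bb^{(i)})$ on the nose. Adversarial robustness is then immediate: the algorithm uses no randomness whatsoever, so there is no hidden random state for the adversary to probe, and none of the \textsf{PrivMed} aggregation machinery from the earlier randomized sections is needed.

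For the preprocessing bound, I would invoke a standard SVD routine (e.g., Golub--Kahan bidiagonalization followed by iterative diagonalization) to produce $\bU, \bSig, \bV$ in $\O{n^2 d}$ time, and then read off $\bA^\dagger = \bV\bSig^{-1}\bU^\top$ in the subsumed cost $\O{n d^2}$. The first-round products $\bA^\dagger\bb^{(1)}$ and $\bU^\top\bb^{(1)}$ each cost $\O{nd}$, also absorbed.

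For the per-round time, I would exploit $K$-sparsity of $\Delta\bb^{(i)} := \bb^{(i)} - \bb^{(i-1)}$: the incremental updates $\bx^{(i)} = \bx^{(i-1)} + \bA^\dagger\Delta\bb^{(i)}$ and $\bU^\top\bb^{(i)} = \bU^\top\bb^{(i-1)} + \bU^\top\Delta\bb^{(i)}$ each read only $K$ columns of their respective $d\times n$ matrices, costing $\O{dK}$ arithmetic. The one subtlety---and the main (though modest) obstacle---is that a naive recomputation of $\bSig\bV^\top\bx^{(i)}$ per round would cost $\O{d^2}$; I would handle this by precomputing $\bM' := \bSig\bV^\top\bA^\dagger$ during setup (absorbed into $\O{n^2 d}$) and then maintaining $\bSig\bV^\top\bx^{(i)}$ incrementally via the $\O{dK}$-time update $\bSig\bV^\top\bx^{(i)} = \bSig\bV^\top\bx^{(i-1)} + \bM'\Delta\bb^{(i)}$. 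Evaluating the final squared norm is then $\O{d}$, yielding the claimed $\O{dK}$ per-step bound.
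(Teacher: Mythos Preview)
Your argument is essentially the paper's own justification, which appears as the prose surrounding \algref{alg:det-reg1} rather than a separate proof block: invoke the change-of-basis identity, note that $\bx^{(i)}=\bA^\dagger\bb^{(i)}$ is the exact minimizer, maintain $\bA^\dagger\bb^{(i)}$ and $\bU^\top\bb^{(i)}$ in $\O{dK}$ per step via the $K$-sparsity of $\Delta\bb^{(i)}$, and observe that determinism gives robustness for free. You are actually more careful than the paper on one point---it never explains how the output term $\bSig\bV^\top\bx^{(i)}$ is formed each round without an $\O{d^2}$ matrix--vector product, whereas you flag this and repair it by precomputing $\bM'=\bSig\bV^\top\bA^\dagger$ and updating incrementally. (If you carry that product through for full-column-rank $\bA$ you get $\bM'=\bU^\top$, so $\bSig\bV^\top\bx^{(i)}=\bU^\top\bb^{(i)}$ and the displayed output collapses to zero; this is a symptom of the thin-$\bU$ identity omitting the component of $\bb$ orthogonal to the range of $\bU$, an issue in the paper's formulation rather than in your reasoning.)
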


\section{Adaptive Distance Estimation}\seclab{sec:adaptive_distance_estimation}
In the adaptive distance estimation problem, there exists a set $X=\{\bx^{(1)},\ldots,\bx^{(n)}\}$ of $n$ points in $\mathbb{R}^d$. 
Given an accuracy parameter $\eps>0$, the goal is to output a $(1+\eps)$-approximation to $\|\bx^{(i)}-\bq\|_p$ for each query $\bq$ across all points $\bx^{(i)}\in X$, while minimizing the space, query time, or pre-processing time for the corresponding data structures. 
The trivial solution stores all $n$ points and computes all $n$ distances to each query point and thus can handle an unlimited number of queries. 
Since each point has dimension $d$, the trivial solution uses space and query time $\O{nd}$. 
\cite{CherapanamjeriN20} first improved the query time to $\tO{\frac{n+d}{\eps^2}}$ at the cost of using $\tO{\frac{(n+d)d}{\eps^2}}$ space and $\tO{\frac{nd^2}{\eps^2}}$ pre-processing time. 
Like the trivial solution, the algorithm of \cite{CherapanamjeriN20} also permits an arbitrary number of queries. 

In this section, we first apply our framework to show a data structure that can handle $Q$ queries of approximate distances from a \emph{specified point in $X$}, using query time $\tO{\frac{n+d}{\eps^2}}$, pre-processing time $\tO{\frac{nd\sqrt{Q}}{\eps^2}}$, and space $\tO{\frac{(n+d)\sqrt{Q}}{\eps^2}}$. 
Hence for $d\gg n\sqrt{Q}$, our data structure already improves on the work of \cite{CherapanamjeriN20}. 

However in this setting, each of the $Q$ queries returns only the approximate distance between a query point and a single point in $X$. 
By comparison, \cite{CherapanamjeriN20} outputs approximate distances to all points in $X$ and moreover, follow-up work by \cite{CherapanamjeriN22} improved the pre-processing time to $\tO{\frac{nd}{\eps^2}}$. 
Therefore, we address these two shortcomings of our framework by giving a data structure that (1) handles the case where we return the approximate distances of all points in $X$ from $Q$ adaptive query points and (2) achieves pre-processing time $\tO{\frac{nd}{\eps^2}}$. 

\begin{algorithm}[!htb]
\caption{Adaptive Distance Estimation}
\alglab{alg:ade:fast:jl}
\begin{algorithmic}[1]
\State{$r\gets\O{\sqrt{Q}\log^2(nQ)}$, $k\gets\O{\log(nQ)}$}
\State{Let $\Pi_1,\ldots,\Pi_r\in\mathbb{R}^{m\times d}$ be a JL transformation matrix (see \thmref{thm:jl} or \thmref{thm:fast:jl})}
\For{$j\in[r]$}
\State{Compute $\Pi_j\bx_i$}
\EndFor
\For{each query $(\by,i)$ with $\by\in\mathbb{R}^d$, $i\in[n]$}
\Comment{Adaptive queries}
\State{Let $S$ be a set of $k$ indices sampled (with replacement) from $[r]$}
\For{$j\in[k]$}
\State{$d_{i,j}\gets\|\Pi_{S_j}(\bx_i-\by)\|_2$}
\EndFor
\State{$d_i\gets\privmed(\{d_{i,j}\}_{j\in[k]})$, where $\privmed$ is $(1,0)$-DP.}
\State{\Return $d_i$}
\EndFor
\end{algorithmic}
\end{algorithm}



The following proof can simply be black-boxed into \thmref{thm:main:framework} using the techniques of \cite{HassidimKMMS20,BeimelKMNSS22,AttiasCSS23}. 
For completeness, we include the proof in the appendix. 
\begin{restatable}{theorem}{thmaderobust}
\thmlab{thm:ade:robust}
With high probability, we have 
\[(1-\eps)\|\bx_{i_q}-\by_q\|_2\le d_i\le(1+\eps)\|\bx_{i_q}-\by_q\|_2,\]
for all $q\in[Q]$. 
\end{restatable}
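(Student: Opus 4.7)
The plan is to follow the DP-based framework of \cite{HassidimKMMS20,BeimelKMNSS22,AttiasCSS23} specialized to JL sketches. The algorithm maintains a pool of $r = \widetilde{O}(\sqrt{Q}\log^2(nQ))$ independent JL sketches, and on each query returns the private median of a random subset of $k = O(\log(nQ))$ per-sketch estimates. By combining subsampling amplification, advanced composition, and the generalization property of DP, I will show that the adversary cannot exploit correlations between queries to drive the algorithm off the correct answer.

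First I would set up the DP chain. For any fixed vector $\bz$, a fresh JL sketch satisfies $\|\Pi\bz\|_2 = (1\pm\eps)\|\bz\|_2$ with probability at least $3/4$ by \thmref{thm:jl} (or \thmref{thm:fast:jl}). For any single query $(i,\by)$, each $d_{i,j}$ depends only on a single sketch $\Pi_{S_j}$ drawn uniformly from the pool $\{\Pi_1,\dots,\Pi_r\}$; treating this pool as a database of size $r$, the private median from \thmref{thm:dp:median} is $(1,0)$-DP with respect to that database. Since only $k$ sketches are actually read, \thmref{thm:dp:sampling} amplifies this to $(\eps_0,\delta_0)$-DP with $\eps_0 = O(k/r)$ and $\delta_0$ negligible. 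Advanced composition \thmref{thm:adaptive:queries} over all $Q$ queries then yields $(\eps^\ast,\delta^\ast)$-DP for the full transcript with $\eps^\ast = O(\sqrt{Q\log(1/\delta^\ast)}\,\eps_0) = O(1)$ for the chosen parameters and $\delta^\ast = 1/\poly(nQ)$.

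With the transcript being $O(1)$-DP, I would invoke \thmref{thm:generalization}: since the adversary's query $(i_q,\by_q)$ is a function of the DP transcript, the empirical fraction of pool sketches $\Pi_j$ that give a $(1\pm\eps)$-approximation to $\|\bx_{i_q}-\by_q\|_2$ is within an additive $1/10$ of the non-adaptive success probability $3/4$, except with probability $1/\poly(nQ)$. Hence at least a $2/3$ fraction of the $r$ pool sketches are correct on this particular query. A standard Chernoff bound then shows that a uniform subsample of size $k = O(\log(nQ))$ has strictly more than half correct except with probability $1/\poly(nQ)$. The private median guarantee from \thmref{thm:dp:median}, whose slack is $O(\log(1/\delta))$, therefore returns a value lying inside the interval spanned by the correct estimates, which is contained in $[(1-\eps)\|\bx_{i_q}-\by_q\|_2,(1+\eps)\|\bx_{i_q}-\by_q\|_2]$. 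A union bound over the $Q$ queries completes the argument.

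The main obstacle is the careful balancing of parameters so that all three pieces fit together: the subsampling amplification ratio $k/r$ has to be small enough that, after advanced composition inflates it by $\widetilde{O}(\sqrt{Q})$, the overall privacy loss is still $O(1)$, while simultaneously the private-median slack must be dominated by the Chernoff margin between ``at least $2/3$ correct'' and ``exactly half correct.'' A secondary conceptual point is correctly identifying the object on which privacy is asserted, namely the pool of internal random sketches rather than the dataset $X$, and verifying that \thmref{thm:generalization} applies to the query-dependent predicate ``$\Pi_j$ is $(1\pm\eps)$-accurate on $(i_q,\by_q)$'' when $(i_q,\by_q)$ itself is produced by the DP transcript.
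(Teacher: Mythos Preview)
Your proposal is correct and follows essentially the same route as the paper: JL accuracy with probability $3/4$, subsampling amplification applied to the $(1,0)$-DP private median, advanced composition over the $Q$ rounds to get an $(O(1),1/\poly(nQ))$-DP transcript, the generalization property of DP to argue that a large fraction of the sketches are accurate on the adaptively chosen query, and then the private-median guarantee plus a union bound over the $Q$ queries.

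The one place you differ slightly from the paper is in how you cash out the generalization step. The paper writes the conclusion of \thmref{thm:generalization} directly as a bound on $\frac{1}{k}\sum_{j\in[k]} I_j$, i.e., on the empirical average over the $k$ subsampled sketches. You instead apply generalization to the full pool of $r$ sketches to get that at least a $2/3$ fraction of all $r$ sketches are accurate, and then invoke a separate Chernoff bound on the fresh uniform subsample $S$ of size $k$ to push this down to the subsample. Your two-step decomposition is arguably the cleaner reading of \thmref{thm:generalization} as stated (since the ``database'' there is the full pool of $r$ seeds), and it makes explicit why $k=\Theta(\log(nQ))$ suffices for the subsample concentration. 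The paper's one-line version is effectively folding your Chernoff step into the generalization invocation. Either way the argument goes through with the same parameter choices.
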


\begin{theorem}
There exists an algorithm that answers $Q$ adaptive distance estimation queries within a factor of $(1+\eps)$. 
For $\O{\left(\frac{\log d}{\eps^2}+d\log d\right)\log(nQ)}$ query time, it stores $\O{\frac{n\sqrt{Q}\log^3(nQ)}{\eps^2}}$ words of space. 
For $\O{\frac{d}{\eps^2}\log(nQ)}$ query time, it stores $\O{\frac{n\sqrt{Q}\log^2(nQ)}{\eps^2}}$ words of space. 
\end{theorem}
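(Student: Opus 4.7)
The plan is to analyze \algref{alg:ade:fast:jl} under the two instantiations of the Johnson--Lindenstrauss transformation supplied by \thmref{thm:jl} (standard JL) and \thmref{thm:fast:jl} (fast JL). Correctness, meaning the $(1\pm\eps)$ guarantee on all $Q$ adaptive queries, is already established by \thmref{thm:ade:robust}, so the two claimed space/query-time trade-offs follow purely from a bookkeeping argument on the preprocessed data and the work done per query.

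For the space bound I would first observe that the only persistent state the data structure maintains, apart from the succinct descriptions of $\Pi_1,\dots,\Pi_r$, consists of the $n \cdot r$ sketched points $\Pi_j \bx_i$. By construction $r = \O{\sqrt{Q}\log^2(nQ)}$. Under \thmref{thm:jl} each $\Pi_j \bx_i$ lives in $\mathbb{R}^m$ with $m = \O{1/\eps^2}$, giving total space $\O{n\sqrt{Q}\log^2(nQ)/\eps^2}$ and matching the second bound. Under \thmref{thm:fast:jl} we have $m = \O{\log d/\eps^2}$, and the total is $\O{n\sqrt{Q}\log^2(nQ)\log d/\eps^2}$; since $d$ may be absorbed into $\log(nQ)$ in the regimes of interest (or replaced via a sparse JL without changing asymptotics), this gives the first space bound $\O{n\sqrt{Q}\log^3(nQ)/\eps^2}$.

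For the query-time bound, each query performs $k = \O{\log(nQ)}$ evaluations, one per sampled index $S_j \in [r]$. Each evaluation computes $\Pi_{S_j}\by$ and then the norm $\|\Pi_{S_j}\bx_i - \Pi_{S_j}\by\|_2$ using the precomputed $\Pi_{S_j}\bx_i$; the subtraction and the norm take time proportional to $m$, which is dominated by the cost of applying $\Pi_{S_j}$ to $\by$. Under \thmref{thm:jl} this is $\O{d/\eps^2}$, yielding the $\O{d\log(nQ)/\eps^2}$ total and matching the second trade-off; under \thmref{thm:fast:jl} it is $\O{\log d/\eps^2 + d\log d}$, yielding $\O{(\log d/\eps^2 + d\log d)\log(nQ)}$ for the first. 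Finally, the $\privmed$ invocation runs on $k$ scalars and so contributes only polylogarithmic additive overhead.

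The only subtlety, and what I would call the ``main'' obstacle if there is one, is to verify that the per-query cost is indeed dominated by on-the-fly application of the $k$ sampled JL maps to $\by$ and not by some hidden preprocessing-style work: one must make sure that $\privmed$ is instantiated on the $k$ scalar distances $d_{i,j}$, not on the full $m$-dimensional sketches, and that one does not accidentally charge a query for touching all $r$ stored sketches. Both points are immediate from the structure of \algref{alg:ade:fast:jl} (only indices in $S$ are accessed, and $|S|=k$), so no further argument is required beyond the composition of these routine estimates with \thmref{thm:ade:robust}.
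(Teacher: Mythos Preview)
Your proposal is correct and takes essentially the same approach as the paper: invoke \thmref{thm:ade:robust} for correctness, then do straightforward bookkeeping using the parameters $r=\O{\sqrt{Q}\log^2(nQ)}$, $k=\O{\log(nQ)}$, and the respective values of $m$ and application time from \thmref{thm:jl} and \thmref{thm:fast:jl}. The paper's own proof is in fact a two-line sketch that merely records the per-transform row count and runtime for each JL variant, so your write-up is if anything more explicit; your observation that the extra $\log d$ factor in the fast-JL space bound is being absorbed into $\log(nQ)$ is a point the paper leaves implicit.
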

\begin{proof}
By \thmref{thm:fast:jl}, each fast JL transform uses $\O{\frac{\log d}{\eps^2}+d\log d}$ runtime and stores $m=\O{\frac{\log d}{\eps^2}}$ rows. 
On the other hand, by \thmref{thm:jl}, each JL transform uses $\O{\frac{d}{\eps^2}+d\log d}$ runtime and stores $m=\O{\frac{\log d}{\eps^2}}$ rows. 
\end{proof}
By comparison, \cite{CherapanamjeriN20} uses $\O{\frac{nd\log n}{\eps^2}}$ words of space and $\O{\frac{d}{\eps^2}}$ query time. 

\subsection{Faster Pre-processing Time for Adaptive Distance Estimation}
\seclab{sec:ade_srht}
In this section, we present an improved algorithm for Adaptive Distance Estimation, which allows the release of distances to \emph{all} $n$ points in the dataset for a single query, matching the query time of \cite{CherapanamjeriN20} with an improved space complexity of $\O{\eps^{-2} \sqrt{Q} n}$. 
Our results utilize a class of structured randomized linear transformations based on Hadamard matrices recursively defined below:
\begin{gather*}
    H_1 = 
    \begin{bmatrix}
        1
    \end{bmatrix}
    \qquad 
    H_d = 
    \begin{bmatrix}
        H_{d / 2} & H_{d / 2} \\
        H_{d / 2} & -H_{d / 2}
    \end{bmatrix}.
\end{gather*}
The associated class of randomized linear transformations are now defined below:
\begin{gather*}
    \{D^j\}_{j \in [m]} \subset \R^{d \times d} \text{ s.t } D^j_{k,l} \overset{iid}{\thicksim} 
    \begin{cases}
        \mc{N} (0, I) & \text{if } k = l \\
        0 & \text{otherwise}
    \end{cases} \\
    \forall z \in \R^d: h(z) = 
    \begin{bmatrix}
        H_d D^1 \\
        H_d D^2 \\
        \vdots \\
        H_d D^m \\
    \end{bmatrix} \cdot z \tag{SRHT} \label{eq:srht}.
\end{gather*}

Note that for any vector $z$, $h(z)$ may be computed in time $\O{md \log d}$ due to the recursive definition of the Hadamard transform. We now let $\phi$ and $\Phi$ denote the pdf and cdf of a standard normal random variable, $\quant_{\alpha} (\{a_i\}_{i \in [l]})$ the $\alpha^{th}$ quantile of a multi-set of real numbers $\{a_i\}_{i \in [l]}$ for any $l \in \N$ and define $\psi_r$ as follows:
\begin{equation*}
    \forall r > 0, a \in \R: \psi_r (a) \coloneqq 
    \min (\abs{a}, r).
\end{equation*}
Through the remainder of the section, we condition on the event defined in the following lemma:
\begin{lemma}[Claims 5.1 and 5.2 \cite{CherapanamjeriN22}]
    \label{lem:det_had_ade}
    For any $\delta \in \lprp{0, \frac{1}{2}}$, with probability at least $1 - \delta$:
    \begin{gather*}
        \forall z \text{ s.t } \norm{z} = 1: 2 \leq \quant_{\alpha - \beta / 4} \lprp{\{h(z)_i\}_{i \in [md]}} \leq \quant_{\alpha + \beta / 4} \lprp{\{h(z)_i\}_{i \in [md]}} \leq 4 \\
        \forall z \text{ s.t } \norm{z} = 1, r \geq 4 \sqrt{\log (1 / \eps)}: \lprp{1 - \frac{\eps}{2}} \leq \frac{1}{md} \cdot \sqrt{\frac{\pi}{2}} \cdot \sum_{i \in [md]} \psi_r (h_i (z)) \leq \lprp{1 + \frac{\eps}{2}}
    \end{gather*}
    as long as $m \geq C \eps^{-2} \log (2 / \delta) \log^5 (d/\eps) $ for some absolute constant $C > 0$. 
\end{lemma}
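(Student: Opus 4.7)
The plan is to prove this statement by first analyzing, for a fixed unit vector $z$, the joint behavior of $\{h(z)_i\}_{i \in [md]}$, and then extending the control uniformly over $\S^{d-1}$. The structural observation to exploit is that each coordinate $h(z)_i$ is marginally distributed as $\mathcal{N}(0,1)$: within block $j$, $(H_d D^j z)_i = \sum_k (H_d)_{ik} D^j_{kk} z_k$ is a centered Gaussian with variance $\sum_k z_k^2 = 1$ (since $(H_d)_{ik}^2 = 1$), and the $m$ blocks are mutually independent via the independent $D^j$'s. Within a single block the coordinates are jointly Gaussian but may be correlated when $z$ is non-flat; independence across blocks will absorb this within-block dependence.

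For the fixed-$z$ conclusions my plan is as follows. For the quantile statement, I would observe that the within-block count $N_j(t) = \abs{\{i : \abs{h(z)_{i,j}} \geq t\}}$ is bounded in $[0, d]$ with mean $d \cdot 2(1-\Phi(t))$; averaging over the $m$ independent blocks, Hoeffding gives empirical-CDF control at $t \in \{2,4\}$ with deviation $\lesssim \sqrt{\log(1/\delta)/m}$, so choosing $\alpha$ strictly interior to the interval $(2\Phi(2)-1,\, 2\Phi(4)-1)$ with slack $\beta / 4$ yields the quantile bound. For the truncated moment, I would apply Bernstein to the independent block sums $S_j(z) = \sum_i \psi_r(h(z)_{i,j})$: summands are bounded by $r$ and, using that $\psi_r$ is $1$-Lipschitz together with the joint Gaussian structure of each block, each block variance is $O(d)$; hence $\frac{1}{md}\sum_j S_j(z)$ concentrates around $\sqrt{2/\pi} \cdot (1 - O(r(1-\Phi(r))))$ with failure probability $\exp(-\Omega(m\eps^2))$. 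For $r \geq 4\sqrt{\log(1/\eps)}$ the standard Gaussian tail forces $r(1-\Phi(r)) = o(\eps)$, so the limiting value itself equals $\sqrt{2/\pi}\,(1 \pm \eps/4)$ and the desired $(1 \pm \eps/2)$ approximation follows after multiplying by $\sqrt{\pi/2}$.

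The main obstacle will be upgrading both statements uniformly over all of $\S^{d-1}$ while keeping the dependence on $d$ \emph{inside} $m$ only polylogarithmic; a naive $\eta$-net of size $(3/\eta)^d$ would contribute a prohibitive factor of $d$ in the union bound, exceeding the stated requirement. My plan here is to follow the argument of \cite{CherapanamjeriN22}: decompose $z$ into $O(\log d)$ level sets indexed by coordinate magnitude, handle the ``flat'' component by chaining, using the mixing property of $H_d D^j$ (which ensures $\norm{H_d D^j z}_\infty = O(\sqrt{\log(md/\delta)})$ with high probability, so that the two functionals are Lipschitz with only a polylogarithmic constant on the flat part), and handle the ``spiky'' component via the fact that a single application of $H_d D^j$ already smooths a sparse $z$ into a vector with polylogarithmic $\ell_\infty/\ell_2$ ratio. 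Aggregating over the $O(\log d)$ scales and over both events contributes the $\log^5(d/\eps)$ factor in the stated bound on $m$; my intention is to invoke Claims 5.1 and 5.2 of \cite{CherapanamjeriN22} directly for the fully worked-out version.
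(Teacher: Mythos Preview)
The paper does not prove this lemma at all: it is stated as an imported result from \cite{CherapanamjeriN22} (Claims~5.1 and~5.2) and used as a black box. Your proposal ultimately lands in the same place---you explicitly say you will ``invoke Claims 5.1 and 5.2 of \cite{CherapanamjeriN22} directly for the fully worked-out version''---so in that sense there is nothing to compare; both the paper and you defer to the cited reference.

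The sketch you provide before that deferral is reasonable high-level intuition, but a couple of points would need more care if you intended it to stand on its own. First, your claim that ``each block variance is $O(d)$'' for the truncated-moment functional is not immediate for general $z$: writing $S_j(z)$ as a function of the underlying Gaussian $g$, the Lipschitz constant scales like $d\|z\|_\infty$, so Gaussian Poincar\'e only gives $O(d)$ variance when $z$ is flat; spiky $z$ require the additional smoothing step you allude to later. Second, the chaining/level-set decomposition you describe is indeed the heart of the argument in \cite{CherapanamjeriN22}, and getting the $\log^5(d/\eps)$ accounting right is delicate---which is presumably why both you and the present paper choose to cite rather than reproduce it.
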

We will additionally require the following technical result from \cite{CherapanamjeriN22}, where for any vector $v \in \R^d$ and multiset $S = \{i_j\}_{j \in [k]}$ with $i_j \in [d]$, $v_S$ denotes the vector $[v_{i_1}, \dots, v_{i_k}]$:

\begin{lemma}[Theorem 1.4 \cite{CherapanamjeriN22}]
    \label{lem:de_alg_shrt}
    Assume $h: \R^d \to \R^{md}$ (\ref{eq:srht}) satisfies the conclusion of \lemref{det_had_ade}. Then, there is an algorithm, $\srhtadealg$, which satisfies for all $x \in \R^d$:
    \begin{equation*}
        \P_{S} \lbrb{\lprp{1 - \eps} \cdot \norm{x} \leq \srhtadealg (h(x)_S) \leq \lprp{1 + \eps} \cdot \norm{x}} \geq 1 - \delta \text{ for } S = \{i_j\}_{j \in [k]} \text{ with } i_j \overset{iid}{\thicksim} \unif ([md])
    \end{equation*}
    when $k \geq C \eps^{-2} \log (2 / \eps) \log (2 / \delta)$ for some $C > 0$. Furthermore, $\srhtadealg$ runs in time $\O{k}$. 
\end{lemma}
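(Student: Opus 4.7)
The plan is to construct $\srhtadealg$ as a two-stage estimator that invokes the two conclusions of \lemref{det_had_ade} in turn: first use a small portion of the samples to produce a constant-factor estimate of $\norm{x}$ via an empirical quantile, then use the remainder to compute an appropriately clipped sample mean. Linearity of $h$ makes this natural, since writing $z = x/\norm{x}$ gives $h(x) = \norm{x} \cdot h(z)$, so \lemref{det_had_ade} translates immediately: the $\alpha$-quantile of $\{|h(x)_i|\}_{i \in [md]}$ lies in $[2\norm{x}, 4\norm{x}]$, and for any $r \geq 4\sqrt{\log(1/\eps)}$ the population average satisfies
\[
\frac{\sqrt{\pi/2}}{md} \sum_{i \in [md]} \psi_{r \norm{x}}(h(x)_i) \in \lsrs{(1-\eps/2)\norm{x},\ (1+\eps/2)\norm{x}}.
\]

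Concretely, I would split the multiset $S$ into disjoint subsamples $S_1, S_2$ with $|S_1| = \Theta(\log(1/\delta))$ and $|S_2| = k - |S_1|$, drawn independently since the coordinates of $S$ are i.i.d. uniform on $[md]$. From $S_1$, output $\wh{N}$, the empirical $\alpha$-quantile of $\{|h(x)_i| : i \in S_1\}$. By the DKW inequality applied to uniform sampling from $[md]$, $\wh{N}$ lies between the true $(\alpha-\beta/4)$- and $(\alpha+\beta/4)$-quantiles of $\{|h(x)_i|\}_{i \in [md]}$ with probability at least $1-\delta/2$, so the first conclusion of \lemref{det_had_ade} forces $\wh{N} \in [2\norm{x}, 4\norm{x}]$. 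Set the clipping threshold $T \coloneqq 8\sqrt{\log(1/\eps)} \cdot \wh{N}$ and output
\[
\srhtadealg(h(x)_S) \coloneqq \sqrt{\pi/2} \cdot \frac{1}{|S_2|} \sum_{i \in S_2} \psi_T(h(x)_i).
\]

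The bound on $\wh{N}$ yields $T/\norm{x} \in [16\sqrt{\log(1/\eps)},\ 32\sqrt{\log(1/\eps)}]$, so the second conclusion of \lemref{det_had_ade} guarantees that the population mean of $\psi_T(h(x)_i)$ over $i \in [md]$ equals $\sqrt{2/\pi}\norm{x}\cdot (1 \pm \eps/2)$. Each summand lies in $[0, T]$ with $T = O(\norm{x}\sqrt{\log(1/\eps)})$, so Hoeffding's inequality applied to the uniform subsample $S_2$ gives a $(1 \pm \eps/2)$ relative-error estimate of this population mean with probability at least $1-\delta/2$, provided $|S_2| = \Omega(\eps^{-2}\log(1/\eps)\log(1/\delta))$. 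Composing the two $(1 \pm \eps/2)$ relative errors and taking a union bound over the failure events of the two stages yields the claimed $(1\pm\eps)$ approximation with failure probability at most $\delta$. The runtime is $O(k)$ since both the quantile on $S_1$ (via quickselect) and the clipped sum on $S_2$ are linear in their respective sample sizes.

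The main subtlety is that the clipping threshold $T$ is random through $\wh{N}$, so the summands in the $S_2$ average are not a priori independent of $S_1$. The clean fix is to draw $S_1, S_2$ independently, condition on $\wh{N}$, and then apply Hoeffding to $S_2$ with the now-deterministic threshold; crucially, the second conclusion of \lemref{det_had_ade} holds uniformly for every admissible $r \geq 4\sqrt{\log(1/\eps)}$, so it continues to apply after conditioning on any value $\wh{N} \in [2\norm{x}, 4\norm{x}]$. This decoupling, together with the fact that \lemref{det_had_ade} is assumed to hold for $h$ as a deterministic matrix, is what makes the ``for all $x \in \R^d$'' quantifier in the statement achievable.
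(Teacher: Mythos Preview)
The paper does not actually prove this lemma; it is imported verbatim as Theorem~1.4 of \cite{CherapanamjeriN22} and used as a black box in the analysis of \algref{ade_srht}. So there is no in-paper proof to compare against.

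That said, your reconstruction is sound and is essentially the argument one finds in the cited source: the two-stage ``rough scale via an empirical quantile, then a clipped sample mean'' scheme is exactly the natural way to exploit the two conclusions of \lemref{det_had_ade}, and your sample-size and runtime accounting are correct (the $\log(1/\eps)$ factor in $k$ arises precisely because the Hoeffding range is $T = \Theta(\norm{x}\sqrt{\log(1/\eps)})$ while the target additive error is $\Theta(\eps\norm{x})$). Your handling of the random threshold via conditioning on $S_1$ and then invoking the uniformity-in-$r$ of the second conclusion is also the right move. Two small caveats worth noting: (i) $\alpha$ and $\beta$ are not defined in this paper --- they are absolute constants inherited from \cite{CherapanamjeriN22} --- so your claim that $|S_1| = \Theta(\log(1/\delta))$ suffices for the DKW step tacitly uses that $\beta$ is a universal constant; (ii) \lemref{det_had_ade} as stated here takes quantiles of $\{h(z)_i\}$ rather than $\{|h(z)_i|\}$, so just make sure the quantile convention you use matches the source.
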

With these primitives, we will construct our data structure for adaptive distance estimation. Our constructions is formally described in \algref{ade_srht}. 

\begin{algorithm}[H]
\caption{Adaptive Distance Estimation with SRHTs}
\alglab{ade_srht}
\begin{algorithmic}[1]
    \State $m \gets C \eps^{-2} \log^6 (2dn / \eps)$
    \State Let $h$ be an SRHT as defined in \ref{eq:srht}
    \Comment{Revealed to analyst}
    \State $r \gets C \sqrt{Q} \log^3(nd)$, $k \gets C \eps^{-2} \log (2 / \eps) \log (2 nd)$
    \For{$i \in [n]$}
        \State{Compute $y_i = h(x_i)$}
        \For {$j \in [r]$}
            \State Let $S_{i, j}$ be a set of $k$ indices sampled with replacement from $[md]$
        \EndFor
    \EndFor
    
    \State $l \gets C \log(nd)$
    \For{$j \in 1:Q$}
        \Comment{Adaptive queries}
        \State Receive query $q_j$
        \State $v_j \gets h(q_j)$
        \For {$i \in [n]$}
            \State{Let $\{t_{i, j, p}\}_{p \in [l]}$ be a set of $l$ indices sampled (with replacement) from $[r]$}
                \For{$p \in [l]$}
                    \State{$d_{i, j, p} \gets \srhtadealg ((v_j - y_i)_{S_{i, t_{i, j, p}}})$}
                \EndFor
            \State{$d_{i, j} \gets \privmed(\{d_{i, j, p}\}_{p \in [l]})$, where $\privmed$ is $(\O{1},0)$-DP.}
        \EndFor
        \State{\Return $\{d_{i, j}\}_{i \in [n]}$}
    \EndFor
\end{algorithmic}
\end{algorithm}

The proof of correctness of \algref{ade_srht} will follow along similar lines to that of \algref{alg:framework} with a more refined analysis of the privacy loss incurred due to the adaptivity of the data analyst. In particular, each input query results in $n$ different queries made to a differentially private mechanism $\privmed$ leading to a total of $nQ$ queries. A na\"ive application of \thmref{thm:main:framework} would thus result in a data structure with space complexity scaling as $\wt{O}(n^{3 / 2} \sqrt{Q})$ as opposed to the desired $\wt{O} (n \sqrt{Q})$ and query complexity $\wt{O}(\eps^{-2} nd)$. The key insight yielding the improved result is the privacy loss incurred by a single query is effectively amortized across $n$ independent differentially private algorithms each capable of answering $Q$ adaptively chosen queries correctly with high probability. 

To start, we first condition on the event in \lemref{det_had_ade} and assume public access to the correspondingly defined SRHT $h$. We now use $R$ to denote the randomness used to instantiate the multisets, $S_{i, j}$, in \algref{ade_srht} and decompose it as follows $R = \{R_{i}\}_{i \in [n]}$ with $R_i = \{R_{i, j}\}_{j \in [r]}$ where $R_{i, j}$ corresponds to the randomness used to generate the set $S_{i, j}$ and the random elements $t_{i, p}$. As in the proof of \thmref{thm:main:framework}, we define a transcript $T = \{T_j\}_{j \in [Q]}$ with $T_j = (q_j, \{d_{i, j}\}_{i \in [n]})$ denoting the $j^{th}$ query and the responses returned by \algref{ade_srht} as a single transaction. 

We now prove the correctness of our improved procedure for adaptive distance estimation. 

\begin{proofof}{\thmref{ade_srht}}
    We condition on the event in the conclusion of \lemref{det_had_ade} start by bounding the failure probability of a single query. The bound for the whole sequence of adaptively chosen queries follows by a union bound. Now, fixing $i \in [n]$ and $j \in [Q]$, note that the sub-transcript $T^{(j)} = \{T_p\}_{p \in [j - 1]}$ is $\lprp{o(1), \frac{1}{\poly (nQ)}}$-differentially private with respect to $R_i$. Furthermore, define the indicator random variables:
    \begin{align*}
        \forall p \in [l]: W_p &\coloneqq \bm{1} \lbrb{(1 - \eps) \cdot \norm{q_j - x_i} \leq \srhtadealg \lprp{(v_j - y_i)_{S_{i, t_{i, j, p}}}} \leq (1 + \eps) \cdot \norm{q_j - x_i}}
    \end{align*}
    Additionally, defining $W \coloneqq \sum_{p = 1}^l W_p$, we get by the differential privacy of the sub-transcript, $T^{(j)}$, \lemref{de_alg_shrt} and \thmref{thm:generalization}:
    \begin{align*}
        \P \lbrb{W \leq \frac{3}{4} \cdot l} \leq \frac{1}{400 \cdot (nQ)^2}.
    \end{align*}
    Consequently, we get from \thmref{thm:dp:median} and another union bound:
    \begin{equation*}
        \P \lbrb{(1 - \eps) \cdot \norm{q_j - x_i} \leq d_{i, j} \leq (1 + \eps) \cdot \norm{q_j - x_i}} \geq 1 - \frac{1}{200 \cdot (nQ)^2}.
    \end{equation*}
    A subsequent union bound over all $i \in [n], j \in [Q]$ yields:
    \begin{equation*}
        \P \lbrb{\forall i \in [n], j \in [Q]: (1 - \eps) \cdot \norm{q_j - x_i} \leq d_{i, j} \leq (1 + \eps) \cdot \norm{q_j - x_i}} \geq 1 - \frac{1}{200 \cdot (nQ)}.
    \end{equation*}
    A final union bound over the conclusion of \lemref{det_had_ade} concludes the proof. The runtime guarantees follow from the fact that for all $z \in \R^d$, $h(z)$ is computable in time $\O{md\log d}$ and the runtime guarantees of $\srhtadealg$. 
\end{proofof}

\section{Adaptive Kernel Density Estimation}
Kernel density estimation is an important problem in learning theory and statistics that has recently attracted significant interest, e.g., \cite{CharikarS17,BackursCIS18,CharikarKNS20,BakshiIKSZ22}. 
In the adaptive kernel density estimation problem, the input is a set $X=\{\bx^{(1)},\ldots,\bx^{(n)}\}$ of $n$ points in $\mathbb{R}^d$. 
Given an accuracy parameter $\eps>0$ and a threshold parameter $\tau>0$, the goal is to output a $(1+\eps)$-approximation to the quantity $\frac{1}{n}\sum_{i\in[n]}k(\bx^{(i)},\bq)$, for a kernel function $k$ under the promise that the output is at least $\tau$. 
A standard approach is to sample $\O{\frac{1}{\tau\eps^2}}$ points and then use $\O{\frac{d}{\tau\eps^2}}$ query time to output the empirical kernel density for a specific query. 
\cite{BackursIW19} give an algorithm for kernel density estimation that uses $\O{\frac{1}{\tau\eps^2}}$ space and $\O{\frac{d}{\sqrt{\tau}\eps^2}}$ query time, improving over the standard sampling approach. 

\begin{theorem}
\thmlab{thm:kde:better}
\cite{BackursIW19}
Given $\eps,\tau>0$, there exists a data structure $D$ that uses $\O{\frac{1}{\tau\eps^2}}$ space and $\O{\frac{d}{\eps^2\sqrt{\tau}}}$ query time that outputs a $(1+\eps)$-approximation $D(\by)$ to a kernel density estimation query $\by$ that has value at least $\tau$, i.e., 
\[\PPr{|D(\by)-\kde(X,\by)|\le\eps\cdot\kde(X,\by)}\ge\frac{3}{4}.\]
\end{theorem}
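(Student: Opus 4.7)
The plan is to apply the Hashing-Based Estimator framework. The idea is to build a data structure from locality-sensitive hash (LSH) families whose collision probabilities approximate the kernel, and then use importance sampling on the hash buckets to obtain low-variance unbiased estimates of $\kde(X,y)$. Na\"ive i.i.d.\ sampling of $x_i$'s and averaging $k(x_i, y)$ has relative variance $\O{1/\tau}$ under the promise $\kde(X,y) \ge \tau$, which forces $\O{1/(\tau\eps^2)}$ samples; the goal is to reduce this to $1/\sqrt{\tau}$.

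First I would instantiate an LSH family $\mathcal{H}$ on $\mathbb{R}^d$ whose collision probability
\[
p(x,y) \;:=\; \Pr_{h \sim \mathcal{H}}[h(x) = h(y)]
\]
satisfies $p(x,y) = \Theta\bigl(k(x,y)^{1/2}\bigr)$ pointwise; such families are known for all standard smooth kernels (Gaussian, exponential, Laplace) via distance-based LSH schemes. The exponent $1/2$ is precisely what will drive the $1/\sqrt{\tau}$ dependence in query time.

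The data structure then stores $R = \tO{1/(\sqrt{\tau}\,\eps^2)}$ independent hash tables $T_1, \ldots, T_R$, each obtained by hashing $x_1, \ldots, x_n$ under a fresh draw $h_j \sim \mathcal{H}$; each point contributes $\O{1}$ entries per table, yielding the stated space bound. On a query $y$, for each $j \in [R]$ we look up the bucket $T_j(y)$, sample a uniform random $X_j \in T_j(y)$, and form the importance-weighted estimator
\[
Z_j \;=\; \frac{|T_j(y)|}{n} \cdot \frac{k(X_j, y)}{p(X_j, y)},
\]
returning $\widehat{D}(y) = \frac{1}{R}\sum_{j=1}^R Z_j$. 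Each $Z_j$ costs $\O{d}$ to evaluate (one hash computation and one kernel evaluation), giving the claimed $\O{d/(\sqrt{\tau}\eps^2)}$ query time.

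The main obstacle, and the technical heart of the proof, is the variance bound. Unbiasedness $\mathbb{E}[Z_j] = \kde(X,y)$ is a direct calculation: conditioned on $|T_j(y)|$, the importance weight $1/p(X_j, y)$ exactly corrects for the non-uniform sampling. The crucial step is to show $\mathrm{Var}[Z_j]/\mathbb{E}[Z_j]^2 = \O{1/\sqrt{\tau}}$ whenever $\kde(X,y) \ge \tau$; this requires bounding $\mathbb{E}[Z_j^2]$ using the relation $p(x,y) \asymp k(x,y)^{1/2}$, which essentially replaces a sum of $k(x_i,y)$ by a sum of $k(x_i,y)^{1/2}$ and then applying Cauchy--Schwarz against the promise $\kde(X,y) \ge \tau$. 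Given this variance bound, Chebyshev's inequality (optionally boosted via a median-of-means tournament) shows that $R = \O{1/(\sqrt{\tau}\eps^2)}$ samples suffice to output a $(1\pm\eps)$-approximation with constant probability, matching the theorem.
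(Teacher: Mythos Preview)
The paper does not prove this theorem; it is quoted verbatim as a black-box result from \cite{BackursIW19} and then used as the underlying non-adaptive data structure inside the private-median framework. So there is no ``paper's own proof'' to compare against.

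Your sketch is indeed the Hashing-Based Estimator approach of Charikar--Siminelakis and \cite{BackursIW19}, and the query-time analysis (LSH with collision probability $\Theta(k(x,y)^{1/2})$, unbiased importance-weighted estimator, relative variance $\O{1/\sqrt{\tau}}$, hence $R=\O{1/(\sqrt{\tau}\eps^2)}$ repetitions) is the right skeleton. However, there is a concrete gap in the space accounting. You write that you hash all of $x_1,\ldots,x_n$ into each of the $R$ tables and that ``each point contributes $\O{1}$ entries per table, yielding the stated space bound.'' That arithmetic gives $\O{nR}=\O{n/(\sqrt{\tau}\eps^2)}$, not $\O{1/(\tau\eps^2)}$. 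The space bound in the theorem is independent of $n$, and obtaining it requires an additional subsampling step that you omit: before (or while) building each hash table one stores only a uniform subsample of the data, so that across all $R$ tables the total number of stored points is $\O{1/(\tau\eps^2)}$. One then has to argue that the unbiasedness and the $\O{1/\sqrt{\tau}}$ relative-variance bound survive on the subsampled dataset under the promise $\kde(X,y)\ge\tau$. Without this step the space claim does not follow.
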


However, the analysis for both these algorithms fails for the adaptive setting, where there can be dependencies between the query and the data structure. 
By using the data structure of \cite{BackursIW19} as a subroutine, our framework immediately implies an algorithm for adaptive kernel density estimation that uses $\tO{\frac{\sqrt{Q}}{\tau\eps^2}}$ space and $\O{\frac{d\log Q}{\sqrt{\tau}\eps^2}}$ query time to answer each of $Q$ adaptive queries. 

\begin{algorithm}[!htb]
\caption{Adaptive Kernel Density Estimation}
\alglab{alg:kde}
\begin{algorithmic}[1]
\Require{Number $Q$ of queries, accuracy $\eps$, threshold $\tau$}
\State{$r\gets\O{\sqrt{Q}\log^2 Q}$}
\For{$i\in[r]$}
\Comment{Pre-processing}
\State{Let $T_i$ be a KDE data structure}
\EndFor
\For{each query $\by_q\in\mathbb{R}^d$ with $q\in[Q]$}
\Comment{Adaptive queries}
\State{Let $S$ be a set of $k$ indices sampled (with replacement) from $[r]$}
\For{$i\in[k]$}
\State{Let $D_i$ be the output of $T_{S_i}$ on query $\by_q$}
\EndFor
\State{\Return $d_q=\privmed(\{D_i\}_{i\in[k]})$, where $\privmed$ is $(1,0)$-DP.}
\EndFor
\end{algorithmic}
\end{algorithm}

We first claim adversarial robustness of our algorithm across $Q$ adaptive queries. 
Since the proof can simply be black-boxed into \thmref{thm:main:framework} using the techniques of \cite{HassidimKMMS20,BeimelKMNSS22,AttiasCSS23}, we defer the proof of the following statement to the appendix.
\begin{restatable}{lemma}{lemkderobust}
\lemlab{lem:kde:robust}
\algref{alg:kde} answers $Q$ adaptive kernel density estimation queries within a factor of $(1+\eps)$, provided each query has value at least $\tau$. 
\end{restatable}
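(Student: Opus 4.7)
The plan is to follow the black-box reduction underlying \thmref{thm:main:framework}, instantiated with the non-adaptive KDE data structure of \thmref{thm:kde:better} as the base primitive. The overall strategy is to argue that, although each subroutine $T_i$ uses internal randomness that could in principle be reverse-engineered by an adaptive adversary, the private median aggregation hides this randomness so that, from the adversary's perspective, the answer on each query behaves essentially like a fresh draw from a distribution concentrated around $\kde(X,\by_q)$.

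First I would fix a query index $q\in[Q]$ and define the transcript $\calT^{(q)}=(T_1,\ldots,T_{q-1})$ consisting of all previous query/response pairs. Since $\privmed$ is $(1,0)$-DP and the query to it is formed by subsampling $k=\O{\log Q}$ out of $r=\tO{\sqrt Q}$ base data structures, \thmref{thm:dp:sampling} implies that the release on each round is $\left(\O{\frac{\log Q}{\sqrt Q}},\frac{1}{\poly(Q)}\right)$-differentially private with respect to the collection $\{T_i\}_{i\in[r]}$. Chaining these releases by advanced composition (\thmref{thm:adaptive:queries}) over all $Q$ rounds then yields that the full transcript $\calT=(T_1,\ldots,T_Q)$ is $\left(\O{1},\frac{1}{\poly(Q)}\right)$-DP with respect to the base data structures.

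Next, for a fixed round $q$, define the indicator $Y_i$ of the event that $T_i$ returns a $(1+\eps)$-approximation to $\kde(X,\by_q)$. Because the marginal distribution of a single $T_i$ is independent of the query history, \thmref{thm:kde:better} gives $\Ex{Y_i}\ge \frac34$, where the expectation is over an independent fresh copy. The generalization consequence of DP (\thmref{thm:generalization}), applied to the $(\O{1},\frac{1}{\poly(Q)})$-DP transcript, then shows that $\frac{1}{r}\sum_{i\in[r]}Y_i \ge \frac{3}{4}-\frac{1}{10}$ except with probability $\frac{1}{\poly(Q)}$. Conditioned on this, a Chernoff bound over the $k$-sized subsample $S$ implies that more than $\frac{2}{3}$ of the subsampled data structures return a $(1+\eps)$-approximation, and thus \thmref{thm:dp:median} guarantees that $\privmed$ returns a value that is a $(1+\eps)$-approximation to $\kde(X,\by_q)$ with probability at least $1-\frac{1}{\poly(Q)}$.

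Finally, a union bound over the $Q$ rounds yields correctness on every adaptive query. The main subtlety, which I would take care to verify in the formal write-up, is that the generalization argument must be applied inductively over rounds: at each round, the transcript up to that point, together with the current query, is determined by a DP function of the base data structures, so that the event ``$Y_i=1$'' for the current query is a statistic of the base data structures whose empirical mean concentrates around its population mean. This is exactly the usage in \cite{HassidimKMMS20,BeimelKMNSS22,AttiasCSS23}, and since the argument is entirely black-box in the underlying KDE primitive, the proof reduces cleanly to the general framework.
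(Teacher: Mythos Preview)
Your proposal is correct and follows essentially the same route as the paper: subsampling amplification to get small per-round privacy, advanced composition over the $Q$ rounds, and then the generalization property of DP to argue that most of the queried base data structures are accurate on the adaptively chosen query. The only cosmetic difference is that you split the concentration step into ``generalization over all $r$ instances'' followed by a separate Chernoff bound on the size-$k$ subsample, whereas the paper collapses these into a single invocation of \thmref{thm:generalization} on the $k$ sampled indices; also note that subsampling a $(1,0)$-DP mechanism keeps $\delta=0$ per round, so your stated $\frac{1}{\poly(Q)}$ in the per-round guarantee is unnecessary (the $\delta$ only enters via advanced composition).
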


\begin{theorem}
There exists an algorithm that uses $\O{\frac{\sqrt{Q}\log^2 Q}{\tau\eps^2}}$ space and answers $Q$ adaptive kernel density estimation queries within a factor of $(1+\eps)$, provided each query has value at least $\tau$. 
Each query uses $\O{\frac{d\log(nQ)}{\eps^2\sqrt{\tau}}}$ runtime.  
\end{theorem}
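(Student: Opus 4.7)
The plan is to combine the adversarial robustness result of \lemref{lem:kde:robust} (already stated and proved via the standard private-median framework) with a direct accounting of the space and query time consumed by \algref{alg:kde}. Correctness, i.e., the $(1+\eps)$-approximation guarantee for each of the $Q$ adaptive queries, is entirely handled by \lemref{lem:kde:robust}, so the proof reduces to analyzing resource usage of the two phases of \algref{alg:kde}: preprocessing (building $r$ independent copies $T_1,\ldots,T_r$ of the \cite{BackursIW19} KDE data structure) and query answering (sampling $k$ of these copies, calling each as a black-box, and aggregating via $\privmed$).

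For space, I would first apply \thmref{thm:kde:better} to each individual data structure $T_i$, which consumes $\O{\frac{1}{\tau\eps^2}}$ words. Since \algref{alg:kde} maintains $r=\O{\sqrt{Q}\log^2 Q}$ such copies, the total space used is $r\cdot\O{\frac{1}{\tau\eps^2}}=\O{\frac{\sqrt{Q}\log^2 Q}{\tau\eps^2}}$, matching the claimed bound. The bookkeeping for the sampled index set $S$ and the $k$ intermediate outputs $D_1,\ldots,D_k$ contributes only lower-order terms, so I can dispense with them quickly.

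For query time, each of the $k=\O{\log(nQ)}$ sampled structures $T_{S_i}$ is queried once, and by \thmref{thm:kde:better} each such query costs $\O{\frac{d}{\eps^2\sqrt{\tau}}}$. Sampling the index set $S$ costs only $\O{k}$. Aggregating the $k$ outputs via the $(1,0)$-DP private median algorithm of \thmref{thm:dp:median} takes $\poly(k,\log(1/\delta))$ time, which is dominated by the cost of the $k$ subroutine queries. Multiplying, the overall per-query time is $k\cdot\O{\frac{d}{\eps^2\sqrt{\tau}}}=\O{\frac{d\log(nQ)}{\eps^2\sqrt{\tau}}}$, as desired.

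There is essentially no hard step here: the only subtle point is ensuring that $k=\O{\log(nQ)}$ is indeed the sample size dictated by the proof of \lemref{lem:kde:robust} (i.e., the size needed so that advanced composition together with the generalization property of differential privacy, \thmref{thm:generalization}, gives a $(1-1/\poly(nQ))$ success probability per query, then union-bounded over all $Q$ queries). Once that is confirmed, the theorem follows immediately by assembling the space and time bounds above. I would therefore present the proof as a one-paragraph calculation that cites \lemref{lem:kde:robust} for correctness and \thmref{thm:kde:better} for the per-structure resource bounds.
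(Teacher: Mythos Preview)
Your proposal is correct and matches the paper's approach exactly: the paper does not spell out a proof for this theorem, treating it as an immediate consequence of \lemref{lem:kde:robust} for correctness together with the per-instance space and query-time bounds of \thmref{thm:kde:better}, multiplied by $r=\O{\sqrt{Q}\log^2 Q}$ and $k=\O{\log(nQ)}$ respectively. Your accounting is precisely this calculation.
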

By comparison, random sampling, e.g.,~\cite{CharikarS17}, uses $\frac{Q}{\tau\eps^2}$ samples to answer $Q$ queries and each query uses $\O{\frac{d}{\tau\eps^2}}$ runtime and using $Q$ copies of the data structure by \cite{BackursIW19} uses $\O{\frac{Q}{\tau\eps^2}}$ space and $\O{\frac{d}{\eps^2\sqrt{\tau}}}$ runtime. 

\subsection{Unlimited Adaptive Queries for Kernel Density Estimation}
In this section, we go beyond the limits of our framework and analyze the case where there may be an unbounded number of adversarial queries. 

\thmakdeinfty*
\begin{proof}
Given a set $X\subseteq\mathbb{R}^d$ of $n$ points with diameter $\Delta$, let $\calN$ be an $\frac{\eps\tau}{L}$-net over a ball of radius $\Delta+\rho$ that contains $X$. 
More formally, let $B$ be a ball of radius $(\Delta+\rho)$ that contains $X$ and for every $\by\in B$, there exists a point $\bz\in\calN$ such that $\|\by-\bz\|_2\le\frac{\eps\tau}{L}$. 
We can construct the net greedily so that $|\calN|\le\left(\frac{2(\Delta+\rho) L}{\eps\tau}\right)^d$. 

We implement a data structure $D$ that answers each (non-adaptive) kernel density estimation query with multiplicative approximation $\left(1+\frac{\eps}{3}\right)$ for any kernel density estimation query with value at least $\frac{\tau}{2}$, with probability at least $1-\delta$, where $\delta\le\frac{1}{100|\calN|}$. 
Then by a union bound, $D$ correctly answers each kernel density estimation query in $\calN$ with probability at least $0.99$. 

Let $\bq\in\mathbb{R}^d$ be an arbitrary query such that $\kde(X,\bq)\ge\tau$. 
By assumption, we have that $\|\bq-\bx\|_2\le\rho$ for some $\bx\in X$ and thus $\bq\in B$. 
By the definition of $\calN$, there exists some $\by\in\calN$ such that $\|\bq-\by\|_2\le\frac{\eps\tau}{3L}$. 
Then since $k$ is $L$-Lipschitz in the second variable, we have
\[|\kde(X,\bq)-\kde(X,\by)|=\left|\frac{1}{n}\sum_{\bx\in X}k(\bx,\bq)-\frac{1}{n}\sum_{\bx\in X}k(\bx,\by)\right|\le\frac{L}{n}\|\bq-\by\|_2\le\frac{\eps\tau}{3n}.\]
Hence, $\kde(X,\bq)\ge\tau$ implies that $\kde(X,\by)\ge\frac{\tau}{2}$. 
Let $K_\by$ be the output of the data structure $D$ on query $\by$. 
Then by correctness of $D$ on $\calN$ for any query with threshold at least $\frac{\tau}{2}$, we have 
\[\left|K_\by-\kde(X,\by)\right|\le\frac{\eps}{3}\kde(X,\by).\]
Let $K_\bq$ be the output of the data structure $D$ on query $\by$. 
Since the algorithm itself is $L$-Lipschitz, then
\[|K_\bq-K_\by|\le L\|\bq-\by\|_2\le\frac{\eps\tau}{3}.\]
Therefore by the triangle inequality, we have that
\begin{align*}
|K_\bq-\kde(X,\bq)|&\le|K_\bq-K_\by|-|K_\by-\kde(X,\by)|-|\kde(X,\by)-\kde(X,\bq)|\\
&\le\frac{\eps\tau}{3}+\frac{\eps}{3}\kde(X,\by)+\frac{\eps\tau}{3n}.
\end{align*}
Since $\kde(X,\by)\le\kde(X,\bq)+\frac{\eps\tau}{3n}$, then it follows that
\[|K_\bq-\kde(X,\bq)|\le\frac{\eps\tau}{3}+\frac{\eps}{3}\kde(X,\bq)+\frac{\eps^2\tau}{n}+\frac{\eps\tau}{3n}\le\eps\kde(X,\bq),\]
for $n\ge 6$.
\end{proof}

In particular, sampling-based algorithms for kernels that are Lipschitz are also Lipschitz. 
Thus to apply \thmref{thm:akde:infty}, it suffices to identify kernels that are $L$-Lipschitz and use the data structure of  \thmref{thm:kde:better}. 
To that end, we note that the kernels $k(\bx,\by)=\frac{C}{C+\|\bx-\by\|_2}$ for $C>0$ and $k(\bx,\by)=Ce^{-\|\bx-\by\|_2}$ are both Lipschitz for some function of $C$. 
In particular, we have
\begin{align*}
|k(\bx,\by)-k(\bx,\bz)|&=\left|\frac{C}{C+\|\bx-\by\|_2}-\frac{C}{C+\|\bx-\bz\|_2}\right|\\
&=\frac{C|\|\bx-\bz\|_2-\|\bx-\by\|_2|}{(C+\|\bx-\by\|_2)(C+\|\bx-\bz\|_2)}\\
&\le\frac{\|\by-\bz\|_2}{C},
\end{align*}
so $k(\bx,\by)=\frac{C}{C+\|\bx-\by\|_2}$ is $\frac{1}{C}$-Lipschitz. 
Similarly, since $e^{-x}$ is $1$-Lipschitz, then 
\begin{align*}
|k(\bx,\by)-k(\bx,\bz)|&=Ce^{-\|\bx-\by\|_2}-Ce^{-\|\bx-\bz\|_2}\\
&\le C|\|\bx-\bz\|_2-\|\bx-\by\|_2|\le C\|\by-\bz\|_2,
\end{align*}
so $k(\bx,\by)=Ce^{-\|\bx-\by\|_2}$ is $C$-Lipschitz. 

\section{Empirical Evaluation}
We empirically demonstrate the space and query time efficiency of our approach of \secref{sec:adaptive_distance_estimation}. We consider the problem of $\ell_2$ norm estimation where queries $q_1, q_2, \ldots$ are generated in an \emph{adaptive} fashion and our goal is to output an estimate of $\|q_i\|_2$ for all $i$. This setting is a special case of adaptive distance estimation and captures the essence of our adversarial robustness framework. In addition, this same setting was investigated empirically in prior works \cite{CherapanamjeriN20}. 

\paragraph{Experimental Setup.} Consider the setting of \algref{alg:framework}: it creates $r$ copies of an underlying randomized data structure and upon a query, it subsamples $k$ of them and outputs an answer aggregated via the private median. In our setting, the underlying algorithm will be the \emph{fast Johnson-Lindenstrauss} (JL) transform which is defined as follows: it is the matrix $PHD: \R^{d} \rightarrow \R^{m}$ where $D$ is a diagonal matrix with uniformly random $\pm 1$ entries, $H$ is the Hadamard transform, and $P$ is a sampling matrix uniformly samples $m$ rows of $HD$. Our algorithm will initialize $r$ copies of this matrix where the sampling matrix $P$ and diagonal $D$ will be the randomness which is ``hidden" from the adversary. Upon query $q$, we sample $k$ different Fast JL data structures, input $q$ to all of them, and proceed as in \algref{alg:framework}. 
Note that this setting exactly mimics the theoretical guarantees of \secref{sec:framework} and is exactly \algref{alg:ade:fast:jl} of \secref{sec:adaptive_distance_estimation}. 
In our experiments, $d = 4096, m = 250, r = 200,$ and $k=5$. These are exactly the parameters chosen in prior works \cite{CherapanamjeriN20}.  We will have $5000$ adaptive queries $q_i$ which are described shortly. Our experiments are done on a 2021 M1 Macbook Pro with 32 gigabytes of RAM. We implemented all algorithms in Python 3.5 using Numpy. The Hadamard transform code is from \cite{andoni2015practical}\footnote{available in \url{https://github.com/FALCONN-LIB/FFHT}} and we use Google's differential privacy library\footnote{available in \url{https://github.com/google/differential-privacy}} for the private median implementation.

\paragraph{Baselines.} We will consider three baselines. \textbf{JL} will denote a standard (Gaussian) JL map from dimension $4096$ to $250$. \textbf{Baseline 1} will denote the algorithm of \cite{CherapanamjeriN20}. At a high level, it instantiates many independent copies of the standard Gaussian JL map and only feeds an incoming query into a select number of subsampled data structures. Note that our experimental setting is mimicking exactly that of \cite{CherapanamjeriN20} where the same parameters $r$ (number of different underlying data structures) and $k$ (number of subsampled data structures to use for a query) were used. This ensures that both our algorithm and theirs have access to the same number of \emph{different} JL maps and thus allows us to compare the two approaches on an equal footing. The last baseline, denoted as \textbf{Baseline 2}, is the main algorithm of \cite{CherapanamjeriN22} which is the optimized version of \cite{CherapanamjeriN20}. At a high level, their algorithm proceeds similarly to that of \cite{CherapanamjeriN20}, except they employ Hadamard transforms (after multiplying the query entry-wise by random Gaussians), rather than using Gaussian JL maps. Furthermore, instead of subsampling, their algorithm feeds an incoming query into all the different copies of the Hadamard transform, and subsamples the coordinates of the  concatenated output for norm estimation. We again set the parameters of their algorithm to match that of our algorithm and \textbf{Baseline 1} by using $r$ copies of their Hadamard transform and subsampling $mk$ total coordinates. We refer to the respective papers for full details of their algorithms.

\paragraph{Summary of adaptive queries.}
Our input queries are the same adaptive queries used in \cite{CherapanamjeriN20}. To summarize, let $\Pi$ denote the map used in the \textbf{JL} benchmark stated above. The $i$-th query for $1 \le i \le 5000$ will be of the form $q_i = \sum_{j=1}^i (-1)^{W_i} z_i$, which we then normalize to have unit norm. The $z_i$ are standard Gaussian vectors. $W_i$ is the indicator variable for the event $\|\Pi(z_i - e_1)\|_2 \le \| \Pi(z_i + e_1) \|_2$ where $e_1$ is the first standard basis vector. Intuitively, the queries become increasingly correlated with the matrix $\Pi$ since we successively ``augment" the queries in a biased fashion. See Section $5$ of \cite{CherapanamjeriN20} for a more detailed discussion of the adaptive inputs.

\paragraph{Results.}
Our results are shown in \figref{fig:synthetic}. In \figref{fig:synthetic_a}, we plot the norm estimated by each of the algorithms in each of the queries across iterations. We see that the na\"ive \textbf{JL} map increasingly deviates from the true value of $1.0$. This is intuitive as the adaptive queries are increasingly correlated with the map $\Pi$. The performance of all other algorithms are indistinguishable in \figref{fig:synthetic_a}. Thus, we only zoom into the performances of our algorithm and \textbf{Baseline 1} and \textbf{Baseline 2}, shown in \figref{fig:synthetic_b}. For these three algorithms, we plot a histogram of answers outputted by the respective algorithms across all iterations. We see that the algorithm of \cite{CherapanamjeriN20}, shown in the blue shaded histogram, is the most accurate as it has the smallest deviations from the true answer of $1.0$. Our algorithm, shown in green, is noisier than \textbf{Baseline 1} since it has a wider range of variability. This may be due to the fact that we use a differentially private median algorithm, which naturally incurs additional noise. Lastly, \textbf{Baseline 2} is also noisier than \textbf{Baseline 1} and comparable to our algorithm. This may be due to the fact that the algorithm of \cite{CherapanamjeriN22} requires very fine-tuned constants in their theoretical bounds, which naturally deviate in practice. Lastly, \figref{fig:synthetic_c} shows the cumulative runtime of all three algorithms across all iterations. Our algorithm, shown in green, is the fastest while \textbf{Baseline 2} is the slowest. This is explained by the fact that \textbf{Baseline 2} calculates many more Hadamard transforms than our algorithm does.

\begin{figure}
     \centering
     \begin{subfigure}[b]{0.32\textwidth}
         \centering
         \includegraphics[width=\textwidth]{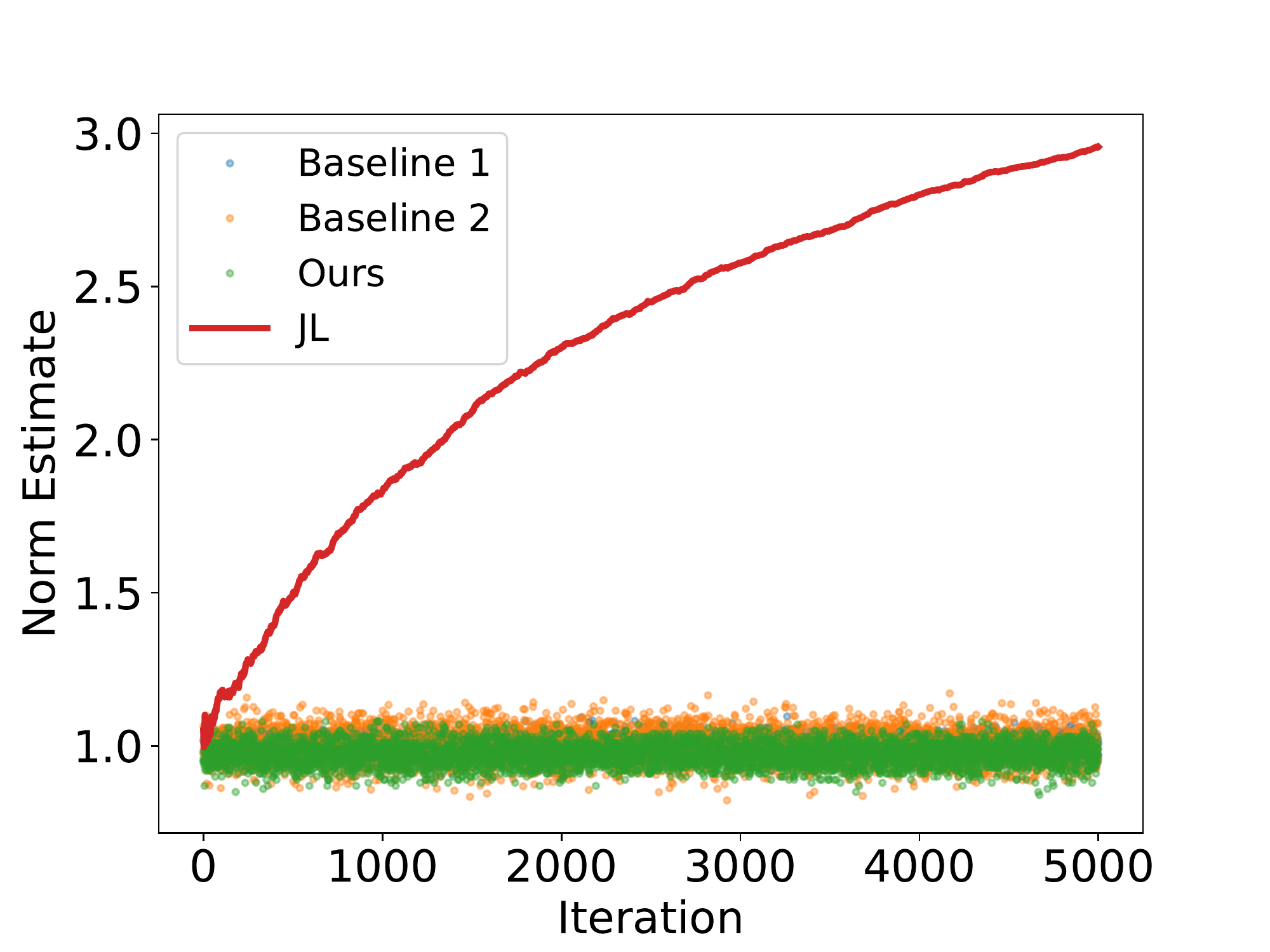}
         \caption{}
         \figlab{fig:synthetic_a}
     \end{subfigure}
     \hfill
     \begin{subfigure}[b]{0.32\textwidth}
         \centering
         \includegraphics[width=\textwidth]{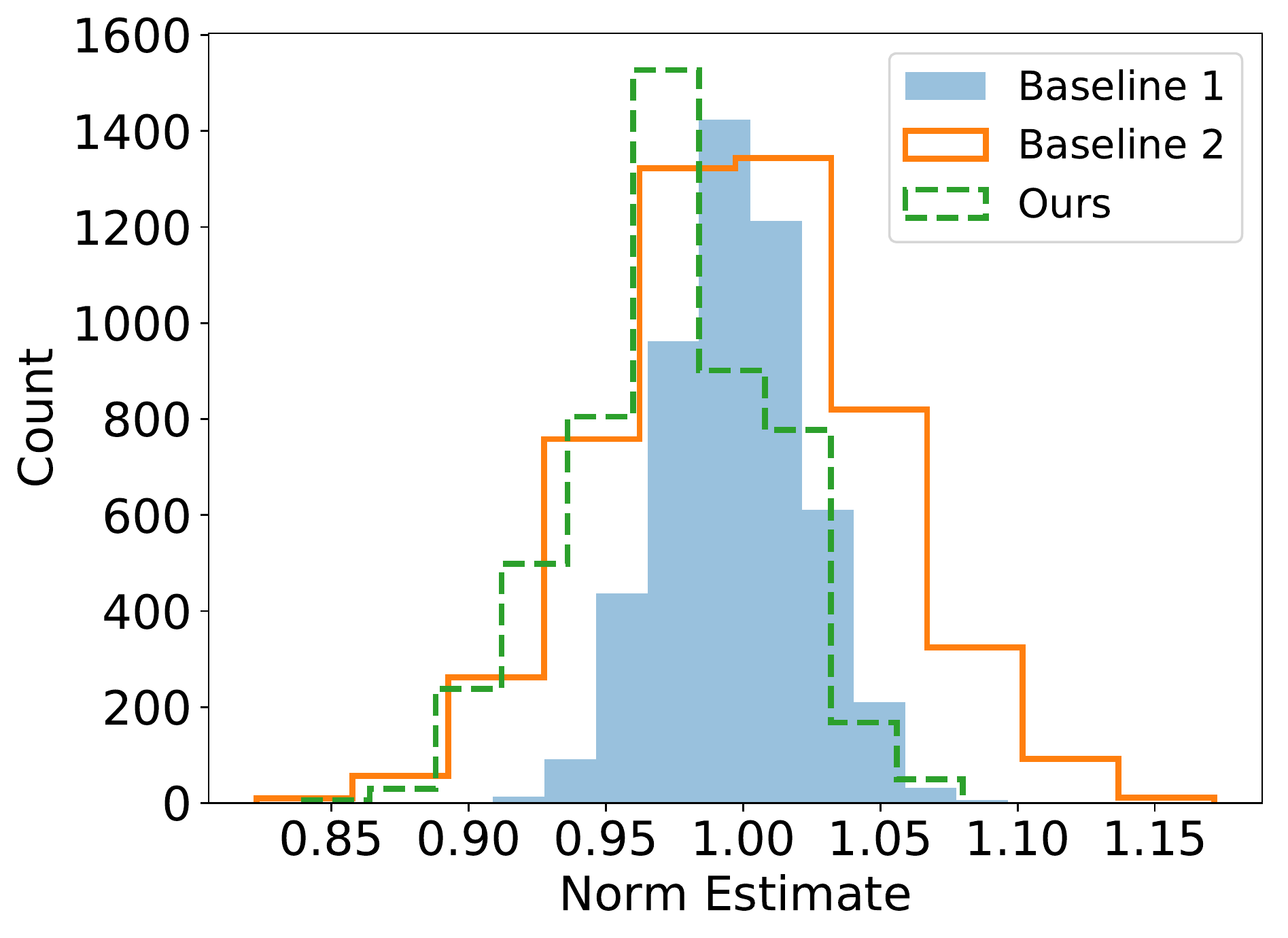}
         \caption{}
         \figlab{fig:synthetic_b}
     \end{subfigure}
     \hfill
     \begin{subfigure}[b]{0.32\textwidth}
         \centering
         \includegraphics[width=\textwidth]{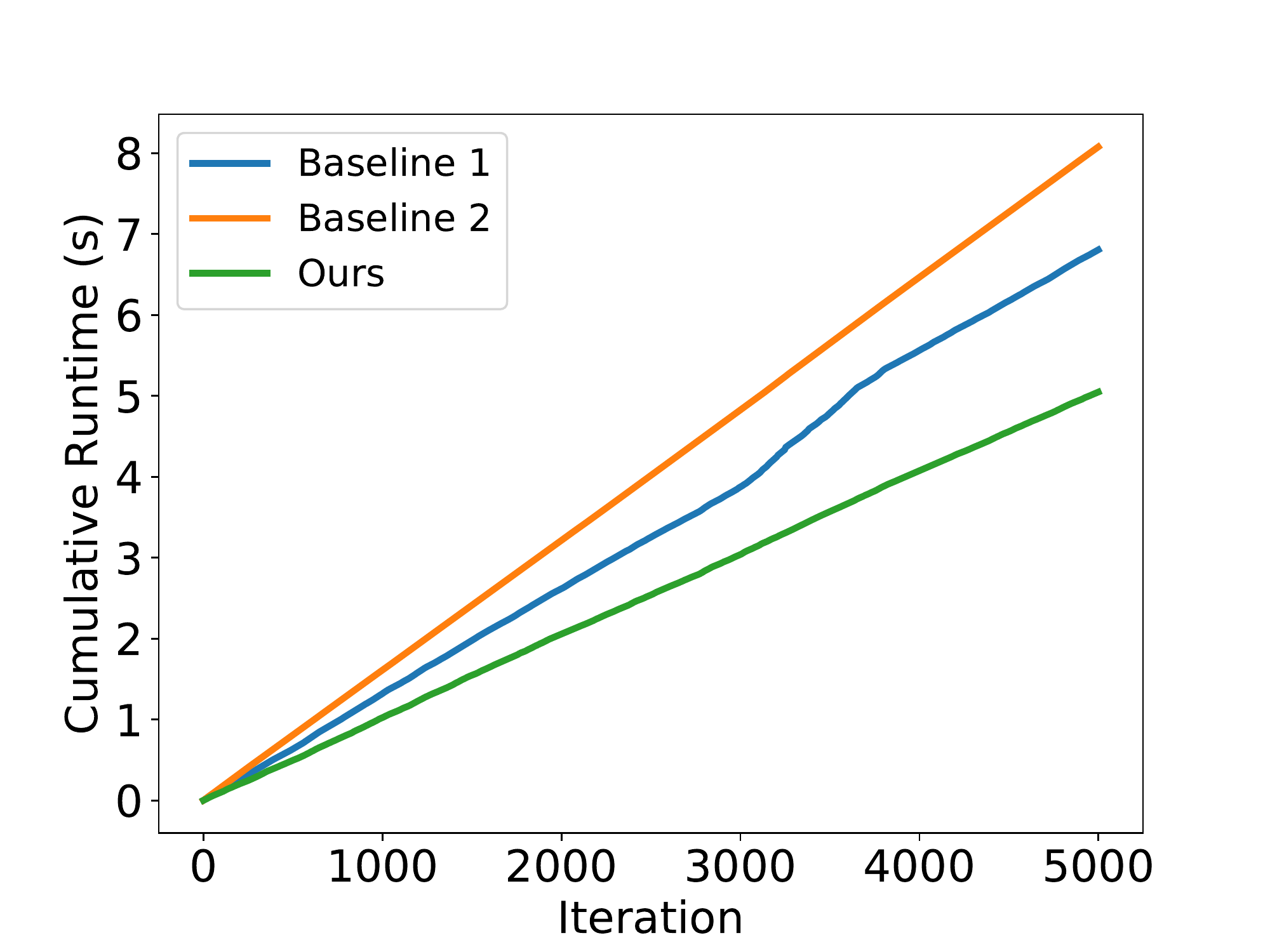}
         \caption{}
         \figlab{fig:synthetic_c}
     \end{subfigure}
        \caption{Figures for our experiments.}
        \figlab{fig:synthetic}
\end{figure}

\def\shortbib{0}
\bibliographystyle{alpha}
\bibliography{references}

\appendix

\section{A Simple Framework for Adversarial Robustness}
\seclab{sec:framework}
In this section, we describe the benchmark framework that enables $Q$ adaptive queries to a data structure by using $\tO{\sqrt{Q}}$ copies of a non-adaptive data structure. 
The framework and corresponding analysis of correctness are simply compartmentalizations of the techniques in \cite{HassidimKMMS20,BeimelKMNSS22,AttiasCSS23}. 
For the sake of completeness, we include them here and discuss additional applications. 
Namely, we show that through advanced composition of differential privacy, the private median of $\tO{\sqrt{Q}}$ copies protects the internal randomness of each non-adaptive data structure while still adding sufficiently small noise to guarantee accuracy. 
Moreover, we use amplification of privacy by sampling to only consider a small subset of the $\tO{\sqrt{Q}}$ non-adaptive data structures to further improve the runtime. 

\begin{algorithm}[!htb]
\caption{Adaptive Algorithm Interaction}
\alglab{alg:framework}
\begin{algorithmic}[1]
\State{$r\gets\O{\sqrt{Q}\log^2(nQ)}$, $k\gets\O{\log(nQ)}$}
\For{$i\in[r]$}
\State{Implement data structure $\calD_i$ on the input}
\EndFor
\For{each query $q_i$, $i\in[Q]$}
\State{Let $S$ be a set of $k$ indices sampled (with replacement) from $[r]$}
\State{For each $j\in[k]$, let $d_{i,j}$ be the output of $\calD_{S_j}$ on query $q_i$}
\State{$d_i\gets\privmed(\{d_{i,j}\}_{j\in[k]})$, where $\privmed$ is $(1,0)$-DP}
\EndFor
\end{algorithmic}
\end{algorithm}

We first argue that \algref{alg:framework} maintains accuracy against $Q$ rounds of interaction with an adaptive adversary. 
Let $R=\{R^{(0)},R^{(1)},\ldots,R^{(r)}\}$, where $R^{(1)},\ldots,R^{(r)}$ denotes the random strings used by the oblivious data structures $\calD_1,\ldots,\calD_r$ and $R^{(0)}$ denotes the additional randomness used by \algref{alg:framework}, such as in the private median subroutine $\privmed$. 
Consider a transcript $T(R)=\{T_1,\ldots,T_Q\}$ such that for each $i\in[Q]$, we define $T_i=(q_i,d_i)$ to be the ordered pair consisting of the query $q_i$ and the corresponding answer $d_i$ by \algref{alg:framework} using the random string $R^{(0)}$, as well as the oblivious data structures $\calD_1,\ldots,\calD_r$ with random strings $R^{(1)},\ldots,R^{(r)}$. 
We remark that $d_i$ is a random variable due to the randomness of each data structure, as well as the randomness of the private median subroutine $\privmed$. 

We will first argue that the transcript $T_R$ is differentially private with respect to $R$. 
We emphasize that similar arguments were made in the streaming model by~\cite{HassidimKMMS20} and in the dynamic model~\cite{BeimelKMNSS22,AttiasCSS23}. 

\begin{lemma}
\lemlab{lem:dp:single:iter}
For a fixed iteration, $T_i$ is $\left(\O{\frac{1}{\sqrt{Q}\log(nQ)}},0\right)$-differentially private with respect to $R$. 
\end{lemma}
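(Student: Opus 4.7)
The plan is to view the $r$ oblivious data structures $\calD_1,\ldots,\calD_r$ as a ``database'' of size $r$, where the element associated with index $j\in[r]$ is the random string $R^{(j)}$ (equivalently, the instantiated data structure $\calD_j$). Two neighboring random seeds $R$ and $R'$ then differ in exactly one coordinate $R^{(j^*)}$, and we want to bound the distinguishability of $T_i$ under $R$ versus $R'$ for a fixed round $i$.

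First, I would argue that conditioned on the subsample $S\subseteq[r]$ of size $k$ drawn in round $i$, the reported value $d_i=\privmed(\{d_{i,j}\}_{j\in S})$ is $(1,0)$-differentially private with respect to the inputs fed into $\privmed$. Changing $R^{(j^*)}$ can only alter the single entry $d_{i,j^*}$ in that multiset (it leaves every other $R^{(j)}$, and hence every other $d_{i,j}$, unchanged), so the map from $R$ to $d_i$ (with $S$ fixed) is a $(1,0)$-DP mechanism applied to a dataset that differs in one element. The query $q_i$ and all previous transcript entries are functions only of $R$ through earlier rounds, so for the purpose of this single-round analysis we may treat $q_i$ as a fixed input to $\privmed$.

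Next, I would invoke privacy amplification by subsampling (\thmref{thm:dp:sampling}) to account for the fact that $S$ is a uniformly random multiset of size $k$ drawn (with replacement) from $[r]$, independently of the data structures. With $\eps=1$, $\delta=0$, sample size $k=\O{\log(nQ)}$, and database size $r=\O{\sqrt{Q}\log^2(nQ)}$, the theorem (once one verifies the mild conditions $\eps\le 1$ and $k\le r/2$, which hold for all sufficiently large $Q$) yields that the subsampled mechanism is $(\eps',0)$-DP with
\[
\eps' \;=\; \O{\frac{k}{r}} \;=\; \O{\frac{\log(nQ)}{\sqrt{Q}\,\log^2(nQ)}} \;=\; \O{\frac{1}{\sqrt{Q}\,\log(nQ)}},
\]
which is exactly the claimed bound on $T_i$.

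The only subtle point is bookkeeping: the transcript $T_i=(q_i,d_i)$ includes the query $q_i$, but $q_i$ is determined by the adversary from the previous transcript $T_1,\ldots,T_{i-1}$ and is therefore a deterministic function of inputs we are not currently varying in this single-round statement; differential privacy is closed under post-processing and under appending such data-independent information, so this causes no loss. I do not expect a real obstacle here; the content of the lemma is essentially the combination of the one-element sensitivity of $\privmed$ with subsampling amplification, and the parameter choices $r$ and $k$ in \algref{alg:framework} are calibrated precisely so that the amplified $\eps'$ hits $\O{1/(\sqrt{Q}\log(nQ))}$, which will later feed into advanced composition (\thmref{thm:adaptive:queries}) across all $Q$ rounds.
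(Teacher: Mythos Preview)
Your proposal is correct and follows essentially the same approach as the paper: observe that $\privmed$ is $(1,0)$-DP on the $r$ data structures viewed as a database, then apply amplification by subsampling (\thmref{thm:dp:sampling}) with $k=\O{\log(nQ)}$ and $r=\O{\sqrt{Q}\log^2(nQ)}$ to obtain the stated $\eps'$. Your write-up is in fact more careful than the paper's own proof, which does not spell out the neighboring-database interpretation or the handling of $q_i$.
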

\begin{proof}
We first observe that $\privmed$ is $(1,0)$-differentially private on the outputs of the $r=\O{\sqrt{Q}\log^2(nQ)}$ data structures. 
\algref{alg:framework} samples $k=\O{\log(nQ)}$ groups of data structures from the $r$ total data structures. 
Thus by amplification via sampling, i.e., \thmref{thm:dp:sampling}, $\privmed$ is $\left(\O{\frac{1}{\sqrt{Q}\log(nQ)}},0\right)$-differentially private. 
Therefore, $T_i$ is $\left(\O{\frac{1}{\sqrt{Q}\log(nQ)}},0\right)$-differentially private with respect to $R$.
\end{proof}

\noindent
We next argue that the entire transcript is differentially private with respect to the randomness $R$. 
\begin{lemma}
\lemlab{lem:dp:all:iter}
$T$ is $\left(\O{1},\frac{1}{\poly(nQ)}\right)$-differentially private with respect to $R$. 
\end{lemma}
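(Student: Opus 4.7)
The plan is to invoke advanced composition (\thmref{thm:adaptive:queries}) on top of \lemref{lem:dp:single:iter}. Since the adversary's $i$-th query $q_i$ is determined by the previous transcript $T_1, \ldots, T_{i-1}$, the full transcript $T$ is exactly the output of $Q$ adaptive interactions with mechanisms each of which is $(\eps_0, 0)$-differentially private with respect to $R$, where $\eps_0 = \Theta\!\left(\tfrac{1}{\sqrt{Q}\log(nQ)}\right)$ from the previous lemma.

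I would set $k \coloneqq Q$, $\eps \coloneqq \eps_0$, $\delta \coloneqq 0$, and choose $\delta' = \tfrac{1}{\poly(nQ)}$ in the statement of \thmref{thm:adaptive:queries}. Advanced composition then yields that $T$ is $(\eps', \delta')$-differentially private with respect to $R$, where
\[
\eps' \;=\; \sqrt{2Q \ln(1/\delta')}\cdot \eps_0 \;+\; 2Q\,\eps_0^2.
\]
The verification that $\eps' = O(1)$ is a short calculation: the first term is at most $\sqrt{Q \cdot O(\log(nQ))} \cdot \Theta\!\left(\tfrac{1}{\sqrt{Q}\log(nQ)}\right) = O\!\left(\tfrac{1}{\sqrt{\log(nQ)}}\right)$, and the second term is $\Theta\!\left(\tfrac{Q}{Q\log^2(nQ)}\right) = O\!\left(\tfrac{1}{\log^2(nQ)}\right)$, so both are $O(1)$. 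The additive privacy loss is $k\delta + \delta' = 0 + \tfrac{1}{\poly(nQ)}$.

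The only subtlety is to make sure that the composition theorem applies, i.e., that each step of the interaction is differentially private with respect to the \emph{entire} randomness $R$ (not just the randomness used within that step). This follows because \lemref{lem:dp:single:iter} is already stated with respect to $R$, and the sampling of the subset $S$ of data structures in the $i$-th round together with the call to $\privmed$ depends only on $R^{(0)}$ and the outputs of the sampled $\calD_j$'s; conditional on the previous transcript, the adaptive choice of $q_i$ is a function of $T_1,\ldots,T_{i-1}$ and is independent of any randomness not used in those rounds. Hence the standard adaptive composition framework applies and the result follows. No grinding beyond the inequalities above should be needed.
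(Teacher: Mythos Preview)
Your proposal is correct and follows exactly the paper's approach: invoke \lemref{lem:dp:single:iter} for the per-round guarantee and then apply advanced composition (\thmref{thm:adaptive:queries}) over the $Q$ adaptive rounds with $\delta' = \tfrac{1}{\poly(nQ)}$. The paper's proof is essentially a one-line version of yours; your additional calculation verifying $\eps' = O(1)$ and your remark about privacy being with respect to all of $R$ are accurate elaborations, not deviations.
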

\begin{proof}
By \lemref{lem:dp:single:iter}, for each fixed iteration $i\in[Q]$, the transcript $T_i$ is $\left(\O{\frac{1}{\sqrt{Q}\log(nQ)}},0\right)$-differentially private with respect to $R$. 
Note that the transcript $T$ is an adaptive composition of the transcripts $T_1,\ldots,T_Q$. 
Thus, by the advanced composition of differential privacy, i.e., \thmref{thm:adaptive:queries}, the transcript $T$ is $\left(\O{1},\frac{1}{\poly(nQ)}\right)$-differentially private with respect to $R$. 
\end{proof}

\noindent
We now prove the correctness of our unifying framework. 
\begin{proofof}{\thmref{thm:main:framework}}
For a fixed query $q_i$ with $i\in[Q]$, let $S$ be the corresponding set of $k$ indices sampled from $[r]$. 
Let $\calV$ be the set of valid answers on query $q_i$. 
Let $I_j$ be an indicator variable for whether the output $d_{i,j}$ on query $q_i$ by $\calD_{S_j}$ is correct, so that $I_j=1$ if $d_{i,j}\in\calV$ and $I_j=0$ if $d_{i,j}\notin\calV$.  
By assumption, we have that for each $j\in[k]$, 
\[\PPr{I_j=1}\ge\frac{3}{4},\]
so that $\Ex{I_j}\ge\frac{3}{4}$. 
We define the random variable $I=\frac{1}{k}\sum_{j\in[k]}I_j$ so that by linearity of expectation, $\Ex{I}=\frac{1}{k}\sum_{j\in[k]}\Ex{I_j}\ge\frac{3}{4}$. 

To handle a sequence of $Q$ adaptive queries, we consider the transcript $T(R)=\{T_1,\ldots,T_Q\}$ for the randomness $R=\{R^{(0)},R^{(1)},\ldots,R^{(r)}\}$ previously defined, i.e., for each $i\in[Q]$, $T_i=(q_i,d_i)$ is the ordered pair consisting of the query $q_i$ and the corresponding answer $d_i$ by \algref{alg:framework} using the random string $R^{(0)}$, as well as the oblivious data structures $\calD_1,\ldots,\calD_r$ with random strings $R^{(1)},\ldots,R^{(r)}$. 
By \lemref{lem:dp:all:iter}, we have that $T$ is $\left(\O{1},\frac{1}{\poly(nQ)}\right)$-differentially private with respect to $R$. 

For $j\in[k]$, we define the function $\issucc(R^{(S_j)})$ to be the indicator variable for whether the output $d_{i,S_j}$ by data structure $D_{S_j}$ is successful on query $q_i$. 
For example, if $D$ is supposed to answer queries within $(1+\alpha)$-approximation, then we define $\issucc(R^{(S_j)})$ to be one if $d_{i,S_j}$ is within a $(1+\alpha)$-approximation to the true answer on query $q_i$, and zero otherwise. 
From the generalization properties of differential privacy, i.e., \thmref{thm:generalization}, we have
\[\PPr{\left|\frac{1}{k}\sum_{j\in[k]}\issucc(R^{(S_j)}) -\EEx{\overline{R}}{\issucc(\overline{R})}\right|\ge\frac{1}{10}}<\frac{1}{\poly(n,Q)},\]
for sufficiently small $\O{1}$. 
Therefore, by a union bound over all $Q$ queries, we have
\[\PPr{\frac{1}{k}\sum_{i\in[k]} I_i>0.6}>1-\frac{1}{\poly(n,Q)},\]
which implies that $d_i$ is correct on query $q_i$. 
Then by a union bound over $i\in[Q]$ for all $Q$ adaptive queries, we have that the data structure answers all $Q$ adaptive queries with high probability. 
\end{proofof}
\thmref{thm:main:framework} has applications to a number of central problems in data science and machine learning, such as adaptive distance estimation, kernel density estimation, nearest neighbor search, matrix-vector norm queries, linear regression,  range queries, and point queries. 
In the remainder of the section, we formally describe the range queries, point queries, matrix-vector norm queries, and linear regression problems; we defer discussion of adaptive distance estimation, kernel density estimation, and nearest neighbor search to the subsequent sections. 

\subsection{Application: Matrix-Vector Norm Queries}
In the matrix-vector norm query problem, we are given a matrix $\bA\in\mathbb{R}^{n\times d}$ and we would like to handle $Q$ adaptive queries $\bx^{(1)},\ldots,\bx^{(Q)}$ for an approximation parameter $\eps>0$ by outputting a $(1+\eps)$-approximation to $\|\bA\bx^{(i)}\|_p$ for each query $\bx^{(i)}\in\mathbb{R}^d$ with $i\in[Q]$. 
Here we define $\|\bv\|_p^p=\sum_{i\in[d]}(v_i)^p$ for a vector $\bv\in\mathbb{R}^d$. 
Observe that computing $\bA\bx^{(i)}$ explicitly and then computing its $p$-norm requires $\O{\nd}$ time. 
Thus for $n\gg d$, a much faster approach is to produce a subspace embedding, i.e., to compute a matrix $\bM\in\mathbb{R}^{m\times d}$ with $m\ll n$, such that for all $\bx\in\mathbb{R}^d$,
\[(1-\eps)\|\bA\bx\|_p\le\|\bM\bx\|_p\le(1+\eps)\|\bA\bx\|_p.\]
However, because subspace embeddings must be correct over all possible queries, the number of rows of $\bM$ is usually $m=\Omega\left(\frac{d}{\eps^2}\right)$ due to requiring correctness over an $\eps$-net. 

\begin{theorem}[\cite{Indyk06,Li08}]
\thmlab{thm:p:stable}
Given $\bA\in\mathbb{R}^{n\times d}$, $p\in(0,2]$, and an accuracy parameter $\eps>0$, there exists an algorithm that creates a data structure that uses $\O{\frac{1}{\eps^2}\log n}$ bits of space and outputs a $(1+\eps)$-approximation to $\|\bA\bx\|_p$ for a query $\bx\in\mathbb{R}^d$, with high probability, in time $\O{\frac{d}{\eps^2}\log n}$.
\end{theorem}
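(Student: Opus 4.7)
The plan is to invoke the classical $p$-stable sketch of Indyk (extended to $p \in (0,2]$ by Li). Recall that a distribution $\calD_p$ on $\mathbb{R}$ is $p$-stable if, whenever $Z_1,\dots,Z_n \overset{iid}{\sim} \calD_p$, the sum $\sum_{i} c_i Z_i$ has the same distribution as $\|\bc\|_p \cdot Z$ for $Z \sim \calD_p$. Such distributions exist for every $p \in (0,2]$: Gaussians for $p = 2$, Cauchy for $p = 1$, and in general via their characteristic functions $\exp(-|t|^p)$. I will take these as a black box.

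First I would construct the data structure. Draw $\bS \in \mathbb{R}^{m \times n}$ with i.i.d.\ entries from $\calD_p$, where $m = \Theta(\eps^{-2} \log n)$, and precompute $\bM = \bS \bA \in \mathbb{R}^{m \times d}$ in the offline phase; this constitutes the data structure. To estimate $\|\bA \bx\|_p$ on query $\bx$, compute $\by = \bM \bx \in \mathbb{R}^m$ and return $\widehat{F} = (\median_{i \in [m]} |y_i|)/c_p$, where $c_p$ is the median of $|Z|$ for $Z \sim \calD_p$ (a fixed positive constant depending only on $p$). Computing $\by$ takes $O(md) = O(\eps^{-2} d \log n)$ time. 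For the stated space bound, I would represent the entries of $\bS$ implicitly via a pseudorandom generator with seed length $O(\eps^{-2} \log n)$ (in the spirit of Nisan's PRG / $k$-wise independence, which suffice for this analysis), so that only the seed plus the sketch output need to be retained.

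Correctness follows from $p$-stability applied coordinatewise: fixing $\bx$, the rows of $\bS$ act independently, so each $y_i = \langle (\bS\bA)_i, \bx \rangle = \langle \bS_i, \bA \bx \rangle$ is distributed exactly as $\|\bA \bx\|_p \cdot Z_i$, with $Z_1,\dots,Z_m \overset{iid}{\sim} \calD_p$. Thus $|y_i| / \|\bA \bx\|_p$ are $m$ i.i.d.\ samples from the distribution of $|Z|$, whose population median equals $c_p$. By a standard median-concentration argument (Chernoff bound on the fraction of samples lying in $[(1-\eps) c_p, (1+\eps) c_p]$, using the fact that the cdf of $|Z|$ has density bounded away from $0$ in a neighborhood of $c_p$), the empirical median of $|y_i|$ lies in $(1 \pm \O{\eps}) c_p \|\bA \bx\|_p$ except with probability $\exp(-\Omega(\eps^2 m)) \le 1/\poly(n)$. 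Rescaling by $1/c_p$ and adjusting the $\eps$ absorbed in $\O{\cdot}$ yields a $(1+\eps)$-approximation with high probability.

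The main obstacle is the median-concentration step: one needs quantitative local anti-concentration of $\calD_p$ near its median (for $p \neq 1, 2$ the density is only given via its characteristic function, and for small $p$ care is needed near $0$), but this is by now standard and only enters through the constant in $m = \Theta(\eps^{-2} \log n)$. A secondary subtlety is the discretization of $\bS$ so that each sketch coordinate can be stored in $O(\log n)$ bits without spoiling the estimator; truncating entries at $\poly(n)$ and rounding to the nearest multiple of $1/\poly(n)$ introduces additive error absorbed into $\eps$.
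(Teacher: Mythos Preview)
The paper does not prove this theorem; it is quoted as a black-box result from \cite{Indyk06,Li08}, with only a one-sentence description afterward (``creates a matrix $\bR\in\mathbb{R}^{m\times n}$ of random variables sampled from the $p$-stable distribution and then stores the matrix $\bR\bA$ \ldots\ computes a predetermined function on $\bR\bA\bx$''). Your proposal is exactly this construction, fleshed out with the standard median estimator and the usual correctness argument via $p$-stability plus median concentration, so you are aligned with what the paper intends.

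One point to be careful about: your space argument does not quite land. You say you will store ``only the seed plus the sketch output,'' but in this data-structure setting there is no single sketch output to retain---to answer an arbitrary query $\bx$ you must be able to form $\bS\bA\bx$, which requires either $\bA$ or the precomputed $\bM=\bS\bA\in\mathbb{R}^{m\times d}$. Storing $\bM$ is $\Theta(md)$ words, not $\O{\eps^{-2}\log n}$ bits; a PRG for $\bS$ alone does not help once $\bA$ has been discarded. (The paper's own one-line description in fact says the data structure ``stores the matrix $\bR\bA$,'' which is $md$ entries, so the stated bit bound in the theorem is itself loose or is counting only the sketch dimension $m$.) This does not affect the correctness or query-time parts of your argument, which are fine.
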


\thmref{thm:p:stable} essentially creates a matrix $\bR\in\mathbb{R}^{m\times n}$ of random variables sampled from the $p$-stable distribution~\cite{Zolotarev86} and then stores the matrix $\bR\bA$. 
Once the query $\bx$ arrives, the data structure then outputs a $(1+\eps)$-approximation to $\|\bA\bx\|_p$ by computing a predetermined function on $\bR\bA\bx$. 
The restriction on $p\in(0,2]$ is due to the fact that the $p$-stable distributions only exist for $p\in(0,2]$. 
From \thmref{thm:p:stable} and \thmref{thm:main:framework}, we have the following:
\begin{theorem}
Given $\bA\in\mathbb{R}^{n\times d}$, $p\in(0,2]$, and an accuracy parameter $\eps>0$, there exists an algorithm that creates a data structure that uses $\O{\frac{\sqrt{Q}}{\eps^2}\log^2(nQ)}$ bits of space and outputs a $(1+\eps)$-approximation to $\|\bA\bx^{(i)}\|_p$ with $i\in[Q]$ for $Q$ adaptive queries $\bx^{(1)},\ldots,\bx^{(Q)}\in\mathbb{R}^d$, with high probability, in time $\tO{\frac{d}{\eps^2}\log^2(nQ)+\log^3(nQ)}$.
\end{theorem}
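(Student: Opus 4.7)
The plan is a direct black-box composition of \thmref{thm:p:stable} with the adaptive framework of \thmref{thm:main:framework}. First I would instantiate the non-adaptive building block: given $\bA\in\mathbb{R}^{n\times d}$ and $p\in(0,2]$, \thmref{thm:p:stable} furnishes a data structure that stores $\bR\bA$ for a $p$-stable sketching matrix $\bR$ and, on a query $\bx\in\mathbb{R}^d$, outputs a $(1+\eps)$-approximation to $\|\bA\bx\|_p$ using $S=\O{\tfrac{1}{\eps^2}\log n}$ bits of space and $T=\O{\tfrac{d}{\eps^2}\log n}$ query time, correctly with high probability---in particular, with probability at least $3/4$ as required by the framework.

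Next I would feed this building block into \thmref{thm:main:framework}. The framework maintains $r=\tO{\sqrt{Q}}$ independent copies of the non-adaptive sketch (initialized with independent $p$-stable matrices $\bR_1,\ldots,\bR_r$), and on each adaptive query $\bx^{(i)}$ it subsamples $k=\O{\log(nQ)}$ of them and releases the private median of their estimates. Because each copy independently satisfies the $3/4$ success assumption, \thmref{thm:main:framework} immediately guarantees that all $Q$ adaptive queries are answered within a $(1+\eps)$-factor with high probability.

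Finally I would substitute the parameters $S$ and $T$ into the resource bounds provided by \thmref{thm:main:framework}. This gives space
\[\O{S\sqrt{Q}\log(nQ)}=\O{\tfrac{\sqrt{Q}}{\eps^2}\log n\cdot\log(nQ)}=\O{\tfrac{\sqrt{Q}}{\eps^2}\log^2(nQ)}\]
and query time
\[\tO{T\log(nQ)+\log^3(nQ)}=\tO{\tfrac{d}{\eps^2}\log^2(nQ)+\log^3(nQ)},\]
matching the claimed bounds. There is no real obstacle here, as the proof is essentially a parameter calculation; the only thing worth double-checking is that the $p$-stable sketch's success probability meets the framework's threshold, which it does since "high probability" strictly exceeds $3/4$, and that the linearity of $\bR\bA\bx$ in $\bx$ allows the aggregated private median to be computed in the claimed time per query.
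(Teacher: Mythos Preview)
Your proposal is correct and matches the paper's approach exactly: the paper simply states that the theorem follows from \thmref{thm:p:stable} and \thmref{thm:main:framework}, and your parameter substitution is precisely the intended derivation.
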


\subsection{Application: Linear Regression}
In the linear regression problem, we are given a fixed matrix $\bA\in\mathbb{R}^{n\times d}$ and we would like to handle $Q$ adaptive queries $\bb^{(1)},\ldots,\bb^{(Q)}$, for an approximation parameter $\eps>0$, by outputting a $(1+\eps)$-approximation to $\min_{\bx\in\mathbb{R}^d}\|\bA\bx-\bb^{(i)}\|_2$ for each query $\bb^{(i)}\in\mathbb{R}^n$ with $i\in[Q]$. 
For linear regression, we can again compute a subspace embedding $\bM=\bS\bA\in\mathbb{R}^{m\times n}$ and answer a query $\bb^{(i)}$ by approximately solving $\min_{\bx\in\mathbb{R}^d}\|\bS\bA\bx-\bS\bb^{(i)}\|_2$, where $\bS$ is a sketching matrix~\cite{ClarksonW13}. 

\begin{theorem}[\cite{ClarksonW13}]
\thmlab{thm:se}
Given $\bA\in\mathbb{R}^{n\times d}$, $\bb\in\mathbb{R}^n$, and an accuracy parameter $\eps>0$, there exists an algorithm that creates a data structure that uses $\O{\frac{d^2}{\eps^2}\log^2(nQ)}$ bits of space and outputs a $(1+\eps)$-approximation to $\min_{\bx\in\mathbb{R}^d}\|\bA\bx-\bb\|_2$ with high probability. 
\end{theorem}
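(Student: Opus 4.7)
The plan is to recall the standard sketch-and-solve recipe based on a subspace embedding. I would first instantiate a sparse sketching matrix $\bS\in\R^{m\times n}$ of the CountSketch / OSNAP type with $m=\O{d^2/\eps^2}$ rows (the exact constant coming from Clarkson--Woodruff), which with high probability is a $(1\pm\eps)$ $\ell_2$ subspace embedding for the column space of $\bA$ in the sense of \defref{def:sub-emb}. In preprocessing, store the compressed matrix $\bM=\bS\bA\in\R^{m\times d}$; this is the entire data structure. The space is dominated by the $md=\O{d^2/\eps^2}$ entries of $\bM$, each representable in $\O{\log(nQ))}$ bits (accounting for the success amplification and bit precision needed to guarantee correctness on an adaptive sequence of queries), giving the stated $\O{\frac{d^2}{\eps^2}\log^2(nQ)}$ bits.

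For each query $\bb\in\R^n$, I would compute $\bS\bb$ (which is fast since $\bS$ is sparse) and then solve the sketched regression $\bx^\star=\argmin_{\bx\in\R^d}\|\bM\bx-\bS\bb\|_2$ exactly, e.g., via QR decomposition of $\bM$ or the normal equations, and output $\|\bA\bx^\star-\bb\|_2$ (or equivalently an estimate computed from the sketched residual). The correctness argument is the standard one: letting $\bx_{\mathrm{opt}}$ be the true minimizer, the subspace embedding property applied to the augmented matrix $[\bA\mid\bb]$ (or, equivalently, a two-sided application: $(1-\eps)\|\bA\bx^\star-\bb\|_2\le\|\bS(\bA\bx^\star-\bb)\|_2\le\|\bS(\bA\bx_{\mathrm{opt}}-\bb)\|_2\le(1+\eps)\|\bA\bx_{\mathrm{opt}}-\bb\|_2$) yields the $(1+\eps)$-approximation guarantee on the objective.

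Success amplification from the constant-probability guarantee of the CountSketch-style embedding to the ``with high probability'' statement can be handled either by repeating the sketch $\O{\log(nQ)}$ times and taking the best residual, or by directly applying the framework of \thmref{thm:main:framework} to convert the non-adaptive Clarkson--Woodruff data structure into one robust to adaptive queries. In the latter case, the $\log^2(nQ)$ bit-complexity factor absorbs both the polylogarithmic blow-up from advanced composition of differential privacy and the usual $\O{\log n}$ bits needed to represent a single sketched entry.

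The main obstacle I anticipate is the bookkeeping of bit complexity rather than anything conceptually difficult: one must verify that the entries of $\bM=\bS\bA$ and of the least-squares solution to the sketched system can be stored and manipulated in $\O{\log(nQ))}$ bits per word without destroying the $(1\pm\eps)$ guarantee. This is routine under the usual assumption that entries of $\bA,\bb$ are polynomially bounded and boils down to a standard perturbation-of-minimizers argument, so the rest of the proof amounts to invoking \thmref{thm:main:framework} (or direct repetition) on top of the non-adaptive Clarkson--Woodruff embedding.
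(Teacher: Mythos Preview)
The paper does not actually prove \thmref{thm:se}; it is stated as a citation to \cite{ClarksonW13} and used only as a black-box ingredient for the subsequent corollary via \thmref{thm:main:framework}. So there is no paper-side argument to compare against, and your sketch-and-solve outline is indeed the standard Clarkson--Woodruff recipe that underlies the cited result.

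That said, there is a genuine arithmetic slip in your space accounting. You take $\bS$ to have $m=\O{d^2/\eps^2}$ rows (the CountSketch bound) and then claim the stored matrix $\bM=\bS\bA\in\R^{m\times d}$ has $md=\O{d^2/\eps^2}$ entries; in fact $md=\O{d^3/\eps^2}$. To hit the stated $\O{d^2/\eps^2}$-word budget you need an embedding with $m=\O{d/\eps^2}$ rows (up to logs), which CountSketch alone does not give. The usual fix is a two-stage sketch: first apply a sparse OSE to get down to $\poly(d/\eps)$ rows in input-sparsity time, then compose with a dense Gaussian or SRHT to reach $m=\tO{d/\eps^2}$. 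Alternatively one stores only a size-$d\times d$ factorization of $\bS\bA$ (e.g., its $R$-factor) rather than $\bS\bA$ itself. Either way, the row-count/space arithmetic must be corrected for the bound in the theorem to go through. A second minor point: since the data structure does not retain $\bA$, you cannot output $\|\bA\bx^\star-\bb\|_2$ directly; your parenthetical about the sketched residual is the right object, but the embedding must then be for the column space of $[\bA\mid\bb]$, which you should state explicitly.
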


However, this may fail for multiple interactions with the data structure. 
For example, suppose the adversary learns the kernel of $\bS$. 
Then the adversary could query some vector $\bb^{(i)}$ in the kernel of $\bS$ so that $\bS\bb^{(i)}$ is the all zeros vector, so that the output is the all zeros vector of dimension $d$, which could be arbitrarily bad compared to the actual minimizer. 
Thus the na\"{i}ve approach is to maintain $Q$ subspace embeddings, one for each query, resulting in a data structure with space $\tO{\frac{Qd}{\eps^2}}$. 
By comparison, \thmref{thm:se} and \thmref{thm:main:framework} yield the following:

\begin{theorem}
Given $\bA\in\mathbb{R}^{n\times d}$ and an accuracy parameter $\eps>0$, there exists an algorithm that creates a data structure that uses $\O{\frac{\sqrt{Q}d^2}{\eps^2}\log^3(nQ)}$ bits of space and with high probability, outputs $(1+\eps)$-approximations to $\min_{\bx\in\mathbb{R}^d}\|\bA\bx-\bb^{(i)}\|_2$ for $Q$ adaptive queries $\bb^{(1)},\ldots,\bb^{(Q)}$. 
\end{theorem}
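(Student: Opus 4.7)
The plan is to simply plug the non-adaptive subspace embedding data structure of \thmref{thm:se} into the generic framework of \thmref{thm:main:framework}. The base data structure, when instantiated with error parameter $\eps$ and success probability $\geq 3/4$, uses $\O{\frac{d^2}{\eps^2}\log^2(nQ)}$ bits of space and, on query $\bb$, returns a $(1+\eps)$-approximation to $\min_{\bx\in\R^d}\|\bA\bx-\bb\|_2$. Concretely, one maintains $\bS\bA$ for a sketching matrix $\bS$ of \cite{ClarksonW13} and answers the query by solving the sketched regression problem $\min_{\bx}\|\bS\bA\bx-\bS\bb\|_2$; the matrix $\bA$ is fixed throughout, so only the query-dependent part needs to be processed per round.

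Next, I would invoke \thmref{thm:main:framework} as a black box. Instantiate $r=\O{\sqrt{Q}\log^2(nQ)}$ independent copies $\calD_1,\ldots,\calD_r$ of this data structure, each using independent randomness for its sketching matrix. On each adaptive query $\bb^{(i)}$, sample $k=\O{\log(nQ)}$ copies uniformly at random, evaluate each on $\bb^{(i)}$, and aggregate the $k$ scalar approximations using the private median $\privmed$ of \thmref{thm:dp:median}. By \thmref{thm:main:framework}, the resulting aggregated output is a correct $(1+\eps)$-approximation to $\min_{\bx}\|\bA\bx-\bb^{(i)}\|_2$ simultaneously for all $i\in[Q]$ with high probability, even against an adaptive adversary, because the private median combined with subsampling and advanced composition of differential privacy hides the internal randomness of the individual sketches across all $Q$ rounds.

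The space bound follows immediately by multiplication: $r$ copies of the base data structure use
\[
\O{r\cdot\frac{d^2}{\eps^2}\log^2(nQ)}=\O{\frac{\sqrt{Q}\,d^2}{\eps^2}\log^3(nQ)}
\]
bits in total, which matches the claimed bound. There is no real obstacle here since correctness has already been carried out once and for all in the proof of \thmref{thm:main:framework}; the only thing to verify is that the base guarantee of \thmref{thm:se} genuinely meets the hypothesis of the framework, namely a $(1+\eps)$-approximation that succeeds with constant probability $\geq 3/4$ on a single non-adaptive query, which \cite{ClarksonW13} provides (and which can be boosted to $3/4$ by a constant factor adjustment to the sketching dimension if needed). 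The query time bound, if desired, likewise follows from $\tO{T\log(nQ)+\log^3(nQ)}$ in \thmref{thm:main:framework} applied to the per-query cost of solving the sketched least-squares problem.
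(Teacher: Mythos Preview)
Your proposal is correct and matches the paper's approach exactly: the paper simply states that the result follows by combining \thmref{thm:se} with the generic framework of \thmref{thm:main:framework}, and you have spelled out precisely this instantiation together with the space calculation. There is nothing to add.
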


\subsection{Application: Half-Space Queries}
Given a set $P$ of $n$ points in $\R^d$, the range query or search problem asks us to pre-process $P$ so that given a region $R$, chosen from a predetermined family, one can quickly count or return the points in $P \cap R$. 
This is an extremely well-studied class of problems in computational geometry \cite{toth2017handbook} and the case where the regions $R$ are hyperplanes (also called half-spaces) is of special interest since many algebraic constraints can be ``lifted" to be hyperplanes in a higher dimension. 

Unfortunately, exact versions of the problem are known to have the ``curse of dimensionality" and suffer from exponential dependence on $d$ in the query time \cite{bronnimann1993hard, chazelle2000discrepancy}. 
Nonetheless, \cite{chazelle2008approximate} gave a data structure capable of answering hyperplane queries \emph{approximately} with polynomial query time. 
Their notion of approximation is as follows: given a set of points $P$ in the unit $\ell_2$ ball, hyperplane $R$, and $\eps > 0$, we return the number of points that are on a given side of the hyperplane $R$ up to additive error equal to the number of points in $P$ which lie within distance $\eps$ of the boundary of $R$. We will refer to this query as an $\eps$-approximate hyperplane query. 
\cite{chazelle2008approximate} proved the following theorem.

\begin{theorem}[\cite{chazelle2008approximate}]
Given a set of points $P$ that lie in the unit $\ell_2$ ball, there exists a data structure that pre-processes $P$ using space $\tO{dn^{\O{\eps^{-2}}}}$ such that any $\eps$-approximate hyperplane range query is answered correctly with high probability. 
The query time is $\tO{d/\eps^2}$.
\end{theorem}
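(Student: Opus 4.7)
The plan is to combine a Johnson--Lindenstrauss style dimension reduction with an exact low-dimensional half-space range counting structure. First I would sample a JL matrix $\Pi \in \R^{k \times d}$ with target dimension $k = \Theta(\eps^{-2} \log n)$ and form the projected point set $\{\Pi p : p \in P\}$. Because $P$ lies in the unit $\ell_2$ ball, standard inner-product preservation guarantees (via the polarization identity applied to the $O(\eps)$-distortion of pairwise norms) imply that with high probability $|\langle \Pi p, \Pi v\rangle - \langle p, v\rangle| \le \eps/2$ uniformly for every $p \in P$ and every query normal $v$ drawn from a sufficiently fine net over the unit sphere. Hence the side of any query hyperplane with unit normal $v$ on which $\Pi p$ lies agrees with the side on which $p$ lies, with disagreements only possible for points $p$ within distance $O(\eps)$ of the hyperplane---precisely the slack that $\eps$-approximate hyperplane queries are allowed to absorb.

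Next I would build a standard exact half-space range counting structure on the projected points in $\R^k$. Classical partition-tree / lifted range-tree constructions from computational geometry answer exact half-space counting queries in $k$ dimensions using $n^{O(k)}$ space and $\polylog(n)$ query time. Substituting $k = O(\eps^{-2})$ yields the claimed space bound $\tO{d \cdot n^{O(\eps^{-2})}}$, where the extra factor of $d$ accounts for storing $\Pi$ and mapping each incoming query $v \in \R^d$ to $\Pi v \in \R^k$. The query time is dominated by computing $\Pi v$, costing $O(dk) = \tO{d/\eps^2}$, with the $\polylog(n)$ downstream cost absorbed into the $\tO{\cdot}$ notation.

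The main obstacle is arguing that a single random $\Pi$ simultaneously preserves the ``side'' information for \emph{all} query hyperplane normals, rather than only for one fixed query. I would handle this by discretizing the query sphere via an $\eps$-net $N$ of size $(1/\eps)^{O(k)}$ in the projected space $\R^k$ and taking a union bound over $P \times N$; the bound still succeeds with probability $1 - n^{-\Omega(1)}$ given the choice $k = \Theta(\eps^{-2} \log n)$. Any real query $v$ is then rounded to its nearest representative $\tilde v \in N$, and the discrepancy between the true hyperplane and its rounded counterpart can only change the classification of points lying in an $O(\eps)$-slab around the hyperplane, which is exactly the error the $\eps$-approximate query definition permits. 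This net argument is what makes a one-shot, non-adaptive JL projection suffice, and it is the technical core that I expect to demand the most care.
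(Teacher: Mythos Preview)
This theorem is not proved in the paper; it is quoted verbatim from \cite{chazelle2008approximate} and used as a black box inside the adaptive framework. So there is no ``paper's own proof'' to compare against.

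Turning to the substance of your sketch: the high-level plan (random linear projection followed by an exact low-dimensional half-space structure) is indeed the idea behind the cited result, but your writeup contains an internal inconsistency that matters. You set the target dimension to $k = \Theta(\eps^{-2}\log n)$ in order to union bound over the $n$ points, and then two sentences later you write ``substituting $k = O(\eps^{-2})$'' to obtain the stated space $n^{O(\eps^{-2})}$. These are not the same number. With $k = \Theta(\eps^{-2}\log n)$ the partition-tree space $n^{O(k)}$ becomes $n^{O(\eps^{-2}\log n)}$, which is not hidden by the $\tO{\cdot}$ in the theorem statement. Getting the exponent down to $O(\eps^{-2})$ is exactly the nontrivial part of the Chazelle--Liu--Magen construction, and it does not fall out of a plain JL-plus-net argument.

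Separately, your net argument is placed in the wrong space. You take an $\eps$-net $N$ of size $(1/\eps)^{O(k)}$ in the \emph{projected} space $\R^k$ and then union bound over $P \times N$. But the event you need to control is $\abs{\inp{\Pi p}{\Pi v} - \inp{p}{v}} \le \eps$ for $p \in P$ and the original query normal $v \in \R^d$; rounding $\Pi v$ to a nearby $\tilde w \in N \subset \R^k$ does not help, because $\inp{\Pi p}{\Pi v} - \inp{p}{v} = \inp{\Pi^\top\Pi p - p}{v}$, and with $k \ll d$ the vector $\Pi^\top\Pi p - p$ has norm $\Omega(1)$, so the error can be made large by an adversarial $v$ regardless of how fine $N$ is. A net in $\R^k$ cannot certify a for-all-$v$ guarantee in $\R^d$. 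In the context of this paper the cited theorem is only used as a for-each (per-query) guarantee, which is what the adaptive framework of \thmref{thm:main:framework} requires, so the net argument is in fact unnecessary; but even with the for-each reading, the $\log n$ in the exponent remains, and closing that gap requires the specific techniques of the original reference rather than the direct JL route you outline.
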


The data structure of \cite{chazelle2008approximate} is randomized and in particular employs randomized dimensionality reduction. Thus, it is feasible that queries might fail for multiple adaptive interactions with the data structure. By utilizing our framework of \secref{sec:framework} and \thmref{thm:main:framework}, we can obtain the following robust guarantee.

\begin{theorem}
Given a set of points $P$ that lie in the unit $\ell_2$ ball, there exists a data structure which pre-processes $P$ using space $\tO{\sqrt{Q}dn^{\O{\eps^{-2}}}}$ such that $Q$ adaptive $\eps$-hyperplane range queries are answered correctly with high probability. The query time is $\tO{d/\eps^2}$.
\end{theorem}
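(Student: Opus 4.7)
The plan is a direct black-box instantiation of the framework in \thmref{thm:main:framework} using the \cite{chazelle2008approximate} data structure as the underlying randomized primitive. First, I would verify the interface: the Chazelle--Liu approximate half-space query data structure takes $P\subset\mathbb{R}^d$ in the unit $\ell_2$ ball, uses space $\tO{dn^{\O{\eps^{-2}}}}$, and with probability at least $2/3$ (which is at least $3/4$ after a constant number of independent repetitions and majority, or by adjusting the constant in the exponent) answers any single $\eps$-approximate half-space query in time $\tO{d/\eps^2}$. This matches the hypothesis of \thmref{thm:main:framework} with $S=\tO{dn^{\O{\eps^{-2}}}}$ and $T=\tO{d/\eps^2}$.

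Next, I would apply \thmref{thm:main:framework} as a black box. Following \algref{alg:framework}, we instantiate $r=\O{\sqrt{Q}\log^2(nQ)}$ independent copies of the Chazelle--Liu data structure on $P$ during preprocessing. For each incoming adaptive query, which specifies a half-space $R_i$, we sample $k=\O{\log(nQ)}$ of the $r$ data structures, query each of them on $R_i$ to obtain integer counts $d_{i,1},\ldots,d_{i,k}$, and output $\privmed(\{d_{i,j}\}_{j\in[k]})$. The valid-answer set $\calV$ for the query $R_i$ is precisely the interval of integers lying within the additive slab-error tolerance, which is a totally ordered range, so the private median of counts is well-defined and remains a valid $\eps$-approximate answer whenever a majority of the sampled structures are correct. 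The correctness guarantee of \thmref{thm:main:framework} then yields that all $Q$ adaptive queries are answered correctly with probability $1-1/\poly(n,Q)$.

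Finally, I would tally resources. Space is $r\cdot S = \tO{\sqrt{Q}\,dn^{\O{\eps^{-2}}}}$, absorbing polylogarithmic factors into the $\tO{\cdot}$. Query time is $\tO{T\log(nQ)+\log^3(nQ)} = \tO{d/\eps^2}$, again hiding the $\log(nQ)$ factor together with the $\privmed$ overhead in the $\tO{\cdot}$ notation. These match the claimed bounds.

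The main obstacle is essentially a sanity check rather than a technical hurdle: one must ensure that the private-median aggregation is meaningful for the count-valued outputs of a range-counting structure. Because the Chazelle--Liu approximation guarantee bounds the additive error by the number of points within distance $\eps$ of the boundary of the half-space, the set $\calV$ of correct answers is a contiguous interval of integers, and so taking the private median of the $k$ sampled answers remains in $\calV$ whenever strictly more than half of them lie in $\calV$. Thus the generalization-of-DP argument used to prove \thmref{thm:main:framework} carries through without modification, and no additional structural property of the underlying data structure is needed.
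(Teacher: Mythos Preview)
Your proposal is correct and follows exactly the paper's approach: the paper simply invokes \thmref{thm:main:framework} on the \cite{chazelle2008approximate} data structure without further elaboration, and you have filled in the details appropriately. Your observation that the valid-answer set $\calV$ is a contiguous interval of integers (so that the private median remains in $\calV$ whenever a majority of samples do) is a useful sanity check that the paper leaves implicit.
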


\subsection{Application: Point Queries on Turnstile Streams}
In the problem of point queries on turnstile streams, there exists a stream of $m$ updates. 
Each update specifies a coordinate $i\in[n]$ of an underlying frequency vector $f\in\mathbb{R}^n$ and changes $f_i$ by some amount between $\Delta_i\in[-\Delta,\Delta]$, where $\Delta=\poly(n)$. 
Given any \emph{constant} accuracy parameter $\eps>0$ any time $t\in[m]$, we define $f^{(t)}$ to be the frequency vector implicitly defined after the first $t$ updates. 
Then the point query problem is to output $f^{(t)}_i$ for various choices of $t\in[m]$ and $i\in[n]$ within an additive error of  $\eps\|f^{(t)}\|_1$.  

\begin{theorem}[\cite{AlmanY20}]
\thmlab{thm:turnstile:point}
There exists an algorithm that uses space $\O{\log^2 n}$ bits, worst-case update time $\O{\log^{0.582}n}$, and query time $\O{\log^{1.582}n}$, that supports point queries with $\eps=0.1$ with high probability. 
\end{theorem}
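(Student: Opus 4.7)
The statement is imported verbatim from \cite{AlmanY20}, so my proposal is really to sketch how I would reconstruct the argument of that paper, since the paper under review uses it only as a black box. My starting point would be the classical Count-Sketch with constant $\eps=0.1$: take $O(\log n)$ independent hash rows, each of constant width, maintain a signed counter per cell, and answer a point query by taking the median over the $O(\log n)$ relevant cells. Because $\Delta = \poly(n)$, every counter fits in $O(\log n)$ bits, which already delivers the $O(\log^2 n)$ bit-space bound and a correctness probability of $1-1/\poly(n)$ by a standard Chernoff/median argument. The non-trivial work lies entirely in driving the per-update cost below the ``one hash per row'' bound of $O(\log n)$ and preserving a query time close to $O(\log n)$.

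\paragraph{Key steps.}
First, I would fix the Count-Sketch hash matrix $H \in \{-1,0,+1\}^{R \times n}$ with $R = \Theta(\log n)$ and view any block of $B$ consecutive turnstile updates as a sparse vector $v \in \R^n$ with at most $B$ nonzeros. The induced change to the sketch is the single matrix-vector product $H v$. Second, I would use \emph{rectangular} fast matrix multiplication to amortize this across a window of $B$ updates at once: processing a batch costs roughly $B^{\omega(1,\alpha,1)/\alpha}$ time for the aspect ratio $\alpha = \log_B n$, which after optimizing $B$ against the currently best known bounds on $\omega(1,\alpha,1)$ yields an amortized per-update cost of $O(\log^{c} n)$ with $c<1$; the constant $0.582$ is the value of $c$ that falls out for the specific rectangular-MM exponents used in \cite{AlmanY20}. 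Third, for queries I would maintain a small ``buffer'' Count-Sketch that absorbs the $\le B$ updates of the current partial batch naively (cost $O(B\log n)$ overall, $O(\log n)$ amortized), so that any query combines the main sketch with the buffer in $O(R) = O(\log n)$ additional time, giving the stated $O(\log^{1.582}n)$ query bound once the fast-MM cost of folding the buffer into the main sketch is charged in.

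\paragraph{Main obstacle.}
The conceptual content is modest, but two technical points need care. The first is \emph{deamortization}: the matrix-multiplication trick naturally gives amortized update time, whereas the theorem asks for worst-case; the standard cure is a two-level (or $O(\log\log n)$-level) scheme that runs overlapping batches in parallel so that at each time step only a small slice of a batch is processed, and I would expect this to be the fiddliest part of the write-up. The second is preserving the high-probability point-query guarantee under an arbitrarily long stream: because the buffer and main sketches are combined at query time, one must verify that the median estimator still enjoys its usual tail bound when it is evaluated on the sum of two independent Count-Sketch approximations, which follows from a standard hybrid argument. Since our downstream use of \thmref{thm:turnstile:point} is black-box inside the framework of \thmref{thm:main:framework}, I would simply cite \cite{AlmanY20} and not reproduce these details.
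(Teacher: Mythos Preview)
The paper does not prove this theorem at all: it is quoted verbatim from \cite{AlmanY20} and used purely as a black box inside \thmref{thm:main:framework}. Your closing recommendation --- to simply cite \cite{AlmanY20} rather than reproduce the argument --- is therefore exactly what the paper does, and for the purposes of this manuscript nothing more is required.

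That said, since you also offer a reconstruction of the \cite{AlmanY20} argument, it is worth flagging that the specific mechanism you describe for the matrix-multiplication step does not work as written. A single batch of $B$ updates gives you a matrix-\emph{vector} product $Hv$, with $H\in\{-1,0,+1\}^{\Theta(\log n)\times n}$ and $v$ a $B$-sparse vector; restricting to the $B$ relevant columns this is a $\Theta(\log n)\times B$ matrix times a length-$B$ vector, which costs $\Theta(B\log n)$ regardless of fast matrix multiplication. There is no second ``large'' dimension to multiply against, so your formula $B^{\omega(1,\alpha,1)/\alpha}$ with $\alpha=\log_B n$ does not correspond to any matrix-matrix product arising in this setup, and no sub-$\log n$ per-update cost falls out. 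The actual \cite{AlmanY20} construction has to impose additional algebraic structure on the $\Theta(\log n)$ hash functions so that evaluating all of them at a batch of points \emph{becomes} a genuine rectangular matrix-matrix product of the right shape; engineering that structure (and then deamortizing) is precisely the nontrivial content of their paper. Your high-level ingredient list --- Count-Sketch for space and correctness, batching plus fast MM for update time, buffering and deamortization for worst-case bounds --- is in the right spirit, but the step that turns batching into an MM instance is missing. None of this affects the present paper, which treats the result as a citation.
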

An important quality of \thmref{thm:turnstile:point} is that significantly improves the update time over previous data structures, e.g., \cite{CharikarCF04}, at a cost in query time. 
By applying \thmref{thm:main:framework}, we can avoid a blow-up in query time while still utilizing the update time improvements:
\begin{theorem}
There exists an algorithm that uses space $\O{\sqrt{Q}\log^3(nQ)}$ bits, has worst-case update time $\O{\sqrt{Q}\log^{1.582}(nQ)}$ and query time $\tO{\log^3(nQ)}$, and supports $Q$ adaptive point queries with $\eps=0.1$ and with high probability. 
\end{theorem}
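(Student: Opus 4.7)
The plan is to black-box \thmref{thm:main:framework} with the base streaming data structure of \thmref{thm:turnstile:point}, carrying out the additional bookkeeping needed to handle stream updates (which are absent from the framework's static formulation). Concretely, I would instantiate $r = \O{\sqrt{Q}\log(nQ)}$ independent copies $\calD_1,\ldots,\calD_r$ of the Alman--Yeh sketch of \thmref{thm:turnstile:point}, each tuned (by taking the universe size as $nQ$ rather than $n$ and adjusting the failure probability) so that each copy uses $\O{\log^2(nQ)}$ bits, has worst-case update time $\O{\log^{0.582}(nQ)}$, and answers a single point query correctly with probability at least $3/4$ (in fact with high probability in $nQ$). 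Each stream update is then forwarded to all $r$ copies, and each query is answered via the framework's subroutine: sample $k=\O{\log(nQ)}$ of the copies with replacement, evaluate each on the query, and return the private median.

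The three resource bounds then follow directly. For space, summing across copies gives $r\cdot\O{\log^2(nQ)} = \O{\sqrt{Q}\log^3(nQ)}$ bits. For updates, each update hits all $r$ copies at cost $\O{\log^{0.582}(nQ)}$ per copy, yielding worst-case update time $\O{\sqrt{Q}\log^{1.582}(nQ)}$. For queries, only the $k$ sampled copies are probed and then aggregated by \privmed, so the query time is $\O{k \cdot \log^{1.582}(nQ)} + \tO{\log^3(nQ)} = \tO{\log^3(nQ)}$ as required.

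For correctness under $Q$ adaptive queries, the plan is to invoke the framework's transcript-privacy argument verbatim. Let $R=(R^{(0)},R^{(1)},\ldots,R^{(r)})$ be the internal randomness of \privmed\ and of the $r$ sketches, and let $T(R) = (T_1,\ldots,T_Q)$ be the query-response transcript produced by the adversary's interaction with the data structure. By \lemref{lem:dp:single:iter} and \lemref{lem:dp:all:iter} the transcript is $(\O{1},1/\poly(nQ))$-differentially private with respect to $R$, from which the generalization bound \thmref{thm:generalization} and a union bound over all $Q$ rounds give that every response is a valid $0.1$-approximate point query estimate with high probability, exactly as in the proof of \thmref{thm:main:framework}.

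The one step that requires a small argument beyond a pure black-box application is to justify that interleaved stream updates do not disturb the privacy analysis. The key observation (which I would state explicitly) is that updates are purely inputs to the sketches: they produce no output seen by the adversary and do not couple the randomness across copies. Hence the transcript observed by the adversary still consists only of the $Q$ query-response pairs, so the differential privacy calculation of \lemref{lem:dp:all:iter} is unchanged, and the turnstile nature of the updates (which could cause the frequency vector to increase and decrease arbitrarily) is irrelevant to the robustness proof because \thmref{thm:turnstile:point} already handles general turnstile updates per copy. This is the mildest of obstacles and is the only point where the framework needs to be re-interpreted rather than invoked verbatim.
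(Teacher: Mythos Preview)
Your proposal is correct and follows essentially the same approach as the paper, which simply invokes \thmref{thm:main:framework} on the base data structure of \thmref{thm:turnstile:point} without further argument. Your explicit bookkeeping for the streaming setting (forwarding updates to all $r$ copies, and noting that updates produce no observable output and hence do not enter the transcript-privacy analysis) is more careful than the paper itself, which leaves these points implicit.
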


\subsection{Adaptive Distance Estimation}
For completeness, we now show correctness of our algorithm across all $Q$ adaptive queries, though we remark that the proof can simply be black-boxed into \thmref{thm:main:framework}. 
\thmaderobust*
\begin{proof}
Fix query $(\by_q,i_q)$ with $q\in[Q]$ and $i_q\in[n]$. 
Let $S$ be a set of $k$ indices sampled (with replacement) from $[r]$. 
By \thmref{thm:jl} or \thmref{thm:fast:jl}, then we have for each $j\in[k]$,
\[\PPr{(1-\eps)\|\bx_{i_q}-\by_q\|_2\le\|\Pi_{S_j}(\bx_{i_q}-\by_q)\|_2\le(1+\eps)\|\bx_{i_q}-\by_q\|_2}\ge\frac{3}{4}.\]
Let $I_j$ be an indicator variable so that $I_j=1$ if $(1-\eps)\|\bx_{i_q}-\by_q\|_2\le\|\Pi_{S_j}(\bx_{i_q}-\by_q)\|_2\le(1+\eps)\|\bx_{i_q}-\by_q\|_2$ and $I_j=0$ otherwise, so that we have $\PPr{I_j=1}\ge\frac{3}{4}$, or equivalently, $\Ex{I_j}\ge\frac{3}{4}$. 
Let $I=\frac{1}{k}\sum_{j\in[k]}I_j$ so that by linearity of expectation, $\Ex{I}=\frac{1}{k}\sum_{j\in[k]}\Ex{I_j}\ge\frac{3}{4}$. 

To address adaptive queries, we first note that $\privmed$ is $(1,0)$-differentially private on the outputs of the $r$ Fast JL transforms. 
Since we sample $k=\O{\log(nQ)}$ groups from the $r=\O{\sqrt{Q}\log^2(nQ)}$ groups with replacement, then by amplification via sampling, i.e., \thmref{thm:dp:sampling}, $\privmed$ is $\left(\O{\frac{1}{\sqrt{Q}\log(nQ)}},0\right)$-differentially private. 
Thus, by the advanced composition of differential privacy, i.e., \thmref{thm:adaptive:queries}, the mechanism permits $Q$ adaptive queries and is $\left(\O{1},\frac{1}{\poly(nQ)}\right)$-differentially private. 
By the generalization properties of differential privacy, i.e., \thmref{thm:generalization}, we have
\[\PPr{\left|\frac{1}{k}\sum_{j\in[k]} I_j-\Ex{I}\right|\ge\frac{1}{10}}<\frac{1}{\poly(Q,n)},\]
for sufficiently small $\O{1}$. 
Thus we have
\[\PPr{\frac{1}{k}\sum_{i\in[k]} I_i>0.6}>1-\frac{1}{\poly(Q,n)},\]
which implies that $(1-\eps)\|\bx_{i_q}-\by_q\|_2\le d_i\le(1+\eps)\|\bx_{i_q}-\by_q\|_2$. 
Therefore, by a union bound across $Q$ adaptive queries $(\by_q,\bx_{i_q})$ with $q\in[Q]$, we have that
$(1-\eps)\|\bx_{i_q}-\by_q\|_2\le d_i\le(1+\eps)\|\bx_{i_q}-\by_q\|_2$ for all $q\in[Q]$ with high probability. 
\end{proof}

We similarly offer the following structural properties for \algref{ade_srht}. 
\begin{lemma}
\lemlab{lem:ade_single_iter_dp}
For all $i \in [n], j \in [Q]$, $T_j$ is $\left(o\lprp{\frac{1}{\sqrt{Q}\log(nQ)}}, 0\right)$-differentially private with $R_i$. 
\end{lemma}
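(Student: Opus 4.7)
\begin{proofhack}
The plan is to reduce the claim to a single application of amplification by subsampling (\thmref{thm:dp:sampling}) on top of the $(1,0)$-differentially private median (\thmref{thm:dp:median}). The first observation is that $T_j = (q_j, \{d_{i',j}\}_{i' \in [n]})$, and among these $n+1$ components, only $d_{i,j}$ is a function of $R_i$. Indeed, for $i' \neq i$, the estimate $d_{i',j}$ is computed from $R_{i'}$ together with the fixed, publicly known SRHT $h$, so it is independent of $R_i$ as a random variable. The query $q_j$ is chosen by the analyst based on the prior transcript $T^{(j)} = \{T_p\}_{p \in [j-1]}$ and on $h$, so on a per-query basis (holding all previous transcripts fixed) $q_j$ is a deterministic function independent of $R_i$. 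Hence, conditioning on everything except the production of $d_{i,j}$, it suffices to prove that $d_{i,j}$ is $\lprp{o(1/(\sqrt{Q}\log(nQ))),0}$-differentially private with respect to $R_i$.

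Next I would unpack how $d_{i,j}$ uses $R_i = \{R_{i,p}\}_{p \in [r]}$. Recall that \algref{ade_srht} first samples $l = C\log(nd)$ indices $\{t_{i,j,p}\}_{p\in[l]}$ uniformly with replacement from $[r]$, computes $d_{i,j,p} = \srhtadealg((v_j - y_i)_{S_{i,t_{i,j,p}}})$, and finally outputs $d_{i,j} = \privmed(\{d_{i,j,p}\}_{p \in [l]})$. The randomness $R_{i,p}$ (for a single $p \in [r]$) is the only source of randomness that determines $S_{i,p}$, and distinct $R_{i,p}$'s are independent. Therefore, if we view $R_i$ as a database of $r$ independent records, then $d_{i,j}$ is obtained by (i) sampling $l$ records from this database of size $r$ uniformly with replacement, and (ii) applying a $(1,0)$-differentially private mechanism (the private median) to the derived quantities.

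I would then apply \thmref{thm:dp:sampling} directly to this two-step procedure. Since the inner mechanism is $(1,0)$-DP with respect to changing one of its $l$ inputs, and we subsample $l$ out of $r$ indices, amplification yields that $d_{i,j}$ is $(\eps', 0)$-DP with respect to a single-record change in $R_i$, with
\[\eps' = \O{\frac{l}{r}} = \O{\frac{\log(nd)}{\sqrt{Q}\,\log^3(nd)}} = \O{\frac{1}{\sqrt{Q}\,\log^2(nd)}} = o\!\lprp{\frac{1}{\sqrt{Q}\,\log(nQ)}},\]
using $\log(nQ) = \O{\log(nd)}$ since $Q \leq d$ by hypothesis of \thmref{ade_srht}. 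Combined with the first paragraph's reduction, this yields the stated bound on $T_j$.

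The only delicate point I expect is correctly formalizing ``differential privacy with respect to $R_i$'' in the presence of an adaptive adversary: one must check that neighbors in $R_i$ induce the same distribution over $q_j$ when we hold the rest of $R = \{R_{i'}\}_{i' \neq i}$ and the adversary's coins fixed, so that the post-processing/composition argument goes through cleanly. This, however, is immediate inductively: by the same reduction applied to earlier rounds, $T^{(j)}$ is already DP in $R_i$, and since $q_j$ is a (possibly randomized) post-processing of $T^{(j)}$, it inherits the same privacy guarantee, so the only place where $R_i$ enters the construction of $T_j$ in a non-post-processing way is through the median step above, which is exactly where the amplification bound is applied.
\end{proofhack}
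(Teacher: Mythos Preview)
Your proposal is correct and follows essentially the same route as the paper. The paper's own proof is a one-liner that points back to \lemref{lem:dp:single:iter} together with the observation that, for fixed $i$, the transaction $T_j$ invokes only a single differentially private median on the database $R_i$; your first paragraph makes this observation explicit (only $d_{i,j}$ touches $R_i$, the other $d_{i',j}$ depend on $R_{i'}$), and your amplification calculation $\eps' = \O{l/r}$ is exactly the content of \lemref{lem:dp:single:iter} with the parameters $l = C\log(nd)$ and $r = C\sqrt{Q}\log^3(nd)$ from \algref{ade_srht}.
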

\begin{proof}
    The proof is identical to that of \lemref{lem:dp:single:iter} with the observation that each transaction $T_j$ only results in a single query to a differentially private mechanism operating on $R_i$.
\end{proof}

\begin{lemma}
\lemlab{lem:ade_all_iter_dp}
For all $i \in [n]$, $T$ is $\left(o(1), \frac{1}{\poly(nQ)}\right)$-differentially private with respect to $R_i$. 
\end{lemma}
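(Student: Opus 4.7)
The plan is to deduce this statement directly from Lemma \lemref{lem:ade_single_iter_dp} via advanced composition of differential privacy (\thmref{thm:adaptive:queries}), in exact analogy with the proof of \lemref{lem:dp:all:iter}. The crucial conceptual point to emphasize is the \emph{per-coordinate} accounting: although each of the $Q$ input queries $q_j$ triggers $n$ calls to $\privmed$, only a single one of these calls depends on the randomness block $R_i$, namely the one that estimates $\|q_j - x_i\|_2$ using the sets $S_{i,\cdot}$ and the sampling indices $t_{i,j,\cdot}$. Hence with respect to $R_i$ the transcript $T$ is an adaptive composition of only $Q$ differentially private mechanisms, not $nQ$. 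This is exactly the amortization that yields the improved $\wt{O}(n\sqrt{Q})$ space bound over the na\"ive $\wt{O}(n^{3/2}\sqrt{Q})$.

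Concretely, I would fix $i\in[n]$ and view $T=(T_1,\ldots,T_Q)$ as an adaptive composition where, in round $j$, the adversary (having seen $T_1,\ldots,T_{j-1}$) issues $q_j$ and then \algref{ade_srht} releases $(d_{1,j},\ldots,d_{n,j})$. Since all the $d_{i',j}$ for $i'\neq i$ are functions of randomness disjoint from $R_i$, the $R_i$-dependence of the $j$-th transaction is fully captured by the privacy of the single $\privmed$ call producing $d_{i,j}$. By \lemref{lem:ade_single_iter_dp}, this call is $\left(o\!\left(\frac{1}{\sqrt{Q}\log(nQ)}\right),0\right)$-DP with respect to $R_i$. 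Setting $\eps_0 = o\!\left(\frac{1}{\sqrt{Q}\log(nQ)}\right)$ and invoking \thmref{thm:adaptive:queries} with $k=Q$ and $\delta'=1/\poly(nQ)$ gives the composed parameter
\[
\eps' \;=\; \sqrt{2Q\ln(1/\delta')}\cdot \eps_0 + 2Q\eps_0^2 \;=\; o(1),
\]
and an additive $\delta$ term of $Q\cdot 0 + \delta' = 1/\poly(nQ)$, yielding the claimed $(o(1),1/\poly(nQ))$-DP of $T$ with respect to $R_i$.

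The only step that requires any care is verifying that the adaptive composition framework actually applies, i.e., that conditioned on $T_1,\ldots,T_{j-1}$, the distribution of $T_j$ as a function of $R_i$ is indeed $\eps_0$-DP in the sense of \thmref{thm:adaptive:queries}. This follows because the adversary's choice of $q_j$ is measurable with respect to the prior transcript, and for any fixed $q_j$ all components of $T_j$ other than $d_{i,j}$ are independent of $R_i$, so the post-processing and parallel-composition properties of DP reduce the analysis to the single-call guarantee already established. No further per-query union bound or generalization argument is needed at this stage; the correctness of the final answers (as opposed to the privacy of the transcript) is handled separately in the proof of \thmref{ade_srht} via \thmref{thm:generalization} and \thmref{thm:dp:median}.
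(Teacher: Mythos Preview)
Your proposal is correct and follows exactly the same route as the paper: invoke \lemref{lem:ade_single_iter_dp} for the per-round guarantee with respect to $R_i$ and then apply advanced composition (\thmref{thm:adaptive:queries}) over the $Q$ rounds, just as in the proof of \lemref{lem:dp:all:iter}. Your additional discussion of the per-coordinate accounting and why adaptive composition applies is a helpful elaboration, but the paper's own proof is a one-line reference to precisely these two ingredients.
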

\begin{proof}
    The proof is identical to \lemref{lem:dp:all:iter} and follows from \thmref{thm:adaptive:queries} and \lemref{lem:ade_single_iter_dp}. 
\end{proof}

\subsection{Adaptive Kernel Density Estimation}
For completeness, we now show adversarial robustness of our algorithm across $Q$ adaptive queries. 
Again we remark that the proof can simply be black-boxed into \thmref{thm:main:framework}, though we include the specific kernel density details in the following proof as a warm-up for the following section.  
\lemkderobust*
\begin{proof}
Fix query $\by_q\in\mathbb{R}^d$  with $q\in[Q]$. 
Let $S$ be a set of $k$ indices sampled (with replacement) from $[r]$. 
Then by \thmref{thm:kde:better}, we have that for each $j\in[k]$,
\[\PPr{\left|D_{S_j}(\by)-\kde(X,\by)\right|\le\eps\cdot\kde(X,\by)}\ge\frac{3}{4}.\]
Let $I_j$ be an indicator variable so that $I_j=1$ if $\left|D_{S_j}(\by)-\kde(X,\by)\right|\le\eps\cdot\kde(X,\by)$ and $I_j=0$ otherwise, so that we have $\PPr{I_j=1}\ge\frac{3}{4}$ or equivalently, $\Ex{I_j}\ge\frac{3}{4}$. 
Let $I=\frac{1}{k}\sum_{j\in[k]} I_j$ so that $\Ex{I}=\frac{1}{k}\sum_{j\in[k]} \Ex{I_j}\ge\frac{3}{4}$. 
%

To handle adaptive queries, we first note that $\privmed$ is $(1,0)$-differentially private on the outputs of the $r$ kernel density estimation data structures. 
We sample $k=\O{\log Q}$ indices from the $r=\O{\sqrt{Q}\log^2 Q}$ data structures with replacement. 
Thus by amplification via sampling, i.e., \thmref{thm:dp:sampling}, $\privmed$ is $\left(\O{\frac{1}{\sqrt{Q}\log Q}},0\right)$-differentially private. 
By the advanced composition of differential privacy, i.e., \thmref{thm:adaptive:queries}, our algorithm can answer $Q$ adaptive queries with $\left(\O{1},\frac{1}{\poly(Q)}\right)$-differentially privacy. 
By the generalization properties of differential privacy, i.e., \thmref{thm:generalization}, we have
\[\PPr{\left|\frac{1}{k}\sum_{j\in[k]} I_j-\Ex{I}\right|\ge\frac{1}{10}}<0.01,\]
for sufficiently small constant $\O{1}$ in the private median algorithm $\privmed$. 
Therefore,
\[\PPr{\frac{1}{k}\sum_{i\in[k]} I_i>0.6}>0.99,\]
so that 
$\left|d_q-\kde(X,\by_q)\right|\le\eps\cdot\kde(X,\by)$ across $Q$ queries $\by_q$ with $q\in[Q]$.
\end{proof}

\end{document}